\DeclareMathAlphabet{\mathbbold}{U}{bbold}{m}{n}
\pgfplotsset{compat=1.13}
\newcommand{\eat}[1]{}
\renewcommand{\backref}[1]{}
\renewcommand{\backrefalt}[4]{%
\ifcase #1 %
\or
[p.\ #2]%
\else
[pp.\ #2]%
\fi}
\newcommand{\para}{%
 \@startsection{paragraph}{4}%
 {\z@}{2ex \@plus 3.3ex \@minus .2ex}{-1em}%
 {\normalfont\normalsize\bfseries}%
}
\providecommand{\U}[1]{\protect\rule{.1in}{.1in}}
\newtheorem{theorem}{Theorem}
\newtheorem{corollary}[theorem]{Corollary}
\newtheorem{definition}[theorem]{Definition}
\newtheorem{lemma}[theorem]{Lemma}
\newtheorem{problem}[theorem]{Problem}
\newtheorem{proposition}[theorem]{Proposition}
\let\oldproofname=\proofname
\renewcommand{\proofname}{\rm\bf{\oldproofname}}
\newcommand{\id}{\mathbbold{1}}
\newcommand{\R}{\mathcal{R}}
\newcommand{\<}{\langle}
\renewcommand{\>}{\rangle}
\DeclareMathOperator{\E}{\mathbb{E}}
\newcommand{\AND}{\mathsf{AND}}
\newcommand{\eps}{\varepsilon}
\newcommand{\Reals}{\mathbb{R}}
\renewcommand{\deg}{\mathrm{deg}}
\newcommand{\mathify}[1]{\ifmmode{#1}\else\mbox{$#1$}\fi}
\newcommand{\cl}[1]{\mathsf{#1}}
\newcommand{\ApxCount}{\mathsf{ApxCount}_{N,w}}
\newcommand{\AndApxCount}{\mathsf{AND}_2 \circ \mathsf{ApxCount}_{N,w}}
\begin{document}

\title{\bfseries Quantum Lower Bounds for Approximate\\ 
Counting via Laurent Polynomials%
\footnote{This paper subsumes preprints 
\href{https://arxiv.org/abs/1808.02420}{arXiv:1808.02420} and \href{https://arxiv.org/abs/1902.02398}{arXiv:1902.02398} 
by the first and third authors, respectively.}%
}

\author{
Scott Aaronson\thanks{University of Texas at Austin. \ Email:
\texttt{aaronson@cs.utexas.edu}. \ Supported by a Vannevar Bush Fellowship from the US
Department of Defense, a Simons Investigator Award, and the Simons
\textquotedblleft It from Qubit\textquotedblright\ collaboration.} 
\qquad
Robin Kothari\thanks{Microsoft Quantum and Microsoft Research. Email: \texttt{robin.kothari@microsoft.com}.}
\qquad
William Kretschmer\thanks{University of Texas at Austin. \ Email:
\texttt{kretsch@cs.utexas.edu}. \ Supported by a Vannevar Bush Fellowship from the US
Department of Defense and a Simons Investigator Award..}
\qquad
Justin Thaler\thanks{Georgetown University. Email: \texttt{justin.thaler@georgetown.edu}. Supported by NSF CAREER award CCF-1845125.}
}

\date{}
\maketitle
%\thispagestyle{empty} 

%==============================================================================
\begin{abstract}
 
We study quantum algorithms that are given
access to trusted and untrusted quantum witnesses.
We establish strong
limitations of such algorithms,
via new techniques based on \emph{Laurent polynomials} (i.e., polynomials with positive and negative integer exponents). 
Specifically, we resolve the complexity of \emph{approximate counting}, the problem of multiplicatively estimating the size of a nonempty set $S \subseteq [N]$, in two natural generalizations of quantum query complexity.

Our first result holds in the standard Quantum Merlin--Arthur 
($\mathsf{QMA}$) setting, in which a quantum algorithm receives
an untrusted quantum witness. 
We show that, if the algorithm makes $T$\ quantum queries
to $S$, and also receives an (untrusted) $m$-qubit quantum witness, 
then either $m = \Omega(|S|)$ or 
$T=\Omega \bigl(\sqrt{N/\left\vert S\right\vert } \bigr)$.
This is optimal, matching the straightforward protocols where 
the witness is either empty, or specifies all the elements of $S$.
As a corollary, this resolves the open problem of giving an oracle
separation between $\mathsf{SBP}$, the complexity class that captures
approximate counting, and $\mathsf{QMA}$.

In our second result, we ask what if, in addition to a membership oracle for $S$, 
 a quantum algorithm is also given \textquotedblleft 
QSamples\textquotedblright ---i.e., copies of the state $\left\vert
S\right\rangle = \frac{1}{\sqrt{\left\vert S\right\vert }} \sum_{i\in S}|i\>$---
or even access to a unitary transformation that enables QSampling? \ We show that, even then, the
algorithm needs either $\Theta \bigl(\sqrt{N/\left\vert S\right\vert }\bigr)$%
\ queries or else $\Theta \bigl(\min \bigl\{\left\vert S\right\vert ^{1/3},%
\sqrt{N/\left\vert S\right\vert }\bigr\}\bigr)$\ QSamples or accesses to the unitary. \

Our lower bounds in both settings make essential use of Laurent polynomials,
but in different ways.

\end{abstract}
%==============================================================================

\clearpage
\tableofcontents
\clearpage
%\setcounter{page}{1}

%=======================================================================
\section{Introduction}
\label{INTRO}\label{sec:intro}

%We know that quantum algorithms yield exponential speedups over classical algorithms for some problems, while they do not yield significant speedups for other problems.
Understanding the power of quantum algorithms has been a central research goal over the last few decades.
One success story in this regard has been the discovery of powerful methods that establish limitations on quantum algorithms in the standard setting of \emph{query complexity}. This setting roughly asks, for a
specified function $f$, how many
bits of the input must be examined by any quantum algorithm that computes $f$ (see \cite{bw} for a survey of query complexity).

A fundamental topic of study in complexity theory is algorithms that are ``augmented'' with additional information, such as an untrusted witness provided by a powerful prover.
For example, the classical complexity class $\mathsf{NP}$ is defined this way.
In the quantum setting, if we go beyond standard query algorithms, and allow algorithms to receive a quantum state, the model becomes much richer, and we have very few techniques to establish lower bounds for these algorithms.
In this paper, we develop such techniques.
Our methods crucially use \emph{Laurent polynomials}, which are polynomials with positive and negative integer exponents.

We demonstrate the power of these lower bound techniques by proving optimal lower bounds for the \emph{approximate counting} problem, which captures the following task.
Given a nonempty finite set $S\subseteq [N] :=\left\{
1,\ldots ,N\right\} $, estimate its cardinality, $\left\vert S\right\vert $, 
to within some constant (say, 2) multiplicative accuracy.
Approximate counting is a fundamental task with a rich
history in computer science. This includes the works of Stockmeyer~\cite%
{Sto85}, which showed that approximate counting is in the polynomial
hierarchy, and Sinclair and Jerrum~\cite{sinclairjerrum}, which showed the
equivalence between approximate counting and approximate sampling that
enabled the development of a whole new class of algorithms based on Markov chains.
Additionally, approximate counting precisely highlights the limitations of current lower bound techniques for the complexity class $\mathsf{QMA}$ 
(as we explain in \Cref{sec:QMAintro}).

Formally, we study the following decision version of the problem in this paper:

\begin{problem}[Approximate Counting]
In the $\mathsf{ApxCount}_{N,w}$ problem, our goal is to decide whether a
nonempty set $S\subseteq \lbrack N]$ satisfies $\left\vert S\right\vert \geq
2w$ (YES) or $\left\vert S\right\vert \leq w$ (NO), promised that one of
these is the case.
\end{problem}

In the query model, the algorithm is given a membership oracle for $S$: one that, for any $i\in \left[ N\right] $, returns whether $%
i\in S$. \ How many queries must we make, as a function of both $N$\ and $%
\left\vert S\right\vert $, to solve approximate counting with high
probability?

For classical randomized algorithms, it is easy to see that $\Theta (N/|S|)$\
membership queries are necessary and sufficient. 
For quantum algorithms, which can query the membership oracle on superpositions
of inputs, Brassard et al.~\cite{bht:count,BHMT02} gave an algorithm
that makes only $O\bigl(\sqrt{N/|S|}\bigr)$\ queries.
It follows from the optimality of Grover's algorithm (i.e., the BBBV Theorem \cite{bbbv}) that this cannot be improved. Hence, the classical and quantum complexity of approximate counting with membership queries alone is completely understood.
In this paper, we study the complexity of approximate counting in models with untrusted and trusted quantum states.

%=======================================================================
\subsection{First result: QMA complexity of approximate counting}
\label{sec:QMAintro}

Our first result, presented in \Cref{sec:SBPQMA}, considers the standard Quantum Merlin--Arthur ($\mathsf{QMA}$) setting, in which the quantum algorithm receives an untrusted quantum state (called the witness). 
This model is the quantum analogue of the classical complexity class $\mathsf{NP}$, and is of great
interest in quantum complexity theory. It captures natural problems about ground states of physical systems, properties of quantum circuits and channels, noncommutative constraint satisfaction problems, consistency of representations of quantum systems, and more~\cite{Boo13}.

In a $\mathsf{QMA}$ protocol, a skeptical verifier
(Arthur) receives a quantum witness state $|\psi \>$ from an all-powerful but
untrustworthy prover (Merlin), in support of 
the claim that $f(x)=1$. 
\ Arthur then needs to verify $|\psi \>$, via some
algorithm that satisfies the twin properties of \textit{completeness} and
\textit{soundness}. \ That is, if $f(x)=1$, then there must exist some $|\psi \>$\ that causes Arthur
to accept with high probability, while if $f(x)=0$, then every $|\psi \>$\ must cause Arthur to reject with high probability. \ We call such a protocol a $\mathsf{QMA}$ (Quantum Merlin--Arthur) protocol for computing $f$. 

In the query complexity setting, there are two resources to consider: the length of the quantum witness, $m$, and
the number of queries, $T$, that Arthur makes to the membership oracle. \ 
A $\mathsf{QMA}$ protocol for $f$ is efficient if both $m$ and $T$
are $\mathrm{polylog}(N)$. 
%Note that $\mathsf{QMA}$ protocols are not symmetric between yes and no inputs to $f$: the $\mathsf{QMA}$ complexity of $f$ can be very different from that of $\neg f$, the negation of $f$.

\para{The known lower bound technique for $\mathsf{QMA}$.} 
Prior to our work, all known $\mathsf{QMA}$ lower bounds used the same proof technique.\footnote{% 
There is one special case in which it is trivial to lower-bound $\mathsf{QMA}$ complexity. 
%For example, the standard BBBV Theorem \cite{bbbv}\ immediately implies the existence of an oracle relative to which $\mathsf{coNP}\not\subset \mathsf{QMA}$ (see also \cite{razshpilka}), and directly related to that, it is easy to show that the \textit{complement} of $\mathsf{ApxCount}_{N,w}$\ is not in $\mathsf{QMA}$\ (i.e., that there are no short $\mathsf{QMA}$\ witnesses proving that a set $S$ is \textit{small}).
Consider the $\AND_N$ function on $N$ bits that outputs 1 if and only if all $N$ bits equal $1$. For this function, since Merlin wants to convince Arthur that $f(x)=1$, intuitively there is nothing interesting that Merlin can say to Arthur other than ``$x$ is all ones'' since that is the only input with $f(x)=1$. Formally, Arthur can simply create the witness state that an honest Merlin would have sent on the all ones input, and hence Arthur does not need Merlin \cite{razshpilka}.
For such functions, $\mathsf{QMA}$ complexity is the same as standard quantum query complexity.
} 
The technique establishes (and exploits) the complexity class containment 
$\mathsf{QMA}\subseteq \mathsf{SBQP}$, where $\mathsf{SBQP}$\ is
a complexity class that models quantum algorithms with tiny acceptance and
rejection probabilities. 
Specifically, we say that a function $f$ has $\mathsf{SBQP}$ query complexity at most $k$
if there exists a $k$-query quantum algorithm that
\begin{itemize}
\item outputs $1$ with probability $\geq \alpha $\ when $f(x)=1$, and
\item outputs $1$ with probability $\leq \alpha /2$\ when $f(x)=0$,
\end{itemize}
\noindent for some $\alpha $ that does not depend on the input (but may depend on the input size).
%the algorithm chooses in advance. \ 
Note
that when $\alpha =2/3$, we recover standard quantum query complexity. \ But
$\alpha $ could be also be exponentially small, which makes $\mathsf{SBQP}$
algorithms very powerful.

Nevertheless, one can establish significant limitations on $\mathsf{SBQP}$
algorithms, by using a variation of the polynomial method of Beals et al.\
\cite{bbcmw}. \ If a function $f$ can be evaluated by an $\mathsf{SBQP}$
algorithm with $k$ queries, then there exists a real polynomial $p$ of
degree $2k$ such that $p(x)\in \lbrack 0,1]$ whenever $f(x)=0$ and $p(x)\geq
2$ whenever $f(x)=1$. \ The minimum degree of such a polynomial is also
called \emph{one-sided approximate degree}~\cite{BT15}.

The relationship between $\mathsf{SBQP}$ and $\mathsf{QMA}$ protocols is
very simple: if $f$ has a $\mathsf{QMA}$ protocol that receives an $m$-qubit
witness\ and makes $T$ queries, then it also has an $\mathsf{SBQP}$
algorithm that makes $O(mT)$ queries. \ This was essentially observed by Marriott and 
Watrous~\cite[Remark 3.9]{marriott} and used by Aaronson \cite%
{aaronson_szk} to show an oracle relative to which $\mathsf{SZK} \not\subset \mathsf{QMA}$. %; 
%we also state it for completeness as \Cref{lem:guessing}.

\para{Beyond the known lower bound technique for $\mathsf{QMA}$.} 
Our goal is to find a new method of lower bounding $\mathsf{QMA}$, that does not go through $\mathsf{SBQP}$  complexity. 
The natural way to formalize this quest is to find a problem that has an efficient 
$\mathsf{SBQP}$ algorithm, and show that it
does not have an efficient $\mathsf{QMA}$ protocol.
A natural candidate for this is the $\mathsf{ApxCount}_{N,w}$ problem.
We know that $\mathsf{ApxCount}_{N,w}$ \textit{does} have a very simple $\mathsf{SBQP}$ algorithm of cost 1:
 the algorithm picks an $i\in \left[ N\right] $\ uniformly at random, and accepts if and only if $i\in S$. \ 
Clearly the algorithm accepts with probability greater than $2w/N$ on yes inputs and with probability at most $w/N$ on no inputs.

Our first result establishes that $\mathsf{ApxCount}_{N,w}$ 
does \emph{not} have an efficient $\mathsf{QMA}$ protocol.

\begin{restatable}{theorem}{qmabound}
\label{thm:qmabound}
Consider a $\mathsf{QMA}$ protocol that solves $\ApxCount$. If the protocol receives a quantum witness of length $m$, and makes $T$ queries to the membership oracle for $S$, then
either $m = \Omega(w)$ or 
$ T = \Omega\bigl(\sqrt{N/w}\bigr)$.
\end{restatable}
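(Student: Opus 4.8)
The plan is to reduce, by symmetrization, to a one-dimensional problem, and then to analyze that problem with Laurent polynomials rather than ordinary polynomials. First, by the usual averaging-over-permutations argument I would assume the verifier and the honest witnesses are permutation-invariant, so that every relevant quantity depends on the set $S$ only through $k:=|S|$; the task is then to separate the NO values $k\in\{1,\dots,w\}$ from the YES values $k\in\{2w,\dots,N\}$. I would deliberately \emph{not} go through the containment $\mathsf{QMA}\subseteq\mathsf{SBQP}$: because $\ApxCount$ already has a $1$-query $\mathsf{SBQP}$ algorithm, any bound obtained that way is vacuous here, so the $m$-qubit witness register has to be kept explicit throughout the argument.

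Next I would apply the polynomial method only to the query register. Writing $x\in\{0,1\}^N$ for the membership string of $S$, Arthur's $T$-query verification on an $m$-qubit witness is, after tracing out his workspace, a two-outcome measurement on the witness register whose acceptance operator $M(x)$ is a $2^m\times 2^m$ matrix each entry of which is a polynomial in $x$ of degree at most $2T$; moreover $0\preceq M(x)\preceq I$, with $\|M(x)\|\ge 2/3$ on YES instances and $\|M(x)\|\le 1/3$ on NO instances. The crux is to distill from $M$ a \emph{scalar} univariate Laurent polynomial $L(k)=\sum_{i=-a}^{b} c_i\, k^i$, with positive and negative degrees $a,b=O(T)$, that still separates the two regimes, but now with a separation ratio and coefficient height governed by the witness dimension $2^m$. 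Negative powers of $k$ enter naturally: the only useful content of a purported witness for ``$|S|$ is large'' is an amplitude distribution spread over $\Omega(w)$ elements of $S$, i.e.\ amplitudes of order $|S|^{-1/2}$ and collision weights of order $|S|^{-1}$, and symmetrizing such a witness's contribution to the acceptance probability yields exactly a Laurent polynomial in $k$ (with possibly half-integer exponents); the $2^m$ ``slots'' of the witness register each contribute one such term, which is what bounds the height by $2^m$.

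I would then prove the needed Laurent-polynomial estimate: if a Laurent polynomial of positive and negative degrees $a,b$ is bounded by $H$ in absolute value on $\{1,\dots,N\}$, is at most $H\cdot 2^{-\Theta(w)}$ on $\{1,\dots,w\}$, and attains a value $\ge 1$ somewhere on $\{2w,\dots,N\}$, then $a+b=\Omega(\sqrt{N/w})$ --- and, crucially, the analogue with $2^{-\Theta(w)}$ weakened to $2^{-o(w)}$ is simply false, so the height must be $2^{\Omega(w)}$. The first half is a Markov--Chebyshev estimate (a low-degree Laurent polynomial that is tiny on an interval of length $w$ cannot climb to $\Omega(1)$ a distance $w$ away while staying bounded on all of $[1,N]$), now run on both the positive-power and the negative-power ``halves'' of $L$; the hard half, the sharp dependence on $w$, I would get from a dual object --- a signed measure on $\{0,\dots,N\}$ orthogonal to all Laurent polynomials of small bidegree whose total variation is $2^{\Theta(w)}$. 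Feeding $a,b=O(T)$ and $H=O(2^m)$ into this estimate gives $T=\Omega(\sqrt{N/w})$ or $m=\Omega(w)$, proving \Cref{thm:qmabound}.

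The step I expect to be the main obstacle is the distillation: an \emph{untrusted} $m$-qubit witness need not be a tensor power of ``QSample''-type states, and a highly entangled witness can, for example, pass any single-copy purity or collision test Arthur might attempt, so one must show that entanglement buys Merlin nothing beyond the $2^m$-fold blow-up in height already accounted for. This is exactly where the full $\Omega(w)$ strength lives: a cruder version of the argument --- replace the witness by the maximally mixed state, or merely count the bits Arthur needs in order to aim $\Omega(1/T)$ of his query weight at the ``new'' $w$ coordinates of a YES instance --- only yields $m=\Omega\!\bigl(\log(N/(T^2 w))\bigr)$, which is exponentially too weak. Upgrading $\Omega(\log(\cdot))$ to $\Omega(w)$ is precisely what the Laurent-polynomial analysis of the full matrix-valued acceptance operator (as opposed to any single scalarization of it) is meant to achieve.
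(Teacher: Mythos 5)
Your proposal is built on a premise that actually points you away from the paper's argument. You explicitly reject routing through $\mathsf{QMA}\subseteq\mathsf{SBQP}$ on the grounds that $\ApxCount$ already has a one-query $\mathsf{SBQP}$ algorithm, making that bound vacuous. That reasoning is correct when applied to $\ApxCount$ itself, but the paper does \emph{not} apply the $\mathsf{SBQP}$ reduction to $\ApxCount$ directly. The key structural insight you are missing is the intersection trick (due to G\"o\"os et al.\ and suggested by Watson): $\mathsf{QMA}$ is closed under intersection, so an efficient $\mathsf{QMA}$ protocol for $\ApxCount$ yields an efficient $\mathsf{QMA}$ protocol for $\AndApxCount$, and hence via Marriott--Watrous an $\mathsf{SBQP}$-type algorithm for $\AndApxCount$ with $O(mT)$ queries. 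Unlike $\ApxCount$, the composed problem $\AndApxCount$ does \emph{not} have a cheap $\mathsf{SBQP}$ algorithm (indeed, the paper shows it costs $\Omega(\min\{w,\sqrt{N/w}\})$), so the reduction is far from vacuous. This is the mechanism that lets the paper work with a \emph{scalar} acceptance probability and ordinary polynomial symmetrization, avoiding the entire issue you flag as ``the main obstacle.''

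Your proposed source of Laurent polynomials is also the wrong one for this theorem. You argue that negative powers of $k$ should appear because a useful witness is spread over $\sim|S|$ coordinates with amplitudes $\sim|S|^{-1/2}$, mirroring the QSampling analysis. The paper explicitly contrasts the two situations: in the QSampling result Laurent polynomials are forced by the state normalization, but in the $\mathsf{QMA}$ result the symmetrized acceptance probability of the ($\mathsf{SBQP}$) algorithm is an ordinary bivariate polynomial $p(x,y)$ of degree $O(mT)$; Laurent polynomials only appear when one restricts $p$ to a hyperbola $\{(x,y)=(4wt,4w/t)\}$ and exploits the $x\leftrightarrow y$ symmetry of $\AND_2$ to view $\ell(t)=p(4wt,4w/t)$ as a polynomial in $t+1/t$. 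Because the witness is untrusted and arbitrary, there is in general no ``distillation'' of $M(x)$'s top eigenvalue into a single Laurent polynomial with bidegree $O(T)$ and height $2^{O(m)}$: the worst-case acceptance probability is $\max_{\psi}\langle\psi|M(x)|\psi\rangle$, which is not itself a polynomial, and you would have to commit to a specific scalarization and prove it suffices --- precisely the gap you acknowledge but do not close. The intersection trick sidesteps all of this, and the remaining analysis is a combination of the hyperbola restriction (Theorem~\ref{thm:L}) together with Chernoff/Coppersmith--Rivlin estimates relating the lattice-point bounds on $p$ to real-valued bounds on $q$ (Lemma~\ref{lem:betterforqma}), which is where the $m=\Omega(w)$ threshold (through the condition $\alpha=2^{-m}=2^{-o(w)}$) actually enters.
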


This lower bound proved in \Cref{sec:QMA} resolves the $\mathsf{QMA}$ complexity of  $\ApxCount$, as (up to a $\log N$ factor) it matches the cost of two trivial $\mathsf{QMA}$ protocols.
In the first, Merlin sends $2w$ items claimed to be in $S$,
and Arthur picks a constant number of the items at random
and confirms they are all in $S$ with one membership query each. 
This protocol has witness length $m=O(w\log N)$ (the number of bits needed to specify $2w$ elements out of $N$) and $T=O(1)$. 
In the second protocol, Merlin does nothing, and Arthur solves the problem with $T=O\bigl(\sqrt{N/w}\bigr)$ quantum queries.

%$\mathsf{QMA}$ has an important communication analog, 
%in which Arthur is replaced
%by two parties Alice and Bob, and the input is split between them.
%We mention that \Cref{thm:qmabound} is easily ``lifted'' to the communication setting
%via the pattern matrix method \cite{sherstov2011pattern, klauck}.
%If $w \geq N^{1/3}$, this implies
%that the $\mathsf{QMA}$ communication complexity (witness length
%plus communication cost) of the lifted problem is at least $\sqrt{N/w}$,
%which is the same as the quantum communication complexity of the problem (without Merlin).
%To the best of our knowledge, this is the first non-trivial communication
%problem for which giving
%Alice and Bob access to Merlin does not make
%the problem easier in the quantum setting.\footnote{In the query
%complexity setting, it is known that
%the $\AND$ function on $n$ bits has both QMA and quantum
%query cost $\Theta(\sqrt{n})$ \cite{razshpilka}.
%But the $\mathsf{QMA}$ complexity of $\mathsf{Disjointness}$ (the communication analog of $\AND$) remains open, with the best upper bound being $O(\sqrt{n})$
%and the best lower bound being $\Omega(n^{1/3})$ \cite{klauck}.}

\setlength{\columnsep}{1.5em}
\setlength{\intextsep}{3ex}
\begin{wrapfigure}{r}{0.245\textwidth}
    \centering
    \begin{tikzpicture}[x=1cm,y=1cm]

    \node (MA) at(2,0){$\cl{MA}$};
    \node (QMA) at(1,1){$\cl{QMA}$};
    \node (SBP) at(3,1){$\cl{SBP}$};
    \node (SBQP) at(2,2){$\cl{SBQP}$};
    \node (PP) at(2,3){$\cl{PP}$};    
    \node (AM) at(4,2){$\cl{AM}$};
    
    \path[-] (MA) edge (QMA);
    \path[-] (QMA) edge (SBQP);
    \path[-] (MA) edge (SBP);     
    \path[-] (SBP) edge (SBQP);
    \path[-] (SBP) edge (AM);     
    \path[-] (SBQP) edge (PP);
    
    \end{tikzpicture}
    
    \caption{Relationships between complexity classes. 
    An upward line indicates that a complexity class 
    is contained in the one above it relative to 
    all oracles.\label{fig:relations}\vspace{2ex}}
\end{wrapfigure}
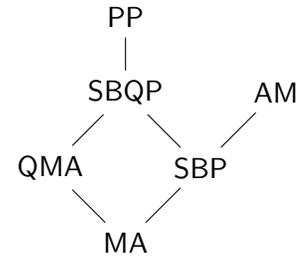

\para{Oracle separation.}
Our result also yields new oracle separations.
The approximate counting problem is complete for the complexity class $\mathsf{SBP}$ \cite{BGM06}, which is sandwiched between $\mathsf{MA}$ (Merlin--Arthur) and $\mathsf{AM}$\ (Arthur--Merlin). \ The class $\mathsf{SBQP}$ (discussed above), first defined by Kuperberg \cite{kuperberg}, is a
quantum analogue of $\mathsf{SBP}$ that contains both $\mathsf{SBP}$ and $\mathsf{QMA}$.

By the usual connection between oracle separations and query complexity
lower bounds, \Cref{thm:qmabound} implies an oracle separation
between $\mathsf{SBP}$ and $\mathsf{QMA}$---i.e., there exists an oracle $A$
such that $\mathsf{SBP}^{A}\not\subset \mathsf{QMA}^{A}$  
(see \Cref{cor:qma_separation}). Prior to our
work, it was known that there exist oracles $A,B$ such that $\mathsf{SBP}%
^{A}\not\subset \mathsf{MA}^{A}$~\cite{BGM06} and $\mathsf{AM}%
^{B}\not\subset \mathsf{QMA}^{B}$, which follows from $\mathsf{AM}^B \not\subset \mathsf{PP}^B$ \cite{vereshchagin}, 
but the relation
between $\mathsf{SBP}$\ and\ $\mathsf{QMA}$\ remained elusive.\footnote{%
It is interesting to note that in the non-relativized world, under plausible
derandomization assumptions~\cite{amnp}, we have $\mathsf{NP}=\mathsf{MA}=%
\mathsf{SBP}=\mathsf{AM}$. In this scenario, all these classes are equal,
and all are contained in $\mathsf{QMA}$.} \Cref{fig:relations} 
shows the known inclusion relations among these
classes (all of which hold relative to all oracles).

%We are able to establish this separation for the reason explained above, that the known lower bound technique for
Previous techniques were inherently
unable to establish this oracle separation for the reason stated above: all existing $\mathsf{QMA}$ lower bounds
intrinsically apply to $\mathsf{SBQP}$ as well. Since $\mathsf{SBP}$ is contained in $\mathsf{SBQP}$, prior techniques cannot establish $\mathsf{SBP}^{A}\not\subset \mathsf{QMA}^{A}$, or even $\mathsf{SBQP}^{A}\not\subset \mathsf{QMA}^{A}$, for any oracle $A$.
Our analysis also yields the first oracle with respect to
which $\mathsf{SBQP}$ is not closed under intersection.

% Note that we can separate $\mathsf{MA}$ from $\mathsf{SBP}$~\cite{BGM06}) because we know lower bound methods for $\mathsf{MA}$ that do not also apply to $\mathsf{SBP}$. \ In particular,
% $\mathsf{MA}$ is contained in $\Sigma _{2}$, which is not known to contain $%
% \mathsf{SBP}$, and B\"{o}hler et al.\ exploit this relationship to prove the
% oracle separation $\mathsf{SBP}^{A}\not\subset \mathsf{MA}^{A}$. \ However,
% the primary method for proving $\mathsf{QMA}$ lower bounds is to exploit the fact
% that $\cl{QMA} \subseteq \cl{SBQP}$ and to show lower
% bounds on $\mathsf{SBQP}$. \
% Unfortunately, $\mathsf{SBQP}$ also contains $\mathsf{SBP}$, and hence can
% solve approximate counting efficiently---which is why we needed the new approach described previously.

% As described in the proof overview above, we resolve this by using a
% difference in structural properties of the involved complexity classes. \
% While $\mathsf{QMA}$ is easily seen to be closed under intersection, $%
% \mathsf{SBQP}$ is not obviously closed under intersection, and in fact our
% results also show the existence of an oracle relative to which $\mathsf{SBQP}
% $ is not closed under intersection. \ Hence we separate $\mathsf{SBP}$ and $%
% \mathsf{QMA}$ by taking a complete problem for $\mathsf{SBP}$ and
% considering the $\mathsf{AND}$ of two copies of the problem, which we then
% show is not in $\mathsf{SBQP}$ in the black-box setting.

\para{Proof overview.} \ To get around the
issue of $\mathsf{ApxCount}_{N,w}$\ being in $\mathsf{SBQP}$,\ we use a
clever strategy that was previously used by G\"{o}\"{o}s et al.~\cite%
{GLMWZ16}, and that was suggested to us by Thomas Watson (personal
communication).
Our strategy exploits a structural property of $\mathsf{QMA}$: 
the fact that $\mathsf{QMA}$\ is closed under intersection, 
but (at least relative to oracles, and as we'll show) $\mathsf{SBQP}$\ is not.

Given a function $f$, let $\mathsf{AND}_{2}\circ f$ be the
$\mathsf{AND}$ of two copies of $f$ on separate inputs.\footnote{Because we focus on lower bounds, for a promise problem $f$ (such as $\ApxCount$), we take the promise for $\mathsf{AND}_2 \circ f$ to be that both instances of $f$ must satisfy $f$'s promise. Then, any lower bound also applies to more relaxed definitions, such as only requiring one of the two instances to be in the promise.} 
\ Then if $f$ has small $\mathsf{QMA}$ query complexity, it's not hard to see that $\mathsf{AND}%
_{2}\circ f$ does as well:\ Merlin simply sends witnesses corresponding to
both inputs; then Arthur checks both of them independently. \ While it's not
completely obvious, one can verify that a dishonest Merlin would gain
nothing by entangling the two witness states. \ Hence if $\mathsf{ApxCount}%
_{N,w}$ had an efficient $\mathsf{QMA}$ protocol, then so would $\mathsf{AND}%
_{2}\circ \mathsf{ApxCount}_{N,w}$, with the witness size and query
complexity increasing by only a constant factor.

By contrast, even though $\mathsf{ApxCount}_{N,w}$ does have an
efficient $\mathsf{SBQP}$ algorithm, we will show that $\mathsf{AND}%
_{2}\circ \mathsf{ApxCount}_{N,w}$ does not. \ This is the technical core of
our proof and proved in \Cref{sec:SBQP}.

% In more detail, suppose by way of contradiction that $\mathsf{ApxCount}_{N,w}
% $ had an efficient $\mathsf{QMA}$ protocol. \ Then as we said above, $%
% \mathsf{AND}_{2}\circ \mathsf{ApxCount}_{N,w}$ would have an efficient $%
% \mathsf{QMA}$\ protocol as well---which means that the latter function would
% \textit{also} have an efficient $\mathsf{SBQP}$\ algorithm. \ But by using
% the polynomial method, we will show that $\mathsf{AND}_{2}\circ \mathsf{%
% ApxCount}_{N,w}$\ does \textit{not} have a query-efficient $\mathsf{SBQP}$\
% algorithm, thereby yielding the desired contradiction.

\begin{restatable}{theorem}{sbqp}
\label{thm:sbqp}
Consider an $\mathsf{SBQP}$ algorithm for $\AndApxCount$ that makes $T$ queries to membership oracles for the two instances of $\ApxCount$. Then $T=\Omega\left(\min\bigl\{w,\sqrt{N/w}\bigr\}\right)$.
\end{restatable}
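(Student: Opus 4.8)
The plan is to apply the polynomial method, symmetrize over the two instances, and reduce the resulting two-variable problem to a one-dimensional statement about \emph{Laurent} polynomials, where negative exponents arise precisely because of the reciprocal interaction between the two $\ApxCount$ instances. By the $\mathsf{SBQP}$ polynomial method (a variant of the polynomial method of Beals et al.), a $T$-query $\mathsf{SBQP}$ algorithm for $\AndApxCount$ has acceptance probability equal to a real polynomial $p$ on the $2N$ input bits of total degree at most $2T$ with $p \in [0,1]$ everywhere, $p \ge \alpha$ on YES-instances, and $p \le \alpha/2$ on NO-instances, for some $\alpha > 0$ (possibly exponentially small). Since $\AndApxCount$ is invariant under independently permuting the ground sets of the two instances, I would symmetrize $p$ over $\mathrm{Sym}(N) \times \mathrm{Sym}(N)$ to get a bivariate polynomial $q(a,b)$ of total degree at most $2T$, on integers $1 \le a,b \le N$, with $q(a,b) \in [0,1]$ for all such $a,b$; $q(a,b) \ge \alpha$ whenever $a,b \ge 2w$; and $q(a,b) \le \alpha/2$ whenever $\min(a,b) \le w$ (with the other coordinate respecting the promise). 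The essential structural point is that the only usable upper bound is $q \le 1$: because $\alpha$ may be minuscule, the leverage cannot come from treating $\alpha$ as a constant, but must come from combining $q \ge 0$ (everywhere, including out of promise) with the geometry of the YES and NO regions.

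Next I would eliminate one variable by restricting $q$ to a \emph{reciprocal curve}. Fixing a constant $c \asymp w^2$ and substituting $(a,b) = (z, \lceil c/z \rceil)$, the restriction $L(z) := q(z, c/z)$ is a Laurent polynomial with exponents in $\{-2T, \dots, 2T\}$, since $q$ has total degree at most $2T$. Following which instances the curve visits, $L$ acquires a ``bump'' shape: $0 \le L(z) \le 1$ throughout; $L(z) \ge \alpha$ on the short interval where both coordinates exceed $2w$; and $L(z) \le \alpha/2$ on two flanking intervals---one near the small-$z$ end, one near the large-$z$ end, each of multiplicative width $\sim N/w$---where one coordinate drops to $w$ or below. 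The YES-window width and the NO-flank lengths are tuned by the choice of $c$.

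The heart of the argument is an extremal lower bound for Laurent polynomials of this type: any Laurent polynomial nonnegative and bounded by $1$ on $\{1, \dots, N\}$, at least $\alpha$ on the bump interval, and at most $\alpha/2$ on the flanking intervals has Laurent-degree $\Omega(\min\{w, \sqrt{N/w}\})$, regardless of $\alpha$. I would prove this by writing $L(z) = R(z)/z^{2T}$ with $R$ an ordinary polynomial, transferring the constraints to $R$, and combining Markov/Chebyshev-type growth estimates with nonnegativity of $R$ on the grid---using the classical phenomenon that a nonnegative polynomial pinned below $\alpha/2$ across a long interval cannot rise to $\alpha$ a multiplicatively tiny distance away unless its degree is large. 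I expect the two branches of the minimum to come from the geometry: a thin YES window separated from a NO flank by $\sim w$ grid points forces degree $\Omega(w)$, while a length-$\sim N$ NO flank abutting the bump at multiplicative distance $\sim \sqrt{w/N}$ forces degree $\Omega(\sqrt{N/w})$; if no single $c$ delivers both, a win-win argument (a direct combinatorial $\Omega(w)$ bound versus the Laurent $\Omega(\sqrt{N/w})$ bound under the hypothesis $T < cw$) should close the gap. Since $2T$ is at least the Laurent-degree of $L$, this yields $T = \Omega(\min\{w, \sqrt{N/w}\})$.

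The step I expect to be the main obstacle is precisely this extremal Laurent bound: making it robust to an arbitrarily small normalization $\alpha$---so that it genuinely exploits nonnegativity rather than two-sided boundedness---and designing the reciprocal curve (while absorbing the rounding in $\lceil c/z \rceil$ and the out-of-promise ``gap'' intervals, where only $L \in [0,1]$ is known) so that the resulting one-dimensional instance is extremal enough to recover both terms of $\min\{w, \sqrt{N/w}\}$.
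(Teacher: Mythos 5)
Your high-level architecture matches the paper's: polynomial method, symmetrization to a bivariate polynomial, restriction to a hyperbola through the "corner" of the promise region to obtain a Laurent polynomial, and an extremal bound that exploits only the constant-factor ratio between the acceptance probabilities (so that the argument survives exponentially small $\alpha$). However, there is a genuine gap at the step you flag as "absorbing the rounding in $\lceil c/z\rceil$," and it is not a technicality: it is the central obstacle, and it is where the $\min\{w,\cdot\}$ term actually comes from. Your symmetrized polynomial $q(a,b)$ is controlled only at integer lattice points, but the curve $(z, c/z)$ passes almost entirely through non-lattice points, where you know nothing about $q$. Writing $(z,\lceil c/z\rceil)$ destroys the Laurent structure (the ceiling is not a polynomial), and writing $(z, c/z)$ exactly gives a Laurent polynomial whose values you cannot bound. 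The paper resolves this by carrying a \emph{second} symmetrization alongside the Minsky--Papert one (the "erase-all-subscripts" symmetrization, \cref{lem:eraseallsubscripts}), which yields a polynomial controlled at all \emph{real} points of $[0,N]^2$ but only as a binomial-weighted average of the lattice-point values. Transferring the lattice-point bounds to this averaged polynomial requires bounding the Minsky--Papert polynomial at lattice points \emph{outside} the promise region, via Coppersmith--Rivlin and Paturi growth estimates combined with Chernoff tails; that transfer only works when $T=o(w)$, which is precisely why the final bound is $\Omega(\min\{w,\sqrt{N/w}\})$ rather than $\Omega(\sqrt{N/w})$. Your proposed source of the $\Omega(w)$ branch---a thin YES window separated from a NO flank by $\sim w$ grid points in the one-dimensional picture---is not where it comes from and would not yield it (the gap between $w$ and $2w$ is only a constant multiplicative factor).

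Two smaller points. First, the paper's route from the Laurent polynomial to an ordinary one is not $L(z)=R(z)/z^{2T}$: multiplying through by $z^{2T}$ rescales the constraints by a factor varying over $N^{2T}$ across the domain, which defeats Markov/Chebyshev estimates. Instead the paper first symmetrizes so that $\ell(t)=\ell(1/t)$ (replacing $p(x,y)$ by $\tfrac{1}{2}(p(x,y)+p(y,x))$, which preserves all constraints), and then uses the fact that a symmetric Laurent polynomial of degree $d$ is an ordinary degree-$d$ polynomial in $t+1/t$ (\cref{lem:symmetric}); Paturi's inequality on the transformed variable then gives $\Omega(\sqrt{N/w})$ directly. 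Second, the extremal statement you need is one-sided only in the weak sense that the ratio $2\alpha$ versus $\alpha$ is scale-invariant; the paper does not need any deeper use of nonnegativity. You would need to supply both the second symmetrization and the growth-plus-concentration transfer argument to make your outline into a proof.
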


\Cref{thm:sbqp} is quantitatively optimal, as we'll exhibit
a matching $\mathsf{SBQP}$ upper bound. \ 
Combined with the
connection between $\mathsf{QMA}$ and $\mathsf{SBQP}$, \Cref{thm:sbqp}
immediately implies a $\mathsf{QMA}$ lower bound for $\AndApxCount$, and by extension $\ApxCount$ itself. However, this $\mathsf{QMA}$ lower bound is not quantitatively optimal. To obtain the optimal bound of \Cref{thm:qmabound},
we exploit additional analytic properties of 
the $\mathsf{SBQP}$ protocols that are
derived from $\mathsf{QMA}$ protocols.
%with witness length $m$. 
%namely: (1) the 
%parameter $\alpha$ of the $\mathsf{SBQP}$
%protocol is not too small (at least $2^{-m}$)
%and (2) the multiplicative gap between acceptance probabilities when $f(x)=0$
%vs. $f(x)=0$ is at least $2^m$, which may be much greater than $2$.

At a high level, the proof of \Cref{thm:sbqp} assumes that there's an
efficient $\mathsf{SBQP}$ algorithm for $\mathsf{AND}_{2}\circ \mathsf{%
ApxCount}_{N,w}$. \ This assumption yields a low-degree one-sided
approximating polynomial for the problem in $2N$ Boolean variables, where $N$
variables come from each $\mathsf{ApxCount}_{N,w}$\ instance. \ We then
symmetrize the polynomial (using the standard Minsky--Papert symmetrization argument~\cite{mp})
to obtain a bivariate polynomial in two variables $%
x$ and $y$ that represent the Hamming weight of the original instances.\footnote{%
The term ``symmetrization'' originally referred to the process of averaging a multivariate polynomial over permutations of its inputs to obtain a symmetric polynomial. More recently, authors have used ``symmetrization'' more generally to refer to any method for turning a multivariate polynomial into a univariate one in a degree non-increasing manner (see, e.g., \cite{sherstov2009halfspaces, sherstov2010halfspaces}). In this paper, we use the term ``symmetrization'' in this more general sense.
} \
This yields a polynomial $p(x,y)$ that for \emph{integer pairs} $x, y$ (also called lattice points) satisfies
$p(x,y)\in [0,1]$ when either $x \in \{0,\ldots,w\}$ and $y\in \{0,\ldots,w\} \cup \{2w,\ldots,N\}$, or
(symmetrically) $y \in \{0,\ldots,w\}$ and $x\in \{0,\ldots,w\} \cup \{2w,\ldots,N\}$. \ If
both $x \in \{2w,\ldots,N\}$ and $y \in \{2w,\ldots,N\}$, then $p(x,y)\geq 2$. 
This polynomial $p$ is depicted in \Cref{fig:introL}.

\begin{figure}[tbp]
\centering
\begin{tikzpicture}[y=0.35cm, x=0.35cm]
	\pgfmathsetmacro{\N}{20};
	\pgfmathsetmacro{\w}{4};
	\pgfmathsetmacro{\tw}{\w+\w};
	\pgfmathsetmacro{\Np}{\N+1};

	%Shaded R's
	\fill[gray,opacity=.4] (\tw,0) -- (\tw,\w) -- (\N,\w) -- (\N,0);
	\fill[gray,opacity=.4] (0,\tw) -- (\w,\tw) -- (\w,\N) -- (0,\N);
	\fill[gray,opacity=.4] (0,0) -- (\w,0) -- (\w,\w) -- (0,\w);
	\fill[gray,opacity=.1] (\tw,\tw) -- (\tw,\N) -- (\N,\N) -- (\N,\tw);

 	%axis
	\draw[->] (0,0) -- coordinate (x axis mid) (\Np,0);
    \draw[->] (0,0) -- coordinate (y axis mid) (0,\Np);
    	
    \draw (0,1pt) -- (0,-3pt) node[anchor=north] {0};
    \draw (\w,1pt) -- (\w,-3pt) node[anchor=north] {$w$\vphantom{l}};
    \draw (\tw,1pt) -- (\tw,-3pt) node[anchor=north] {$2w$};
    \draw (\N,1pt) -- (\N,-3pt) node[anchor=north] {$N$};

    \draw (1pt,0) -- (-3pt,0) node[anchor=east] {0};
    \draw (1pt,\w) -- (-3pt,\w) node[anchor=east] {$w$};
    \draw (1pt,\tw) -- (-3pt,\tw) node[anchor=east] {$2w$};
    \draw (1pt,\N) -- (-3pt,\N) node[anchor=east] {$N$};

	%labels
	\node[below=0.8cm] at (x axis mid) {$x$};
	\node[rotate=90, above=0.8cm] at (y axis mid) {$y$};
	\node at (\w+\N/2, \w/2) {$p(x, y) \in [0,1]$};
	\node[rotate=90] at (\w/2, \w+\N/2) {$p(x, y) \in [0,1]$};
	\node at (\w/2, \w/2) [align=left]{$p(x, y)$ \\ $\in [0,1]$};
	\node at (\w+\N/2, \w+\N/2) {$p(x, y) \geq 2$};

     \draw[thick,domain=1:{\N/(2*\w)},smooth,variable=\t,blue] plot ({2*\w*\t},{2*\w/\t});
     % Plot separately to make it smoother
     \draw[thick,domain=1:{\N/(2*\w)},smooth,variable=\t,blue] plot ({2*\w/\t},{2*\w*\t});

\end{tikzpicture}
\caption{The behavior of the (Minsky--Papert symmetrized) bivariate polynomial $p(x,y)$ at integer points $(x,y)$ in the proof of \Cref{thm:sbqp}. The polynomial $q$ obtained by erase-all-subscripts symmetrization is not depicted. We later restrict $q$ to a hyperbola similar to the one drawn in blue.}
\label{fig:introL}
\end{figure}
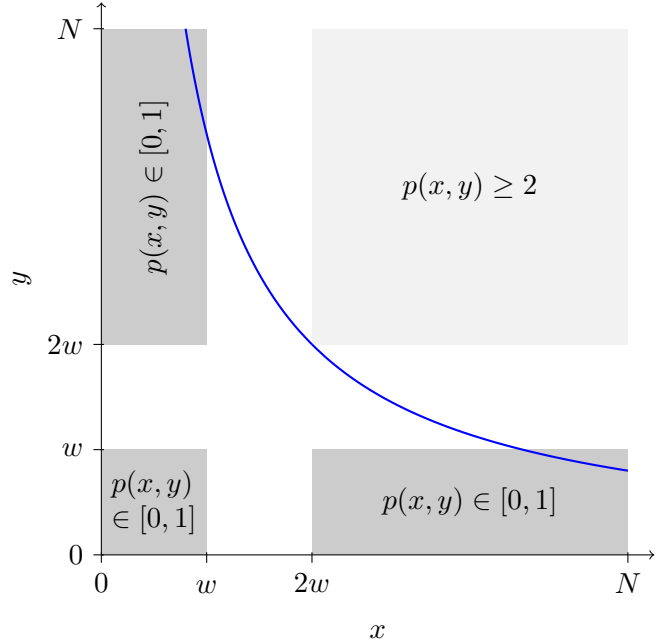

One difficulty is that we have a guarantee on the behavior of $p$ at lattice 
points only, whereas the rest of our proof requires precise control over 
the polynomial even at non-integer points. We ignore this issue for now and assume that $p(x,y)\geq 2$ for all
real values $x, y \in [2w, N]$, and $p(x, y) \in [0, 1]$
whenever $x \in [0, w]$ and $y \in [2w, N]$ 
or vice versa. We outline how we 
address integrality issues one
paragraph hence.

The key remaining difficulty is that we want to lower-bound the degree of 
a bivariate polynomial, but almost all known
lower bound techniques apply only to univariate polynomials.
To address this, we introduce a new technique
to reduce the number of variables (from $2$ to $1$)
in a degree-preserving way: we pass a \textit{hyperbola}
through the $xy$ plane (see \Cref{fig:introL}) and consider 
the polynomial $p$ restricted to the hyperbola.\ 
Doing so gives us a new univariate \emph{Laurent} polynomial $\ell (t) = p(2wt, 2w/t)$, 
whose positive and negative degree is at most $\deg(p)$. \
This Laurent polynomial has an additional symmetry, 
which stems from the fact that $\mathsf{AND}%
_{2}\circ \mathsf{ApxCount}_{N,w}$ is the $\mathsf{AND}$ of two identical
problems (namely, $\mathsf{ApxCount}_{N,w}$). \ We leverage this symmetry to
view $\ell (t)$, a Laurent polynomial in $t$, as an ordinary univariate
polynomial $r$ in $t+1/t$ of degree $\deg(p)$. 
We know that $r(2)=\ell(1)=p(2w, 2w)\geq 2$,
while for all $k \in [2.5, N/w+w/N]$, we know that
$r(k) \in [0, 1]$.
It then follows from classical
results in approximation theory that this univariate polynomial
must have degree $\Omega\bigl(\sqrt{N/w}\bigr)$.

\
Returning to integrality issues, to obtain a polynomial
whose behavior we can control at non-integer points, 
we use a different symmetrization argument (dating
back at least to work of Shi \cite{shi})
 that we call 
\textquotedblleft erase-all-subscripts\textquotedblright\ symmetrization (see \cref{lem:eraseallsubscripts}). %\footnote{%
%It is possible to obtain a suboptimal lower bound without 
%using erase-all-subscripts symmetrization. See 
%\href{https://arxiv.org/abs/1902.02398}{arXiv:1902.02398} for details.}
This symmetrization yields a bivariate polynomial $q$ of the same degree as $p$ that is bounded in $[0,1]$ at all \emph{real-valued}
inputs in $[0, N] \times [0, N]$. However, while
we have more control over $q$'s values at non-integer
inputs relative to $p$,
we have
\emph{less} control over $q$'s values at integer
inputs relative to $p$, and
this introduces additional challenges. (These
challenges are not merely annoyances; they are 
why the $\mathsf{SBQP}$ complexity of $\AndApxCount$
is $T=\Theta\bigl(\min\bigl\{w,\sqrt{N/w}\bigr\}\bigr)$,
and not $\Theta\bigl(\sqrt{N/w}\bigr)$).
Ultimately,
both types of symmetrization play an important role in our analysis, as we use $p$ to
bound $q$ when the polynomials have degree $o(w)$, using tools from approximation
theory and Chernoff bounds.

%=======================================================================
\subsection{Second result: Approximate counting with quantum samples}
\label{s:secondresultintro}
Our second result resolves the complexity of $\ApxCount$ in a different generalization of the quantum query model, in which the algorithm is given access to certain (trusted) quantum states.

\para{Quantum samples.} In practice, when
trying to estimate the size of a set $S\subseteq\left[ N\right] $, often we
can do more than make membership queries to $S$. \ At the least, often we
can efficiently generate nearly uniform \textit{samples} from $S$, for
instance by using Markov Chain Monte Carlo techniques. \ To give two
examples, if $S$ is the set of perfect matchings in a bipartite graph, or
the set of grid points in a high-dimensional convex body, then we can
efficiently sample $S$ using the seminal algorithms of Jerrum, Sinclair, and
Vigoda \cite{jsv} or of Dyer, Frieze, and Kannan \cite{dfk},\ respectively.

The natural quantum generalization of uniform sampling from a set $S$ is \emph{QSampling} $S$---a term coined in 2003 by Aharonov and Ta-Shma \cite{at}, and which means that we can approximately prepare the uniform superposition%
\begin{equation}
\left\vert S\right\rangle := \frac{1}{\sqrt{\left\vert S\right\vert }}%
\sum_{i\in S}\left\vert i\right\rangle
\end{equation}
via a polynomial-time quantum algorithm (where \textquotedblleft
polynomial\textquotedblright\ here means $\mathrm{polylog}(N)$). \ Because
we need to uncompute garbage, the ability to prepare $\left\vert
S\right\rangle $ as a coherent superposition is a more stringent requirement
than the ability to classically sample from $S$. 
Indeed,  Aharonov and Ta-Shma \cite{at}
showed that the ability to QSample lends considerable power:
all problems in the complexity class $\mathsf{SZK}$ (which 
contains problems that are widely believed
 be hard on average \cite{goldreich1993perfect, goldwasser1989knowledge, micciancio2003statistical, goldreich2000limits, peikert2008noninteractive})
can be efficiently reduced to the task of \emph{QSampling} some set
that can be \emph{classically} sampled in polynomial time.
To be clear, QSampling supposes that the algorithm
is given trusted copies of $|S\>$; unlike in the $\mathsf{QMA}$ setting, the state need not be ``verified'' by the algorithm.
%Indeed, as Aharonov and Ta-Shma \cite{at}\
%pointed out, the quantum lower bound for finding collisions \cite{aar:col,as}
%has the corollary that, in the black-box setting, there are classes of sets 
%$S$\ that can be efficiently sampled but \textit{not} efficiently QSampled.

On the other hand, Aharonov and Ta-Shma \cite{at},\ and Grover and Rudolph
\cite{groverrudolph}, observed that many interesting sets $S$\ can be
efficiently QSampled as well.\footnote{In particular, this holds for all sets $S$\ such that we
can approximately count not only $S$ itself, but also the restrictions of $S$
obtained by fixing bits of its elements. 
%Or, what's known to be equivalent
%\cite{sinclairjerrum}, it holds for all sets $S$ such that we can
%efficiently sample not only the uniform distribution over $S$ elements, but
%also the conditional distributions obtained by fixing bits. \ 
So in
particular, the set of perfect matchings in a bipartite graph, and the set
of grid points in a convex body, can both be efficiently QSampled.
There
are other sets that can be QSampled but not because of this reduction. \ A
simple example would be a set $S$\ such that $\left\vert S\right\vert \geq%
\frac{N}{\mathrm{polylog}N}$: in that case we can efficiently prepare $%
\left\vert S\right\rangle $\ using postselection, but approximately counting
$S$'s restrictions might be hard.}

%We can now ask whether approximate counting can be solved efficiently given membership queries to $S$ and copies of the state $|S\>$.

\para{QSampling via unitaries.} 
In many applications (such as when $S$ is the set of perfect matchings in a bipartite graph or grid points in a convex body),
the reason an algorithm can QSample $S$ is because
it is possible to efficiently construct a quantum circuit
implementing a unitary operator $U$ that
prepares the state $|S\>$. 
Access to this unitary $U$ potentially conveys
substantially more power than QSampling alone.
%does black-box access
%to the state $|S\>$ alone. 
For example, access to $U$ conveys (in a black box manner)
the ability not only to QSample, but also to perform
reflections about $|S\>$: that is, to apply the unitary
transformation
\begin{equation}
\mathcal{R}_{S}:=\mathbbold{1}-2|S\>\langle S|,
\end{equation}%
which has eigenvalue $-1$ for $|S\>$ and eigenvalue $+1$ for all states
orthogonal to $|S\>$. \ 
%The ability to perform the unitary $\mathcal{R}_{S}$
%follows in a completely black-box way from the ability to prepare the state $%
%|S\>$ unitarily. \ 
More concretely, let $U$ be the unitary that performs the
map $U|0\>=|S\>$, for some canonical starting state $|0\>$. \ Since we know
the circuit $U$, we can also implement $U^{\dagger }$, by reversing the
order of all the gates and replacing all the gates with their adjoints. Then
$\mathcal{R}_{S}$ is simply
\begin{equation}
\mathcal{R}_{S}=\mathbbold{1}-2|S\>\langle S|=U\left( \mathbbold{1}%
-2|0\>\langle 0|\right) U^{\dagger }.
\end{equation}

Note that \textit{a priori}, QSamples and reflections about $|S\>$ could be
incomparable resources; it is not obvious how to simulate either one using
the other. \ On the other hand, it is known how to apply a quantum channel
that is $\varepsilon $-close to $\mathcal{R}_{S}$ (in the diamond norm)
using $\Theta (1/\varepsilon )$ copies of $|S\>$~\cite{lmr:pca,KLL+17}.

Access to a quantum circuit computing $U$ also permits an algorithm to efficiently apply $U$ on inputs
that do not produce the state $|S\>$, to construct
a controlled version of $U$, etc.

\para{Results.} As previously mentioned, 
Aharonov and Ta-Shma \cite{at}
showed that the ability to QSample lends considerable power,
including the ability to efficiently solve $\mathsf{SZK}$-complete problems. 
It is natural to ask just how much power the ability to QSample conveys. 
In particular, can one extend the result of Aharonov and Ta-Shma \cite{at}
from any problem in $\mathsf{SZK}$ to any problem in $\mathsf{SBP}$? 
Equivalently stated, can one
solve approximate counting efficiently,
using \textit{any} combination of $\mathrm{polylog}(N)$ queries and applications of a unitary $U$ that permits QSampling?\footnote{%
We thank Paul Burchard (personal communication) for bringing this question
to our attention.}
In this work, we show that the answer is no.
We begin by focusing on the slightly simplified
setting where the algorithm
is only permitted to perform membership queries,
QSamples, and reflections about the state $|S\>$.

\begin{restatable}{theorem}{main}
\label{thm:main}
Let $Q$ be a quantum algorithm that makes $T$ queries to the membership oracle for $S$,
and uses a total of $R$ copies of $|S\>$ and reflections about $|S\>$.
If $Q$ decides whether $\left\vert S\right\vert =w$\ or $\left\vert S\right\vert =2w$
with high probability, promised that one of those is the case, then either
\begin{equation}
T=\Omega\left(  \sqrt{\frac{N}{w}}\right)  \qquad \textrm{or} \qquad
R=\Omega\left(  \min\left\{  w^{1/3},\sqrt{\frac{N}{w}}\right\}  \right).
\end{equation}
\end{restatable}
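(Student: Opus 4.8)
The plan is to use the polynomial method, but applied to \emph{Laurent} polynomials, in a way that simultaneously captures the two resources $T$ (queries) and $R$ (QSamples and reflections). The first step is to set up the right polynomial. Fixing an algorithm $Q$ that makes $T$ queries to the membership oracle and uses $R$ copies of $|S\>$ and reflections about $|S\>$, I would track the acceptance probability $p_{\mathrm{acc}}(S)$ as a polynomial in the indicator variables $x_1,\dots,x_N$ of $S$. Each membership query contributes degree $2$ in the usual way (Beals et al.\ \cite{bbcmw}). The key new ingredient is understanding what a single use of $|S\>$ or of $\mathcal{R}_S = \mathbbold{1} - 2|S\>\langle S|$ does to the amplitudes: because $|S\> = \frac{1}{\sqrt{|S|}}\sum_{i\in S}|i\>$, both the state and the reflection involve $|S|^{-1/2} = (\sum_i x_i)^{-1/2}$. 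So after symmetrizing over permutations of $[N]$ (Minsky--Papert \cite{mp}), the acceptance probability becomes a function of the Hamming weight $k = |S|$ only, and is a \emph{Laurent polynomial in $\sqrt{k}$} (or, cleaning up, a polynomial in $k$ divided by a power of $k$), of positive degree $O(T^2 + R)$ and negative degree $O(R)$. The precise bookkeeping — that each reflection/copy contributes at most a constant to the negative degree and a constant to the positive degree, while each query contributes $2$ to the positive degree — is the technical heart of this reduction, and I would isolate it as a lemma.

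The second step is to extract a univariate approximation-theoretic bound from this Laurent polynomial. After the reduction I have a Laurent polynomial $\ell$ with $\ell(k) \in [0,1/3]$ for $k = w$ and $\ell(k) \in [2/3, 1]$ for $k = 2w$ (say), with positive degree $d_+ = O(T^2 + R)$ and negative degree $d_- = O(R)$, and with $\ell(k)\in[0,1]$ for \emph{all} integers $k \in \{1,\dots,N\}$ (since these are honest acceptance probabilities). Multiplying through by $k^{d_-}$ turns $\ell$ into an ordinary polynomial $P(k) = k^{d_-}\ell(k)$ of degree $d_+ + d_-$ that is bounded by $k^{d_-} \le N^{d_-}$ on the relevant range and separates $w$ from $2w$ by a gap of order $w^{d_-}$ (up to constants). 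Now I would invoke a robust-to-perturbation version of the Markov/Bernstein-type degree lower bound — specifically, the fact that a polynomial taking values in $[0,1]$ on an interval of $m$ integer points but jumping by a constant between two adjacent ``blocks'' needs degree $\Omega(\sqrt{m})$ — adapted to account for the fact that $P$ is only bounded by $N^{d_-}$, not by $1$. The cleanest route is probably to instead restrict attention to the interval $[w, N]$, rescale, and argue: if $d_- $ is small (say $o(w^{1/3})$) then the $N^{d_-}$ versus $w^{d_-}$ discrepancy is mild enough that the classical bound still forces $d_+ + d_- = \Omega(\sqrt{N/w})$, giving $T = \Omega((N/w)^{1/4})$ at minimum and, with more care tracking that queries contribute quadratically, the full $T = \Omega(\sqrt{N/w})$ or $R = \Omega(\sqrt{N/w})$; whereas if $d_-$ is large it is already $\Omega(w^{1/3})$ so $R = \Omega(w^{1/3})$ directly. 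I expect the interplay of these two regimes is exactly what produces the $\min\{w^{1/3}, \sqrt{N/w}\}$ form of the bound.

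The main obstacle, I expect, is the first step: correctly and tightly characterizing the polynomial degree contributed by a \emph{reflection} $\mathcal{R}_S$, as opposed to just a copy of $|S\>$. A single copy of $|S\>$ only introduces the amplitude factor $|S|^{-1/2}$ once, but a reflection is used \emph{inside} the algorithm's unitary evolution and its matrix entries $\langle i | \mathcal{R}_S | j\rangle = \delta_{ij} - \frac{2}{|S|}x_i x_j$ inject a fresh factor of $1/|S| = 1/(\sum_i x_i)$ each time — so naively $R$ reflections give negative degree $\Theta(R)$, which is what we want, but one has to be careful that this doesn't also blow up the positive degree by more than a constant per reflection, and that symmetrization preserves this structure. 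I would handle this by writing out the hybrid/path-integral expansion of the final amplitude as a sum over query-and-reflection histories, observing that each history contributes a product of terms each of which is a bounded-degree rational function of the $x_i$ with denominator a power of $\sum_i x_i$, and then symmetrizing. A secondary obstacle is that $P(k)$ is only controlled on integers $k \ge 1$ and can be as large as $N^{d_-}$ in between, so the approximation-theory lemma I cite has to be the ``one-sided / large-value'' variant rather than the textbook one; if no off-the-shelf statement suffices I would prove the needed inequality directly via a Chebyshev-polynomial extremal argument, exactly as in the proof sketch for \Cref{thm:sbqp}. Finally, to upgrade from reflections-and-copies to \emph{arbitrary black-box access to a unitary $U$ preparing $|S\>$}, I would note (as the paper already observes) that $U$ and $U^\dagger$ generate $\mathcal{R}_S$ and that one application of $U$ to $|0\>$ yields one copy of $|S\>$; more carefully, one argues that any $T'$-gate algorithm with oracle access to $U, U^\dagger$ and the membership oracle can be written as a polynomial of the same shape (positive degree $O(T^2 + T')$, negative degree $O(T')$) because each use of $U$ contributes the single amplitude factor $|S|^{-1/2}$, so the same degree argument goes through with $R$ replaced by the number of $U/U^\dagger$ calls.
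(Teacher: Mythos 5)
Your overall architecture is the paper's: reduce the algorithm to a univariate Laurent polynomial in $|S|$ via symmetrization, then prove a degree lower bound for Laurent polynomials that solve approximate counting. Your first step is essentially the paper's \Cref{laurentlem}, and your observation that each reflection $\R_S$ injects one factor of $|S|^{-1}$ into the matrix entries (so $R$ reflections give negative degree $O(R)$) is exactly right. One bookkeeping error there, though: the positive degree is $O(T+R)$, not $O(T^2+R)$. Each query contributes degree $1$ to each amplitude and the acceptance probability is a sum of squares of amplitudes, so $T$ queries contribute $2T$ to the degree of the probability (this is the standard Beals et al.\ accounting, and it is what the paper's \Cref{laurentlem} gives: maximum exponent $2T+R$, minimum exponent $-R$). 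Your stated fallback—that "queries contribute quadratically"—is backwards; if that were true you could only conclude $T=\Omega((N/w)^{1/4})$, and no amount of care would recover the full bound. The linear accounting is what makes the theorem tight.

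The genuine gap is in your second step, which is the technical core of the paper and which you compress into "invoke a robust version of the classical degree lower bound, adapted to account for the fact that $P$ is only bounded by $N^{d_-}$." After multiplying by $k^{d_-}$ (the paper's \Cref{mainlem}), the ordinary polynomial $P(\ell)=\ell^{d_-}\ell(\ell)$ is constrained only by $|P(\ell)|\leq(1+\eps)\ell^{d_-}$ at integers, so its permitted magnitude varies by a factor of $(N/w)^{d_-}$ across the domain. This is not a "mild discrepancy" even when $d_-=o(w^{1/3})$: a degree-$d$ polynomial bounded on an interval can grow by $e^{\Theta(d)}$ just outside it, so once $d_++d_-\gtrsim d_-\log(N/w)$ the envelope $\ell^{d_-}$ imposes essentially no constraint far from $w$, and the textbook Markov/Chebyshev jump argument simply does not apply. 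Indeed, \Cref{fedjatight} exhibits a purely negative-degree Laurent polynomial of degree $O(w^{1/3})$ solving the problem, so any correct argument must be a genuine dichotomy ($d_-=\Omega(w^{1/3})$ \emph{or} $d_+=\Omega(\sqrt{N/w})$) whose proof couples the two degrees; it cannot be a bound on $d_++d_-$. The paper needs either the "explosion argument" (\Cref{thm:explosion}, a coupled system of range/derivative inequalities for the positive- and negative-degree parts, which only achieves $w^{1/4}$) or, for the stated $w^{1/3}$, the full dual-polynomial machinery of \Cref{mainthm}: an LP-duality reformulation under the weighting $\ell^{d_-}$, followed by an explicit dual witness obtained by gluing a Bun--Thaler-style witness for the positive-degree part to a Zhandry-style witness for the negative-degree part, with a delicate mass-concentration analysis. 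None of this is supplied or substituted for in your proposal, so as written the argument does not close.
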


This is proved in \Cref{sec:dualpoly}. 
So if (for example) we set $w:=N^{3/5}$, then any quantum algorithm must
either query $S$, or use the state $\left\vert S\right\rangle $\ or
reflections about $\left\vert S\right\rangle $, at least $\Omega (N^{1/5})$\
times. 
Put another way, \Cref{thm:main} means that unless $w$ is very small ($w \leq \mathrm{polylog}(N))$) or extremely large ($w\geq N/\mathrm{polylog}(N)$), the ability to QSample $S$, reflect about $|S\>$, and determine membership in $S$ is not sufficient to approximately count $S$ efficiently. Efficient quantum algorithms for approximate counting will have to leverage additional structure of $S$, beyond the ability to QSample, reflect about $|S\>$, and determine membership in $S$.

In \Cref{thm:unitary} of \Cref{sec:unitary}, we then strengthen \Cref{thm:main} to hold
not only against algorithms
that can QSample
and reflect about $|S\>$ (in addition
to performing membership queries to $S$),
but also against all algorithms that are given 
access to
a specific unitary $U$ that
conveys the power to QSample
and reflect about $|S\>$.\footnote{To be precise,
the unitary $U$ to which the lower bound of \Cref{thm:unitary} applies maps a canonical starting state to $|S\>|S\>$.
As we explain in \Cref{sec:unitary}, such a unitary suffices to implement QSampling, reflections
about $|S\>$, etc., 
since the register containing the second
copy of $|S\>$ can simply be ignored.}

\medskip
Finally, we prove that the lower bounds in \Cref{thm:main} and \Cref{thm:unitary} are optimal. \ As
mentioned before, Brassard et al.~\cite{bht:count} gave a quantum algorithm
to solve the problem using $T=O(\sqrt{{N}/{w}})$\ queries alone, which
proves the optimality of the lower bound on the number of queries. On the other hand, it's easy to solve the problem using $O\left( \sqrt{w}%
\right) $\ copies of $\left\vert S\right\rangle $ alone, by simply measuring
each copy of $\left\vert S\right\rangle $\ in the computational basis and
then searching for birthday collisions. \ Alternately, we can solve the
problem using $O\bigl(\frac{N}{w}\bigr)$\ copies of $\left\vert
S\right\rangle $ alone, by projecting onto the state $|\psi \>=\frac{1}{%
\sqrt{N}}\left( \left\vert 1\right\rangle +\cdots +\left\vert N\right\rangle
\right) $ or its orthogonal complement. \ This measurement succeeds with
probability $|\langle S|\psi \>|^{2}=\frac{|S|}{N}$, so we can approximate $|S|$ by
simply counting how many measurements succeed.

In \Cref{UPPER} we improve on these algorithms by using samples \emph{and}
reflections, and thereby establish that \Cref{thm:main} and \Cref{thm:unitary} are tight.

\begin{restatable}{theorem}{alg}
\label{thm:alg}
There is a quantum algorithm that solves $\ApxCount$ with high probability using
$R$ copies of $|S\>$ and reflections about $|S\>$, where $R = O\left( \min \left\{ w^{1/3} , \sqrt{\frac{N}{w}} \right\} \right)$.
\end{restatable}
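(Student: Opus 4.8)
I would exhibit two algorithms and run whichever is cheaper, which gives $R = O(\min\{w^{1/3},\sqrt{N/w}\})$ since $w^{1/3} \le \sqrt{N/w}$ precisely when $w \le N^{3/5}$. In the \emph{large-$w$ regime} ($w \ge N^{3/5}$, where $\sqrt{N/w} \le w^{1/3}$) I would use quantum counting, instantiated with reflections about $|S\>$ in place of the usual query-based Grover operator, costing $O(\sqrt{N/w})$ reflections and $0$ copies. In the \emph{small-$w$ regime} ($w \le N^{3/5}$) I would use a two-phase, collision-finding-style algorithm that first uses copies of $|S\>$ to build a small "decoy list" $L \subseteq S$ and then uses reflections to amplitude-estimate the overlap of $|S\>$ with $L$; the two phases are balanced so that both cost $O(w^{1/3})$.

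\textbf{Large-$w$ regime.} Let $|\psi\> := \frac{1}{\sqrt N}\sum_{i\in[N]}|i\>$, which can be prepared, and reflected about via $\mathcal{R}_\psi := \id - 2|\psi\>\<\psi|$, using no oracle calls or copies of $|S\>$. The product $\mathcal{R}_\psi \mathcal{R}_S$ acts on $\mathrm{span}\{|S\>,|\psi\>\}$ as a rotation by an angle $2\theta$ with $\sin^2\theta = |\<S|\psi\>|^2 = |S|/N \in \{w/N,\, 2w/N\}$. Running the amplitude estimation / quantum counting procedure of \cite{bht:count,BHMT02} with this rotation, started from $|\psi\>$, estimates $\theta$ to additive error $O(1/M)$ using $M$ reflections about $|S\>$; since $\arcsin\sqrt{w/N}$ and $\arcsin\sqrt{2w/N}$ differ by $\Omega\bigl(\sqrt{w/N}\bigr)$ (and by $\Omega(1)$ when $w/N$ is bounded away from $0$), taking $M = O\bigl(\sqrt{N/w}\bigr)$ suffices to decide $\ApxCount$.

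\textbf{Small-$w$ regime.} \emph{Phase 1:} measure $k := \Theta(w^{1/3})$ copies of $|S\>$ in the computational basis and let $L$ be the set of distinct outcomes. Every outcome lies in $S$, so $L \subseteq S$; and since $k = O(w^{1/3}) \ll \sqrt{|S|}$, with high probability no two samples collide, so $|L| = k = \Theta(w^{1/3})$ (if a collision occurs, restart). \emph{Phase 2:} let $\Pi_L := \sum_{i\in L}|i\>\<i|$ and note that a fresh copy of $|S\>$ satisfies $\<S|\Pi_L|S\> = |L\cap S|/|S| = |L|/|S| \in \{|L|/w,\, |L|/(2w)\}$ — two values differing by a factor of $2$. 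Amplitude estimation with the Grover iterate $\mathcal{R}_S(\id - 2\Pi_L)$ (where $\id - 2\Pi_L$ is free to implement, being a known sign flip on computational basis states in $L$) estimates $|L|/|S|$ to relative error $\tfrac14$ using $O\bigl(\sqrt{|S|/|L|}\bigr) = O\bigl(\sqrt{w / w^{1/3}}\bigr) = O(w^{1/3})$ reflections about $|S\>$ and $O(1)$ additional copies of $|S\>$; since $|L|$ is known exactly, this decides whether $|S| = w$ or $|S| = 2w$. The total is $\Theta(k) + O\bigl(\sqrt{w/k}\bigr)$, minimized at $k = \Theta(w^{1/3})$ to give $O(w^{1/3})$. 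Finally, repeat a constant number of times and take a majority vote to boost the success probability.

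\textbf{Main obstacle.} The large-$w$ case is essentially textbook quantum counting; the new content is the small-$w$ algorithm. The point requiring care there is running amplitude estimation with only the resources on hand: the procedure of \cite{BHMT02} uses \emph{controlled} applications of the Grover iterate, hence of $\mathcal{R}_S$, whereas we are handed $\mathcal{R}_S$ itself. I would resolve this by substituting a phase-estimation-free amplitude estimation variant that invokes only powers $\bigl(\mathcal{R}_S(\id-2\Pi_L)\bigr)^m$ applied to $|S\>$ followed by a measurement, achieving the same $O(1/\sqrt{a})$ scaling in the amplitude $a = |L|/|S|$ after classical post-processing (alternatively, one can count controlled reflections at unit cost each, since a single reflection about $|S\>$ can be implemented from $|S\>$ via \cite{lmr:pca,KLL+17}). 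The remaining bookkeeping — controlling the random size of $L$, propagating the $\tfrac14$-relative estimation error to a reliable factor-$2$ decision, and amplifying correctness — is routine.
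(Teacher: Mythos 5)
Your proposal is correct and matches the paper's proof essentially step for step: the small-$w$ algorithm (measure $O(w^{1/3})$ copies to learn a known subset, then amplitude-estimate its density inside $S$ using $\mathcal{R}_S$ alternated with a reflection about the known subset) is exactly the paper's collision-finding-style algorithm, and the large-$w$ algorithm (standard quantum counting with $\mathcal{R}_S$ substituted for the membership query inside $\mathrm{span}\{|S\>,|\bar S\>\}$) is exactly the paper's second algorithm. Your extra care about avoiding controlled reflections via a phase-estimation-free amplitude estimation variant is a valid resolution of a detail the paper leaves implicit.
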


\para{The Laurent polynomial method.}In our
view, at least as interesting as \Cref{thm:main} is the technique used to
achieve it. \ In 1998, Beals et al.\ \cite{bbcmw}\ famously observed that, if
a quantum algorithm $Q$\ makes $T$ queries to an input $X$, then $Q$'s
acceptance probability can be written as a real multilinear polynomial in
the bits of $X$, of degree at most $2T$. \ And thus, crucially, if we want
to \textit{rule out} a fast quantum algorithm to compute some function $%
f(X) $, then it suffices to show that any real polynomial $p$\
that approximates $f$\ pointwise must have high degree. \ This general
transformation, from questions about quantum algorithms to questions about
polynomials, has been used to prove many results that were not known
otherwise at the time, including the quantum lower bound for the collision
problem \cite{aar:col,as}\ and the first direct product theorems for quantum
search \cite{aar:adv,ksw}.

In our case, even in the simpler model with only queries and samples (and no
reflections), the difficulty is that the quantum algorithm starts with many
copies of the state $\left\vert S\right\rangle $. \ As a consequence of
this---and specifically, of the ${1}/\sqrt{\left\vert S\right\vert }$\
normalizing factor in $\left\vert S\right\rangle $---when we write the
average acceptance probability of our algorithm as a function of $\left\vert
S\right\vert $, we find that we get a \textit{Laurent polynomial}: a
polynomial that can contain both positive and negative integer powers of $%
\left\vert S\right\vert $. \ The degree of this polynomial (the highest
power of $\left\vert S\right\vert $) encodes the sum of the number of
queries, the number of copies of $\left\vert S\right\rangle $, and the
number of uses of $\mathcal{R}_S$, while the \textquotedblleft
anti-degree\textquotedblright\ (the highest power of ${\left\vert
S\right\vert^{-1} }$) encodes the sum of the number of copies of $\left\vert
S\right\rangle $ and number of uses of $\mathcal{R}_S$. 
This is described more precisely in \Cref{sec:Laurent}.
We're thus faced with the
task of lower-bounding the degree and the anti-degree of a Laurent
polynomial that's bounded in $[0,1]$ at integer points and that encodes the
approximate counting problem.

We then lower bound the degree of Laurent polynomials that approximate $%
\mathsf{ApxCount}_{N,w}$, showing that degree $\Omega \bigl(\min \bigl\{%
w^{1/3},\sqrt{{N}/{w}}\bigr\}\bigr)$ is necessary. \ We give two very different lower bound arguments. \ The first approach, which we call the
\textquotedblleft explosion argument,\textquotedblright\ is shorter but
yields suboptimal lower bounds, whereas the second approach using
\textquotedblleft dual polynomials\textquotedblright\ yields the optimal
lower bound.

There are two aspects of this that we find surprising: first, that Laurent
polynomials appear at all, and second, that they seem to appear in a
completely different way than they appear in our other result about $\mathsf{QMA}$ (\Cref{thm:sbqp}),
despite the close connection between the two statements. \ 
For \Cref{thm:main}, Laurent polynomials are needed just to describe the quantum
algorithm's acceptance probability, whereas for \Cref{thm:sbqp}, ordinary (bivariate)
polynomials sufficed to describe this probability; Laurent polynomials appeared only when we restricted a
bivariate polynomial to a hyperbola in the plane. \ 
In any case, the coincidence suggests that the \textquotedblleft Laurent polynomial
method\textquotedblright\ might be useful for other problems as well.\footnote{Since writing this, a third application of the Laurent polynomial method was discovered by the third author \cite{kretschmer}: a simple proof that the $\mathsf{AND}$-$\mathsf{OR}$ tree $\mathsf{AND}_m \circ \mathsf{OR}_n$ has approximate degree $\widetilde{\Omega}(\sqrt{mn})$.}

Before describing our techniques at a high level, observe that there are
\emph{rational} functions\footnote{A rational function of degree $d$ is of the form $\frac{p(x)}{q(x)}$, where $p$ and $q$ are both real polynomials of degree at most $d$.} of degree $O(\log (N/w))$ that approximate $%
\mathsf{ApxCount}_{N,w}$. \ This follows, for example, from Aaronson's $%
\mathsf{PostBQP}=\mathsf{PP}$ theorem \cite{aar:pp}, or alternately from
the classical result of Newman \cite{newman} that for any $k>0$, there is a
rational polynomial of degree $O(k)$ that pointwise approximates the sign
function on domain $[-n,-1]\cup \lbrack 1,n]$ to error $1-n^{-1/k}$. \ Thus,
our proof relies on the fact that Laurent polynomials are an extremely
special kind of rational function. 

We also remark that in the randomized classical setting, the complexity of $\ApxCount$ with queries and uniform (classical) samples is easily characterized without such powerful techniques. \ 
Either $O(N/w)$ queries or $O(\sqrt{w})$ samples are sufficient, and furthermore either $\Omega(N/w)$ queries or $\Omega(\sqrt{w})$ samples are necessary. \ 
For completeness, we provide a sketch of these bounds in \Cref{sec:classical_samples_queries}.

\para{Overview of the explosion argument.}
Our first proof (in \Cref{sec:explosion}) uses 
an \textquotedblleft explosion argument\textquotedblright\
that, as far as we know, is new in quantum query complexity. \ We separate
out the purely positive degree\footnote{%
Throughout this paper we allow any \textquotedblleft purely positive
degree\textquotedblright\ Laurent polynomial and any \textquotedblleft
purely negative degree\textquotedblright\ Laurent polynomial to include a
constant (degree zero) term.} and purely negative degree parts of our
Laurent polynomial as $q\left( \left\vert S\right\vert \right) =u\left(
\left\vert S\right\vert \right) +v({1}/{\left\vert S\right\vert })$, where $%
u $ and $v$ are ordinary polynomials. \ We then show that, if $u$ and $v$
both have low enough degree, namely $\deg \left( u\right) =o\bigl(\sqrt{{N}/{%
w}}\bigr)$ and $\deg \left( v\right) =o\left( w^{1/4}\right) $, then we get
\textquotedblleft unbounded growth\textquotedblright\ in their values. \
That is: for approximation theory reasons, either $u$ or $v$ must attain
large values, far outside of $\left[ 0,1\right] $, at some integer values of
$\left\vert S\right\vert $. \ But that means that, for $q$ itself to be
bounded in $\left[ 0,1\right] $\ (and thus represent a probability), the
other polynomial must \textit{also} attain large values. \ And that, in
turn, will force the first polynomial to attain even larger values, and so
on forever---thereby proving that these polynomials could not have existed.

\para{Overview of the method of dual polynomials.} 
Our second argument (in \Cref{sec:dualpoly}) obtains the (optimal) lower bound stated
in \Cref{thm:main}, via a novel adaptation of the so-called \emph{method of
dual polynomials}.

With this method, to lower-bound the approximate degree of a Boolean
function $f$, one exhibits an explicit \emph{dual polynomial} $\psi $ for $f$%
, which is a dual solution to a certain linear program. Roughly speaking, a
dual polynomial $\psi $ is a function mapping the domain of $f$ to $\mathbb{R%
}$ that is (a) uncorrelated with any polynomial of degree at most $d$, and
(b) well-correlated with $f$.

Approximating a univariate function $g$ via low-degree Laurent polynomials
is also captured by a linear program, but the linear program is more
complicated because Laurent polynomials can have negative-degree terms. \ We
analyze the value of this linear program in two steps.

In Step 1, we transform the linear program so that it refers only to
ordinary polynomials rather than Laurent polynomials. \ Although simple,
this transformation is crucial, as it lets us bring techniques developed for
ordinary polynomials to bear on our goal of proving Laurent polynomial
degree lower bounds.

In Step 2, we explicitly construct an optimal dual witness to the
transformed linear program from Step 1. \ We\ do so by first identifying two
weaker dual witnesses: $\psi _{1}$, which witnesses that \emph{ordinary} (i.e., purely positive degree)
polynomials encoding approximate counting require degree at least $\Omega
\bigl( \sqrt{N/w}\bigr) $, and $\psi _{2}$, which witnesses that
 purely negative degree polynomials
encoding approximate counting require degree $\Omega (w^{1/3})$. \ The first
witness is derived from prior work of Bun and Thaler \cite{bt13} (who refined earlier work of {\v{S}}palek \cite{spalek}), while the
second builds on a non-constructive argument of Zhandry \cite{zhandry}.

Finally, we show how to \textquotedblleft glue together\textquotedblright\ $%
\psi _{1}$ and $\psi _{2}$, to get a dual witness $\psi $ showing that any
general Laurent polynomial that encodes approximate counting must have
either positive degree $\Omega \bigl( \sqrt{N/w}\bigr) $ or negative degree
$\Omega (w^{1/3})$.

\para{Overview of the upper bound.}To recap, %
\Cref{thm:main} shows that any quantum algorithm for $\mathsf{ApxCount}%
_{N,w} $ needs either $\Theta (\sqrt{N/w})$ queries or $\Theta \bigl(\min %
\bigl\{w^{1/3},\sqrt{{N}/{w}}\bigr\}\bigr)$ samples and reflections. \ Since
we know from the work of Brassard, H{\o }yer, Tapp~\cite{bht:count} that the
problem can be solved with $O(\sqrt{N/w})$ queries alone, it remains only to
show the matching upper bound using samples and reflections, which we 
describe in \Cref{sec:upper}.

First we describe a simple algorithm that uses $O(\sqrt{N/w})$ samples and
reflections. If we take one copy of $\left\vert S\right\rangle $, and
perform a projective measurement onto $|\psi \>=\frac{1}{\sqrt{N}}\left(
\left\vert 1\right\rangle +\cdots +\left\vert N\right\rangle \right) $ or
its orthogonal complement, the measurement will succeed with probability $|%
\langle S|\psi \>|^{2}=\left\vert S\right\vert /N$. \ Thus $O(N/w)$ repetitions of this will allow us to distinguish the probabilities $w/N$
and $2w/N$. We can improve this by using amplitude amplification~\cite{BHMT02} and only make $O(\sqrt{N/w})$ repetitions. \

% Specifically, we can use the amplitude estimation algorithm~\cite{BHMT02} starting from the state $|\psi \>=\frac{1}{\sqrt{N}}\left(
% \left\vert 1\right\rangle +\cdots +\left\vert N\right\rangle \right)=\sqrt{\frac{|S|}{N}}|S\>+\sqrt{1-\frac{|S|}{N}}|\bar{S}\>$, along with reflections about $|S\>$ and $|\psi\>$ to approximate to constant precision the overlap $\sqrt{|S|/N}$ using only $O(\sqrt{N/w})$ reflections about $|S\>$.

Our second algorithm solves the problem with $O(w^{1/3})$ reflections and
samples and is based on the quantum collision-finding algorithm~\cite{BHT98}%
. \ We first use $O(w^{1/3})$ copies of $|S\>$ to learn $w^{1/3}$ distinct
elements in $S$. \ We now know a fraction of elements in $S$, and this
fraction is either $w^{-2/3}$ or $\frac{1}{2}w^{-2/3}$. \ We then use
amplitude amplification (or quantum counting) to distinguish these two cases,
which costs $O(w^{1/3})$ repetitions, where each repetition uses a
reflection about $|S\>$.

%=======================================================================
\section{Preliminaries}

\label{sec:prelim}

In this section we introduce some definitions and known facts about
polynomials and complexity classes.

%=======================================================================
\subsection{Approximation theory}
\label{s:approxprelim}
We will use several results from approximation theory,\ each of which has
previously been used (in some form) in other applications of the polynomial
method to quantum lower bounds. \ We start with the basic inequality of A.A.
Markov \cite{markov1890question}.

\begin{lemma}[Markov]
\label{markovlem}Let $p$\ be a real polynomial, and suppose that%
\begin{equation}
\max_{x,y\in\left[ a,b\right] }\left\vert p\left( x\right) -p\left( y\right)
\right\vert \leq H.
\end{equation}
Then for all $x\in\left[ a,b\right] $, we have
\begin{equation}
\left\vert p^{\prime}\left( x\right) \right\vert \leq\frac{H}{b-a}\deg\left(
p\right) ^{2},
\end{equation}
where $p^{\prime}(x)$ is the derivative of $p$ at $x$.
\end{lemma}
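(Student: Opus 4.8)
The plan is to reduce to the standard interval $[-1,1]$ and then invoke the classical Markov inequality there. First I would dispose of degenerate cases: if $p$ is constant on $[a,b]$ (equivalently, if its oscillation there is $0$) or $\deg(p)=0$, then $p'\equiv 0$ and the bound is trivial; so assume $H>0$ and $n:=\deg(p)\ge 1$, and set $\mathrm{osc}:=\max_{[a,b]}p-\min_{[a,b]}p\in(0,H]$. Then define $q\colon[-1,1]\to\mathbb{R}$ by $q(t)=\tfrac{2}{\mathrm{osc}}\bigl(p(\tfrac{b-a}{2}t+\tfrac{a+b}{2})-c\bigr)$, where $c$ is the midpoint of the range of $p$ on $[a,b]$. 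This $q$ is a real polynomial of degree $n$ with $\|q\|_{[-1,1]}\le 1$, and by the chain rule $|p'(x)|=\tfrac{\mathrm{osc}}{b-a}\,|q'(t)|\le\tfrac{H}{b-a}\,|q'(t)|$, where $x=\tfrac{b-a}{2}t+\tfrac{a+b}{2}$ ranges over $[a,b]$ as $t$ ranges over $[-1,1]$; this is exactly where the factor $\tfrac{H}{b-a}$ in the statement comes from. Hence it suffices to prove the Markov inequality on the canonical interval: $\deg(q)\le n$ and $\|q\|_{[-1,1]}\le 1$ imply $\|q'\|_{[-1,1]}\le n^2$.

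For that inequality I would first recall the extremal example showing $n^2$ is the right constant: the Chebyshev polynomial $T_n$, defined by $T_n(\cos\theta)=\cos n\theta$, has $\|T_n\|_{[-1,1]}=1$ yet $T_n'(\pm 1)=\pm n^2$. The proof then splits into two regimes. In the interior I would apply Bernstein's inequality: passing to the trigonometric polynomial $g(\theta):=q(\cos\theta)$ of degree $\le n$, the trigonometric Bernstein inequality gives $\|g'\|_\infty\le n\|g\|_\infty\le n$, and since $g'(\theta)=-\sin\theta\,q'(\cos\theta)$ this yields $\sqrt{1-x^2}\,|q'(x)|\le n$ on $[-1,1]$. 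In particular $|q'(x)|\le n^2$ already whenever $1-x^2\ge 1/n^2$.

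The main obstacle is the endpoint regime $|x|>\sqrt{1-1/n^2}$, where Bernstein's bound degenerates. One clean route is a direct comparison with $T_n$: writing $q$ in the Lagrange basis at the Chebyshev extrema $x_k=\cos(k\pi/n)$ as $q(x)=\sum_{k=0}^n q(x_k)\ell_k(x)$, one gets $|q'(x)|\le\sum_{k=0}^n|\ell_k'(x)|$ from $|q(x_k)|\le 1$, and it remains to verify that this majorant is at most $n^2$ on $[-1,1]$, which follows from the explicit form of the $\ell_k$ (they are, up to sign and normalization, $(1-x^2)T_n'(x)/(x-x_k)$) together with a sign argument showing equality is achieved by $\pm T_n$ at $x=\pm 1$. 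An alternative that gives the correct order $O(n^2)$ but not the sharp constant is a short bootstrap: inductively assuming the inequality in degree $n-1$ controls $\|q''\|$, so $q'$ stays within a constant factor of $\|q'\|$ on an interval of length $\Theta(1/n^2)$ around the point where $\|q'\|$ is attained, forcing $q$ to swing by $\Theta(\|q'\|/n^2)$, which must be $O(1)$ since $\|q\|\le 1$, hence $\|q'\|=O(n^2)$. I expect the sharp-constant endpoint analysis — the Chebyshev comparison and the sign bookkeeping needed to conclude $\sum_k|\ell_k'(x)|\le n^2$ — to be the delicate step; everything else is routine.
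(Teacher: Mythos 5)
The paper does not prove \Cref{markovlem}; it cites Markov's original 1890 work and uses the inequality as a black box, as is customary in the approximate-degree literature. So there is nothing in the paper to compare your argument against step by step. That said, your proposal is a correct account of one of the two standard textbook proofs (the other being Markov's original, more elementary argument via a careful count of sign changes). The reduction to $[-1,1]$ is clean and the chain-rule accounting of the factor $\tfrac{H}{b-a}$ is exactly right; the interior estimate via Bernstein's trigonometric inequality is standard and correct; and the Lagrange-basis-at-Chebyshev-extrema comparison is the right tool for the endpoint regime where Bernstein degenerates. One small imprecision: the majorant $\sum_{k=0}^n|\ell_k'(x)|$ is not bounded by $n^2$ on all of $[-1,1]$---the sign-alignment that lets you collapse it to $|T_n'(x)|$ only holds outside the interval $(\cos(\pi/n),\,-\cos(\pi/n))$-ish, i.e.\ precisely in the endpoint regime you have already carved out. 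As written, your sentence ``it remains to verify that this majorant is at most $n^2$ on $[-1,1]$'' overstates where the Lagrange bound applies; the two-regime split you set up is genuinely needed and you should keep it, invoking the Lagrange/Chebyshev comparison only for $|x|>\cos(\pi/n)$ (or any threshold past the last Chebyshev extremum), with Bernstein covering the rest. With that adjustment the argument is complete and gives the sharp constant.
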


We'll also need a bound that was explicitly stated by Paturi \cite{paturi},
and which amounts to the fact that, among all degree-$d$
polynomials that are bounded within a given range, the Chebyshev polynomials
have the fastest growth outside that range.

\begin{lemma}[Paturi]
\label{paturilem}Let $p$\ be a real polynomial, and suppose that $\left\vert
p\left( x\right) \right\vert \leq1$\ for all $\left\vert x\right\vert \leq1$%
. \ Then for all $x\leq1+\mu$, we have%
\begin{equation}
\left\vert p\left( x\right) \right\vert \leq\exp\left( 2\deg\left( p\right)
\sqrt{2\mu+\mu^{2}}\right) .
\end{equation}
\end{lemma}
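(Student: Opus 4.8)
The plan is to deduce this from the classical extremal property of Chebyshev polynomials of the first kind, which is precisely the fact quoted informally in the paper: among degree-$d$ polynomials bounded by $1$ on $[-1,1]$, $T_d$ grows fastest outside $[-1,1]$. Write $d=\deg(p)$. The claim is trivial for $\lvert x\rvert\le 1$ (the right-hand side is at least $1$), and applying the result to $x\mapsto p(-x)$ handles negative $x\in[-1-\mu,-1]$, so it suffices to prove (i) the extremal inequality $\lvert p(x)\rvert\le \lvert T_d(x)\rvert$ for $1\le x\le 1+\mu$, and then (ii) the numeric bound $T_d(1+\mu)\le\exp\!\bigl(2d\sqrt{2\mu+\mu^2}\bigr)$. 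I expect step (i) — the sign-alternation/root-counting argument — to be the only genuinely delicate point; everything else is elementary manipulation.

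For (i), suppose toward a contradiction that $\lvert p(y)\rvert>\lvert T_d(y)\rvert$ for some $y\in(1,1+\mu]$, noting $T_d(y)\neq 0$ since all zeros of $T_d$ lie in $(-1,1)$. Put $\lambda:=p(y)/T_d(y)$, so $\lvert\lambda\rvert>1$, and consider $g:=\lambda T_d-p$, a polynomial of degree at most $d$ with $g(y)=0$. Let $x_0>x_1>\dots>x_d$ be the extrema of $T_d$ in $[-1,1]$, where $T_d(x_j)=(-1)^j$. Then $g(x_j)=\lambda(-1)^j-p(x_j)$, and since $\lvert p(x_j)\rvert\le 1<\lvert\lambda\rvert$, the sign of $g(x_j)$ equals the sign of $\lambda(-1)^j$, in particular $g(x_j)\neq 0$. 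Thus $g$ changes sign on each of the $d$ disjoint intervals $(x_{j+1},x_j)$, yielding $d$ distinct roots in $(-1,1)$; together with the root at $y\notin[-1,1]$ this gives at least $d+1$ distinct roots. But $g\not\equiv 0$ (otherwise $p=\lambda T_d$ would satisfy $\lvert p(x_0)\rvert=\lvert\lambda\rvert>1$, contradicting $\lvert p\rvert\le 1$ on $[-1,1]$), and $\deg g\le d$, a contradiction. Hence $\lvert p(x)\rvert\le\lvert T_d(x)\rvert$ for all $\lvert x\rvert>1$.

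For (ii), use the identity $T_d(\cosh\theta)=\cosh(d\theta)$ (it satisfies the same three-term recurrence and initial data as $T_d(\cos\theta)=\cos d\theta$). Writing $x=\tfrac12(u+u^{-1})$ with $u=x+\sqrt{x^2-1}\ge 1$ for $x\ge 1$ gives
\begin{equation}
T_d(x)=\tfrac12\left((x+\sqrt{x^2-1})^d+(x-\sqrt{x^2-1})^d\right)\le (x+\sqrt{x^2-1})^d,
\end{equation}
since $0<x-\sqrt{x^2-1}=(x+\sqrt{x^2-1})^{-1}\le 1$. Substituting $x=1+\mu$, so $x^2-1=2\mu+\mu^2$, and using $\mu\le\sqrt{2\mu+\mu^2}$ together with $1+t\le e^t$ at $t=2\sqrt{2\mu+\mu^2}$,
\begin{equation}
1+\mu+\sqrt{2\mu+\mu^2}\le 1+2\sqrt{2\mu+\mu^2}\le\exp\left(2\sqrt{2\mu+\mu^2}\right),
\end{equation}
so $T_d(1+\mu)\le\exp\!\bigl(2d\sqrt{2\mu+\mu^2}\bigr)$. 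Finally, $T_d$ is increasing on $[1,\infty)$ (as $\cosh(d\theta)$ increases with $\cosh\theta$), so for $1\le x\le 1+\mu$ we get $\lvert p(x)\rvert\le\lvert T_d(x)\rvert=T_d(x)\le T_d(1+\mu)\le\exp\!\bigl(2d\sqrt{2\mu+\mu^2}\bigr)$, which with the trivial cases proves the lemma.
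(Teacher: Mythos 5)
Your proof is correct. The paper states this lemma without proof, citing Paturi, so there is no in-paper argument to compare against; what you have supplied is the standard derivation — the Chebyshev comparison theorem via sign alternation at the $d+1$ extrema of $T_d$, followed by the closed form $T_d(x)=\tfrac12\bigl((x+\sqrt{x^2-1})^d+(x-\sqrt{x^2-1})^d\bigr)$ and the elementary estimate $1+\mu+\sqrt{2\mu+\mu^2}\le e^{2\sqrt{2\mu+\mu^2}}$ — and every step, including the handling of the trivial range $|x|\le 1$ and the reflection for negative $x$, checks out.
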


We now state a useful corollary of \Cref{paturilem}, which says (in effect)
that slightly shrinking the domain of a low-degree real polynomial can only
modestly shrink its range.

\begin{corollary}
\label{paturicor}Let $p$\ be a real polynomial of degree $d$, and suppose
that%
\begin{equation}
\max_{x,y\in\left[ a,b\right] }\left\vert p\left( x\right) -p\left( y\right)
\right\vert \geq H.
\end{equation}
Let $\varepsilon\leq\frac{1}{100d^{2}}$\ and $a^{\prime}:=a+\varepsilon%
\left( b-a\right) $. \ Then%
\begin{equation}
\max_{x,y\in\left[ a^{\prime},b\right] }\left\vert p\left( x\right) -p\left(
y\right) \right\vert \geq\frac{H}{2}.
\end{equation}
\end{corollary}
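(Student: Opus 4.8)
The plan is to reduce to Paturi's lemma (\Cref{paturilem}) by an affine change of variables. First I would normalize: let $L := \max_{x,y \in [a,b]} |p(x) - p(y)|$, so $L \geq H > 0$ by hypothesis. Pick points $x_0, y_0 \in [a,b]$ achieving (or nearly achieving) this maximal oscillation, and define the rescaled polynomial $\tilde p(z) := \frac{2}{L}\bigl(p(\phi(z)) - c\bigr)$, where $\phi$ is the affine map sending $[-1,1]$ onto $[a,b]$ and $c$ is the midpoint of the range of $p$ on $[a,b]$; then $|\tilde p(z)| \leq 1$ for all $|z| \leq 1$, and $\tilde p$ has the same degree $d$ as $p$. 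The key point is that the shrunken interval $[a', b]$ corresponds under $\phi^{-1}$ to $[-1 + 2\varepsilon, 1]$, so I want to bound how much $\tilde p$ can shrink when we chop off the sub-interval $[-1, -1+2\varepsilon]$ from the left end.

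The main step is the following claim: if a degree-$d$ polynomial is bounded by $1$ in absolute value on $[-1,1]$, then on $[-1+2\varepsilon, 1]$ its oscillation is still at least, say, $1/2$ (in fact we get oscillation close to $2$, but $1/2$ suffices after unwinding the normalization — actually, to land exactly on the factor $H/2$ in the statement, I should track constants so the final oscillation of $p$ on $[a',b]$ is $\geq L/2 \geq H/2$). Suppose not: suppose $\tilde p$ oscillates by less than $\delta$ on $[-1+2\varepsilon, 1]$. Then I can subtract off a constant so that $|\tilde p(z)| \leq \delta/2$ for all $z \in [-1+2\varepsilon, 1]$, i.e.\ $|\tilde p| \leq \delta/2$ on an interval of length $2 - 2\varepsilon$. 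Reparametrizing that interval to $[-1,1]$ and applying \Cref{paturilem} to extend the bound to the left endpoint $z = -1$: the point $-1$ lies at parameter $1 + \mu$ outside the rescaled interval where $\mu = \frac{2\varepsilon}{2 - 2\varepsilon} = \frac{\varepsilon}{1-\varepsilon} \leq 2\varepsilon$ (using $\varepsilon \le 1/2$), so \Cref{paturilem} gives $|\tilde p(-1)| \leq \frac{\delta}{2}\exp\bigl(2d\sqrt{2\mu + \mu^2}\bigr) \leq \frac{\delta}{2}\exp\bigl(2d\sqrt{6\varepsilon}\bigr)$. Since $\varepsilon \leq \frac{1}{100 d^2}$, we have $2d\sqrt{6\varepsilon} \leq 2\sqrt{6}/10 < 1/2$, so the exponential factor is at most $e^{1/2} < 2$, whence $|\tilde p(-1)| < \delta$. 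Combined with the same bound for $z=1 \in [-1+2\varepsilon, 1]$ already shifted to be $\leq \delta/2$, the oscillation of $\tilde p$ over all of $[-1,1]$ is less than $2\delta$ — and since that oscillation equals $2$ by construction, we get $\delta > 1$. Contrapositively, the oscillation of $\tilde p$ on $[-1+2\varepsilon,1]$ is at least $1$, which after undoing the normalization means the oscillation of $p$ on $[a',b]$ is at least $L/2 \geq H/2$.

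I expect the main obstacle to be bookkeeping the constants so that the exponential blow-up from \Cref{paturilem} is genuinely bounded below $2$ under the hypothesis $\varepsilon \leq \frac{1}{100 d^2}$, and handling the edge case $d = 0$ separately (where the statement is vacuous or trivial since a constant polynomial has oscillation $0$, forcing $H \le 0$). A secondary subtlety: the endpoint-shift argument requires choosing where to re-center $\tilde p$ so that $|\tilde p| \le \delta/2$ on the long interval before invoking Paturi; this is fine since oscillation $<\delta$ on a set means the values lie in an interval of length $<\delta$, which we can translate to be symmetric about $0$. No deeper difficulty arises — the whole corollary is a robustness wrapper around \Cref{paturilem}.
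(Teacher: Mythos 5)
Your proposal is correct and follows essentially the same route as the paper: re-center $p$ so that it is small on $[a',b]$, rescale to $[-1,1]$, and invoke Paturi's lemma (\Cref{paturilem}) to control the blow-up when extending back to $[a,b]$, deriving a contradiction. (Minor bookkeeping: the correct overshoot parameter when $[a',b]$ is normalized to $[-1,1]$ is $\mu = \frac{2\varepsilon}{1-\varepsilon}$ rather than $\frac{\varepsilon}{1-\varepsilon}$, but the slack in $\varepsilon \le \frac{1}{100d^2}$ absorbs this factor of $2$ and the conclusion is unaffected.)
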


\begin{proof}
Suppose by contradiction that%
\begin{equation}
\left\vert p\left(  x\right)  -p\left(  y\right)  \right\vert <\frac{H}{2}%
\end{equation}
for all $x,y\in\left[  a^{\prime},b\right]  $. \ By affine shifts, we can
assume without loss of generality that $\left\vert p\left(  x\right)
\right\vert <\frac{H}{4}$\ for all $x\in\left[  a^{\prime},b\right]  $. \ Then
by \Cref{paturilem}, for all $x\in\left[  a,b\right]  $\ we have%
\begin{equation}
\left\vert p\left(  x\right)  \right\vert <\frac{H}{4}\cdot\exp\left(
2d\sqrt{2\left(  \frac{1}{1-\varepsilon}-1\right)  +\left(  \frac
{1}{1-\varepsilon}-1\right)  ^{2}}\right)  \leq\frac{H}{2}.
\end{equation}
But this violates the hypothesis.
\end{proof}

We will also need a bound that relates the range of a low-degree polynomial
on a discrete set of points to its range on a continuous interval. \ The
following lemma generalizes a result due to Ehlich and Zeller \cite{ez} and
Rivlin and Cheney \cite{rc}, who were interested only in the case where the
discrete points are evenly spaced.

\begin{lemma}
\label{ezrclem}Let $p$\ be a real polynomial of degree at most $\sqrt{k}$,
and let $0=z_{1}<\cdots<z_{M}=k$\ be a list of points such that $%
z_{i+1}-z_{i}\leq1$\ for all $i$ (the simplest example being the integers $%
0,\ldots,k$). \ Suppose that%
\begin{equation}
\max_{x,y\in\left[ 0,k\right] }\left\vert p\left( x\right) -p\left( y\right)
\right\vert \geq H.
\end{equation}
Then%
\begin{equation}
\max_{i,j}\left\vert p\left( z_{i}\right) -p\left( z_{j}\right) \right\vert
\geq\frac{H}{2}.
\end{equation}
\end{lemma}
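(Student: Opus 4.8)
The plan is to reduce to the equally-spaced case of Ehlich--Zeller / Rivlin--Cheney via a Markov-inequality argument, exactly as one does in the classical proof, but accounting for the possibly uneven spacing of the $z_i$. First I would normalize: by an affine shift assume $\max_{x,y\in[0,k]}|p(x)-p(y)|=H$ is attained, and in fact (shifting $p$ by a constant) assume $\|p\|_{[0,k],\infty}\ge H/2$, i.e.\ there is a point $x^\star\in[0,k]$ with $|p(x^\star)|\ge H/2$. Now suppose for contradiction that $\max_{i,j}|p(z_i)-p(z_j)|< H/2$; after another constant shift this says $|p(z_i)|< H/4$ for every grid point $z_i$. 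The point $x^\star$ lies in some subinterval $[z_i,z_{i+1}]$ of length $\le 1$, so $|x^\star-z_i|\le 1$, and by the fundamental theorem of calculus $|p(x^\star)-p(z_i)|\le \max_{[z_i,z_{i+1}]}|p'|\le \max_{[0,k]}|p'|$.

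The key step is to bound $\max_{[0,k]}|p'|$ by Markov's inequality (\Cref{markovlem}). To apply it I need a bound $H'$ on the oscillation of $p$ on all of $[0,k]$, not just on the grid; but a priori that oscillation could be as large as $H$. The trick is that if the oscillation of $p$ on $[0,k]$ is some value $H'\ge H$, then Markov gives $\max_{[0,k]}|p'|\le \frac{H'}{k}\deg(p)^2\le \frac{H'}{k}\cdot k = H'$ using $\deg(p)\le\sqrt k$. Then for the point $\bar x\in[0,k]$ realizing the oscillation $H'$ (say $p(\bar x)$ is extreme and $\bar x$ is within distance $1$ of a grid point $z_j$), we get $|p(\bar x)|\le |p(z_j)|+\max_{[0,k]}|p'|< H/4 + H'$, and taking the max over the two extreme points, $H'\le H/2 + 2\cdot(\text{something})$ — I need to set this up as a self-improving/fixed-point estimate. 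Concretely: let $H' := \max_{x,y\in[0,k]}|p(x)-p(y)|$; pick $x_{\max},x_{\min}$ achieving it; each is within distance $1$ of a grid point; so $H' = p(x_{\max})-p(x_{\min}) \le (p(z)+\max|p'|) - (p(z')-\max|p'|) \le \frac{H}{4}-(-\frac H4) + 2\max_{[0,k]}|p'| = \frac H2 + 2\cdot\frac{H'}{k}\deg(p)^2 \le \frac H2 + 2H'\cdot\frac{k}{k}$. That last bound is too weak ($2H'$ swamps $H'$), so I must be more careful: I should keep the $\deg(p)^2\le k$ bound but note Markov actually gives $\frac{H'}{k}\deg(p)^2$ with room to spare only if $\deg(p)$ is somewhat below $\sqrt k$. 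The honest version, following the literature, is to prove the contrapositive quantitatively: assuming $\max_{i,j}|p(z_i)-p(z_j)|<H/2$, one shows by this argument that $\deg(p)>c\sqrt k$ for an absolute constant $c$; the lemma as stated presumably tolerates a constant in the degree hypothesis, or one re-examines the constants. So the main obstacle — and the step I would spend the most care on — is making the Markov bootstrap close with the correct constant: one iterates the inequality $H' \le \frac H2 + \frac{2\deg(p)^2}{k}H'$, which for $\deg(p)^2 \le k/4$ (say) gives $H'\le \frac H2 + \frac12 H'$, hence $H'\le H$, which is consistent but not yet a contradiction; the contradiction comes from pushing further, obtaining $H'< H$ strictly, then relocating $x^\star$ and iterating to drive $|p(x^\star)|$ below $H/2$, contradicting $|p(x^\star)|\ge H/2$.

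Let me restate the clean line of attack I would actually write down. Set $L:=\max_{x\in[0,k]}|p'(x)|$. By Markov and $\deg(p)\le\sqrt k$, $L\le \frac{H''}{k}\cdot k = H''$ where $H'':=\max_{x,y\in[0,k]}|p(x)-p(y)|$; and trivially $H''\le H$ is false — rather $H''\ge H$ by hypothesis. Since every $x\in[0,k]$ is within distance $1$ of some $z_i$, we get $\max_{[0,k]}|p| \le \max_i|p(z_i)| + L$ and hence $H'' \le 2\max_i|p(z_i)| + 2L \le 2\max_i|p(z_i)| + 2H''\cdot\frac{\deg(p)^2}{k}$. If $\deg(p)^2\le k/4$ this yields $H''\le 4\max_i|p(z_i)|$, so $\max_i|p(z_i)|\ge H''/4\ge H/4$, whence $\max_{i,j}|p(z_i)-p(z_j)|\ge \max_i|p(z_i)| \ge H/4$ after centering — to get the factor $H/2$ one uses that centering $p$ makes $\max_i |p(z_i)| \ge \frac12\max_{i,j}|p(z_i)-p(z_j)|$ only up to a factor, so one recovers $\max_{i,j}|p(z_i)-p(z_j)|\ge H/2$ exactly by tracking that $H''=\max_{i,j}(p(x_{\max})-p(x_{\min}))$ and doing the two-sided estimate symmetrically rather than through a one-sided sup-norm. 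The generalization beyond evenly-spaced points is automatic here because the only property of the grid used is $z_{i+1}-z_i\le 1$, which bounds the distance from any $x\in[0,k]$ to the nearest $z_i$ by $1$; so no new idea is needed relative to Ehlich--Zeller, only the observation that their proof never used equal spacing.
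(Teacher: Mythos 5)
Your attempt has the right ingredients --- Markov's inequality plus the spacing bound $z_{i+1}-z_i\le 1$, and the correct observation that the Ehlich--Zeller argument never actually uses equal spacing --- but as you acknowledge yourself, the bootstrap does not close: you end up needing $\deg(p)^2\le k/4$ (rather than $\le k$) and obtaining $H/4$ (rather than $H/2$), and then hand-wave about ``doing the two-sided estimate symmetrically.'' That is a genuine gap. Two factors of two are lost. First, you use ``every $x\in[0,k]$ is within distance $1$ of a grid point''; but $x\in[z_i,z_{i+1}]$ is in fact within distance $(z_{i+1}-z_i)/2\le 1/2$ of one of the two endpoints. Second, you pass through the sup norm $\max_{[0,k]}|p|$, costing a factor $2$ via $H''\le 2\max_{[0,k]}|p|$, rather than directly bounding the oscillation $H'':=p(x_{\max})-p(x_{\min})$ by comparing each extremum to its nearest grid point; doing so gives $H''<H/2+L$ with $L:=\max_{[0,k]}|p'|$, hence $H''<H/2+H''\deg(p)^2/k$ by Markov, and so $H''<H$ once $\deg(p)^2\le k/2$ --- a cleaner closing, but still a factor short of the hypothesized $\deg(p)\le\sqrt k$.

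The paper closes the argument by reorganizing the contradiction around a ratio, and by getting the derivative lower bound from the mean value theorem rather than from integrating outward from a grid point. After centering so that $|p(z_i)|<H/4$ for all $i$, it sets $c:=\max_{x\in[0,k]}|p(x)|/(H/4)$, notes $c>1$ (else done), and shows $c<2$, which forces $\max_{[0,k]}|p|<H/2$ and hence $\max_{x,y\in[0,k]}|p(x)-p(y)|<H$, contradicting the hypothesis. The key step: since $c>1$, the extremum $|p(x_0)|=cH/4$ lies strictly inside some $[z_i,z_{i+1}]$, so the value drops from $cH/4$ to $<H/4$ over a distance at most $(z_{i+1}-z_i)/2\le 1/2$, yielding $|p'(x^*)|>(c-1)H/2$ for some $x^*$. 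Pairing this with the Markov upper bound and solving the resulting scalar inequality for $c$ gives $c<\frac{2k}{2k-\deg(p)^2}\le 2$ precisely under $\deg(p)^2\le k$. So the ideas you are missing are the ratio-in-$c$ reformulation and the mean-value step that exploits the half-spacing; together these are what let the constant land exactly on $H/2$ under the stated degree hypothesis.
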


\begin{proof}
Suppose by contradiction that%
\begin{equation}
\left\vert p\left(  z_{i}\right)  -p\left(  z_{j}\right)  \right\vert
<\frac{H}{2}%
\end{equation}
for all $i,j$. \ By affine shifts, we can assume without loss of generality
that $\left\vert p\left(  z_{i}\right)  \right\vert <\frac{H}{4}$\ for all
$i$. \ Let%
\begin{equation}
c:=\max_{x\in\left[  0,k\right]  }\frac{\left\vert p\left(  x\right)
\right\vert }{H/4}.
\end{equation}
If $c\leq1$, then the hypothesis clearly fails, so assume $c>1$. \ Suppose
that the maximum, $\left\vert p\left(  x\right)  \right\vert =\frac{cH}{4}$,
is achieved between $z_{i}$\ and $z_{i+1}$. \ Then by basic calculus, there
exists an $x^{\ast}\in\left[  z_{i},z_{i+1}\right]  $\ such that%
\begin{equation}
\left\vert p^{\prime}\left(  x^{\ast}\right)  \right\vert >\frac{2\left(
c-1\right)  }{z_{i+1}-z_{i}}\cdot\frac{H}{4}\geq\frac{\left(  c-1\right)
H}{2}.
\end{equation}
So by \Cref{markovlem},%
\begin{equation}
\frac{\left(  c-1\right)  H}{2}<\frac{cH/4}{k}\deg\left(  p\right)  ^{2}.
\end{equation}
Solving for $c$, we find%
\begin{equation}
c<\frac{2k}{2k-\deg\left(  p\right)  ^{2}}\leq2.
\end{equation}
But if $c<2$, then $\max_{x\in\left[  0,k\right]  }\left\vert p\left(
x\right)  \right\vert <\frac{H}{2}$, which violates the hypothesis.
\end{proof}

We also use a related inequality due to Coppersmith and Rivlin~\cite%
{coppersmith-rivlin} that bounds a polynomial on a continuous interval in
terms of a bound on a discrete set of points, but now with the weaker
assumption that the degree is at most $k$, rather than $\sqrt{k}$. \ This
gives a substantially weaker bound.

\begin{lemma}[Coppersmith and Rivlin]
\label{lem:coppersmith_rivlin} Let $p$ be a real polynomial of degree at
most $k$, and suppose that $\left\vert p(x)\right\vert \leq 1$ for all
integers $x\in \{0,1,\ldots ,k\}$. \ Then there exist universal constants $%
a,b$ such that for all $x\in \lbrack 0,k]$, we have
\begin{equation}
\left\vert p(x)\right\vert \leq a\cdot \exp \left( b\,\deg (p)^{2}/k\right) .
\end{equation}
\end{lemma}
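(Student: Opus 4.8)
This is a classical inequality of Coppersmith and Rivlin \cite{coppersmith-rivlin}; here is the structure of a proof. Since the statement claims \emph{universal} constants $a,b$, we may assume $k$ exceeds any fixed constant: for bounded $k$ the set of admissible polynomials $p$ (degree at most $k$ with $|p(j)|\le 1$ for all $j\in\{0,\dots,k\}$) is compact and $\max_{x\in[0,k]}|p(x)|$ depends continuously on the coefficient vector, so enlarging $a$ absorbs these cases. Write $d:=\deg(p)$; the argument splits at $d\approx\sqrt k$. When $d\le c\sqrt k$ for a fixed $c<1$ the exponential term does no work at all: \Cref{ezrclem}, whose proof goes through verbatim for any such $d$ (not merely $d\le\sqrt k$), applied on $[0,k]$ with the unit grid $0,1,\dots,k$, gives $\max_{x,y\in[0,k]}|p(x)-p(y)|\le 2\max_{i,j}|p(z_i)-p(z_j)|\le 4$, so with $|p(0)|\le1$ we get $|p(x)|\le 5$ throughout $[0,k]$. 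It therefore remains to treat $c\sqrt k<d\le k$, where $d^2/k=\Omega(1)$ and the exponential is genuinely present.

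For $d$ in this range, fix the point $x^\star\in(m,m{+}1)$ at which $|p|$ attains its maximum over $[0,k]$, and bound $M:=\sup\{|q(x^\star)|:\deg q\le d,\ |q(j)|\le1\ \forall j\in\{0,\dots,k\}\}$. The plan is to write $p$ as the Lagrange interpolant through a set $T=\{t_0<\dots<t_d\}$ of $d{+}1$ integers chosen from $\{0,\dots,k\}$ --- which is exact since $\deg p\le d$ --- giving $|p(x^\star)|\le\sum_i\prod_{j\ne i}\frac{|x^\star-t_j|}{|t_i-t_j|}$, and then to choose $T$ so as to approximate, as closely as the unit mesh permits, the Chebyshev nodes of a subinterval $I\ni x^\star$ of a length $L$ that is then optimized; estimating the resulting sum at the optimal $L$ should yield $M=O(\exp(O(d^2/k)))$. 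A second route to the same bound passes to the reproducing kernel $K_d(x,j)=\sum_{\nu=0}^d\phi_\nu(x)\phi_\nu(j)$ of the uniform discrete inner product on $\{0,\dots,k\}$, where the $\phi_\nu$ are the associated orthonormal (discrete Chebyshev) polynomials: then $p(x)=\sum_j p(j)K_d(x,j)$ forces $|p(x)|\le\sqrt{k{+}1}\,\sum_{\nu=0}^d|\phi_\nu(x)|$, and one bounds these polynomials pointwise on $[0,k]$ via their explicit hypergeometric form and classical uniform asymptotics. Either route, carried out carefully, gives $M\le a\exp(bd^2/k)$ for suitable universal $a,b$, and combining with the previous regime completes the proof.

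The hard part is \emph{quantitative}: obtaining the correct exponent $d^2/k$, rather than the crude $\exp(O(d))$ that Lagrange interpolation through any $d{+}1$ \emph{nearby} integers produces, requires exploiting that $p$ is bounded at \emph{all} $k{+}1$ grid points, not merely at $d{+}1$ of them, while keeping control of $p$ at non-integer arguments --- exactly the structure that a plain Lebesgue-constant bound or Markov-type argument each fails to use fully. The delicate window is $\sqrt k\ll d\ll k$: there the grid is too coarse to pin a degree-$d$ polynomial down to within a constant (so \Cref{ezrclem} is unavailable), yet one is still far from $d\approx k$, where the Lebesgue constant of the full $(k{+}1)$-point grid already supplies a matching answer. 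Coppersmith and Rivlin handle this window by analyzing the extremal polynomial directly --- morally a Chebyshev polynomial oscillating within $[-1,1]$ on a grid-aligned subinterval whose overshoot is localized to the unit cells adjacent to $0$ and $k$ --- and optimizing the length of that subinterval so that the far-away grid constraints are exactly met is what produces the $\exp(\Theta(d^2/k))$ growth; a matching construction shows this exponent is tight rather than an artifact of the proof.
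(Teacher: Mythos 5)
The paper does not prove this lemma at all: it is imported verbatim as a known result of Coppersmith and Rivlin \cite{coppersmith-rivlin}, so there is no in-paper argument to compare against, and simply citing that reference is the intended "proof." Your handling of the preliminaries is fine --- the compactness reduction for bounded $k$, and the regime $d\le c\sqrt{k}$ via \Cref{ezrclem} (whose hypothesis $\deg(p)\le\sqrt{k}$ is satisfied there, so no extension of its proof is even needed), giving $|p(x)|\le 5$ on all of $[0,k]$.

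The gap is that the entire content of the lemma lives in the regime $c\sqrt{k}<d\le k$, and there your write-up asserts rather than establishes the bound: "estimating the resulting sum at the optimal $L$ \emph{should} yield" and "either route, \emph{carried out carefully}, gives" are placeholders for exactly the quantitative step you yourself identify as the hard part. Worse, both routes as described hit concrete obstacles. For route (a), the Chebyshev nodes of an interval $I$ of length $L$ cluster near its endpoints with spacing $\Theta(L/d^2)$; to realize them by \emph{distinct integers} you need $L\gtrsim d^2$, which for $d\gg\sqrt{k}$ forces $L\gg k$ and is impossible, so the node set must be chosen by a genuinely different (non-Chebyshev) rule and the resulting Lagrange sum re-estimated from scratch --- using naive nearby integers instead gives only $\exp(O(d))$, as you note. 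For route (b), the crude bound $|p(x)|\le\sqrt{k+1}\sum_{\nu\le d}|\phi_\nu(x)|$ carries polynomial-in-$k$ prefactors that cannot be absorbed into $\exp(O(d^2/k))$ throughout the window $c\sqrt{k}<d\ll\sqrt{k\log k}$, so one needs sharper, uniform pointwise asymptotics for the discrete Chebyshev polynomials than the sketch supplies. As a self-contained proof the proposal therefore does not go through; as a pointer to \cite{coppersmith-rivlin} it is correct, which is all the paper itself does.
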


%=======================================================================
\subsection{Symmetric polynomials}

\para{Univariate symmetrizations.}
Our starting point is the well-known \textit{symmetrization lemma} of Minsky
and Papert \cite{mp} (see also Beals et al.\ \cite{bbcmw}\ for its
application to quantum query complexity), by which we can often reduce
questions about multivariate polynomials to questions about univariate ones.

\begin{lemma}[Minsky--Papert symmetrization]
\label{symlem} Let $p:\left\{ 0,1\right\} ^{N}\rightarrow\mathbb{R}$\ be a
real multilinear polynomial of degree $d$, and let $q:\{0,1,\ldots,N\}\to
\mathbb{R}$ be defined as
\begin{equation}
q\left( k\right) :=\E_{\left\vert X\right\vert =k}\left[ p\left( X\right) %
\right] .
\end{equation}
Then $q$ can be written as a real polynomial in $k$ of degree at most $d$.
\end{lemma}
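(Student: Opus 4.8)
The plan is to prove the Minsky--Papert symmetrization lemma by the classical averaging-over-permutations argument. First I would observe that since $p$ is multilinear, $p(X) = \sum_{S \subseteq [N]} c_S \prod_{i \in S} x_i$, and for a $0$-$1$ input $X$ the monomial $\prod_{i \in S} x_i$ is just the indicator that $S \subseteq \mathrm{supp}(X)$. The key quantity is the symmetrized polynomial $\tilde{p}(X) := \E_{\pi \in S_N}[p(\pi X)]$, where $\pi$ permutes coordinates; this is a symmetric multilinear polynomial of degree at most $d$, and crucially $\tilde{p}(X)$ depends only on $|X|$, so $\tilde{p}(X) = q(|X|)$ exactly matches the definition of $q$ in the statement (averaging over all $X$ of a fixed Hamming weight is the same as averaging over all permutations of a fixed representative).

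Next I would compute $\tilde{p}$ explicitly in terms of elementary symmetric polynomials. Grouping monomials of $p$ by size, write $p = \sum_{j=0}^{d} \sum_{|S|=j} c_S \prod_{i\in S} x_i$; averaging over $\pi \in S_N$ replaces $\sum_{|S|=j} c_S$ by its average, so $\tilde{p}(X) = \sum_{j=0}^{d} a_j \, e_j(X)$ for constants $a_j$, where $e_j$ is the $j$-th elementary symmetric polynomial. On a $0$-$1$ input of weight $k$, we have $e_j(X) = \binom{k}{j}$, which is a polynomial in $k$ of degree exactly $j \le d$. Therefore $q(k) = \sum_{j=0}^{d} a_j \binom{k}{j}$ is a polynomial in $k$ of degree at most $d$, as claimed.

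The one point requiring a little care — and the only place the argument could slip — is the identification $\E_{\pi}[p(\pi X)] = \E_{|Y|=k}[p(Y)]$ for $|X| = k$: this holds because the action of $S_N$ on the weight-$k$ slice of the cube is transitive, so averaging $p(\pi X)$ over $\pi$ is the same as averaging $p(Y)$ over uniformly random $Y$ of weight $k$. Once that is in hand, everything else is the routine computation above. I do not expect any real obstacle; the lemma is classical and the proof is short, so I would present it in the order: (1) reduce to the symmetrized polynomial via transitivity; (2) expand in elementary symmetric polynomials; (3) evaluate $e_j$ at a weight-$k$ input to get $\binom{k}{j}$ and conclude the degree bound.
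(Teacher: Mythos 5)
The paper does not actually include a proof of \Cref{symlem}; it is stated as a classical fact with citations to Minsky--Papert and Beals et al. Your proof is correct and is precisely the standard argument those sources give: average $p$ over the coordinate action of $S_N$ (noting that transitivity on each Hamming-weight slice makes this the same as averaging over all weight-$k$ inputs), observe that the symmetrized polynomial is a linear combination $\sum_{j\le d} a_j e_j$ of elementary symmetric polynomials, and evaluate $e_j$ on a weight-$k$ Boolean vector to get $\binom{k}{j}$, a degree-$j$ polynomial in $k$. There is nothing to compare against in the paper, but the proposal is a complete and correct proof of the cited lemma.
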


We now introduce a different, lesser known
notion of symmetrization, %due to Servedio, Tan, and Thaler~\cite{STT12},
which we call the \emph{erase-all-subscripts} symmetrization
for reasons to be explained shortly. This symmetrization previously appeared in \cite{shi} under the name ``linearization,'' and it is also equivalent to the noise operator used in analysis of Boolean functions \cite[Definition 2.46]{odonnell}.

\begin{lemma}[Erase-all-subscripts
symmetrization]
Let $p:\left\{ 0,1\right\} ^{N}\rightarrow\mathbb{R}$\ be a
real multilinear polynomial of degree $d$,
and for
any real number $k \in [0, 1]$,
let $M_{k}$ denote
the distribution over $\{0, 1\}^N$, wherein each coordinate is selected
independently to be 1 with probability $k$.
Let $q:[0, 1] \to
\mathbb{R}$ be defined as
\begin{equation}
q\left( k\right) :=\E_{X \sim M_{k}}\left[ p\left( X\right) %
\right] .
\end{equation}
Then $q$ can be written as a real polynomial in $k$ of degree at most $d$.
\label{lem:eraseallsubscripts}
\end{lemma}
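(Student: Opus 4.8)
The plan is to exploit the multilinearity of $p$ and the multiplicative structure of $M_k$. Since $p$ is multilinear of degree $d$, I would write $p(X) = \sum_{|T| \le d} c_T \prod_{i \in T} X_i$ as a sum over subsets $T \subseteq [N]$ of size at most $d$. The expectation $\E_{X \sim M_k}[p(X)]$ is then, by linearity of expectation, $\sum_{|T| \le d} c_T \E_{X \sim M_k}\bigl[\prod_{i \in T} X_i\bigr]$. The key observation is that under $M_k$ the coordinates $X_i$ are mutually independent, and each $X_i \in \{0,1\}$ has $\E[X_i] = k$, so $\E_{X \sim M_k}\bigl[\prod_{i \in T} X_i\bigr] = \prod_{i \in T} \E[X_i] = k^{|T|}$.

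Plugging this back in, $q(k) = \sum_{|T| \le d} c_T\, k^{|T|}$, which is manifestly a univariate polynomial in $k$ of degree at most $d$ (the highest power of $k$ appearing is $|T| \le d$). This is the entire argument; no approximation-theoretic input is needed.

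\textbf{Main obstacle.} Honestly there is no serious obstacle here — this is essentially a one-line computation once one recalls that a multilinear monomial $\prod_{i \in T} X_i$ factors over independent coordinates into $k^{|T|}$. The only point deserving a word of care is making sure the claim ``$p$ multilinear'' is being used correctly: if $p$ were an arbitrary (non-multilinear) polynomial, the monomials could contain higher powers $X_i^2, X_i^3, \ldots$, but since $X_i \in \{0,1\}$ we could first reduce $X_i^j$ to $X_i$ for $j \ge 1$, recovering multilinearity without increasing degree (as a function on $\{0,1\}^N$). So the lemma as stated — with $p$ already multilinear of degree $d$ — goes through immediately, and one could even state it slightly more generally for any polynomial representing the same function on the cube. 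I would simply present the two-step computation above, perhaps remarking on the connection to the noise operator mentioned in the lemma statement (this is exactly evaluating $p$ at the point $(k, k, \ldots, k)$, viewing $p$ as a formal multilinear polynomial).
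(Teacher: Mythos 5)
Your proof is correct and matches the paper's argument: both expand $p$ into multilinear monomials and use linearity of expectation together with the independence of the coordinates under $M_k$ to replace each monomial $\prod_{i\in T}X_i$ by $k^{|T|}$, which is exactly the "erase all subscripts" substitution the paper describes. No gaps.
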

\begin{proof} (see, for example, \cite[Proof of Theorem 3]{STT12}).
Given the multivariate polynomial expansion of $p$,
we can obtain $q$ easily just by ``erasing all the subscripts in each variable''. For example, if
$p(x_1, x_2, x_3)= 2 x_1 x_2 + x_2 x_3 + x_2$, we replace every $x_i$ with $k$ to obtain
$q(k)=2k\cdot k+ k\cdot k + k=3k^2 + k$.
This follows from linearity of expectation along with the fact that $M_k$
is defined to be the
product distribution wherein
each coordinate has expected value $k$.
\end{proof}

We highlight the following key difference between Minsky--Papert symmetrization
and the erase-all-subscripts symmetrization. Let $p:\left\{ 0,1\right\} ^{N}\rightarrow [0, 1]$
be a real multivariate polynomial whose evaluations at Boolean inputs are in
$[0, 1]$, i.e., for all $x\in\{0,1\}^n$, we have $p(x)\in[0,1]$.
If $q$ is the erase-all-subscripts symmetrization of $p$, 
then $q$ takes values in $[0,1]$ at all \emph{real-valued}
inputs in $[0, 1]$: $q(k) \in [0,1]$ for all $k \in [0,1]$. 
If $q$ is the Minsky--Papert symmetrization of $p$,
then it is only guaranteed to take values in $[0,1]$ at 
\emph{integer-valued} inputs in $[0, N]$, i.e., $q(k)\in [0,1]$ 
is only guaranteed to hold at $k\in\{0,1,\ldots,N\}$.
This is the main reason we use erase-all-subscripts symmetrization
in this work.

\para{Bivariate symmetrizations.}
In this paper, it will be convenient
to consider bivariate versions of both
Minsky--Papert and erase-all-subscripts
symmetrization, and their applications to oracle separations. To this end,
define $X\in \left\{ 0,1\right\} ^{N}$, the \textquotedblleft characteristic
string\textquotedblright\ of the set $S\subseteq \left[ N\right] $, by $%
x_{i}=1$\ if $i\in S$\ and $x_{i}=0$\ otherwise. \ Let $\mathcal{O}_{S}$
denote the unitary that performs a membership query to $S$, defined as
\begin{equation}
\mathcal{O}_{S}\left\vert i\right\rangle \left\vert b\right\rangle
=(1-2bx_{i})\left\vert i\right\rangle \left\vert b\right\rangle
\end{equation}%
for any index $i\in \lbrack N]$ and bit $b\in \{0,1\}$.

Because we study oracle intersection problems, it is often convenient to
think of an algorithm as having access to \textit{two} oracles, wherein the
first bit in the oracle register selects the choice of oracle. \ As a
consequence, we need a slight generalization of a now well-established fact
in quantum complexity: that the acceptance probability of a quantum
algorithm with an oracle can be expressed as a polynomial in the bits of the
oracle string.

\begin{lemma}[Symmetrization with two oracles]
\label{lem:symmetrization} Let $Q^{\mathcal{O}_{S_{0}},\mathcal{O}_{S_{1}}}$
be a quantum algorithm that makes $T$ queries to a pair of membership
oracles for sets $S_{0},S_{1}\subseteq \lbrack N]$. \ Let $D_{\mu }$ denote
the distribution over subsets of $[N]$ wherein each element is selected
independently with probability $\frac{\mu }{N}$. \ Then there exist
bivariate real polynomials $q(s,t)$ and $p(x,y)$ of degree at most $2T$
satisfying:
\begin{align*}
\textrm{for all real numbers } s, t \in [0, N],& \quad
q(s,t)=\E_{\substack{ S_{0}\sim D_{s}, \\ S_{1}\sim D_{t}}}\left[ \Pr [Q^{%
\mathcal{O}_{S_{0}},\mathcal{O}_{S_{1}}}\text{ accepts}]\right], \textrm{ and} \\
\textrm{for all integers } x, y \in \{0, 1, \dots, N\},& \quad
p(x,y)=\E_{\substack{ |S_{0}|=x, \\ |S_{1}|=y}}\left[ \Pr [Q^{\mathcal{O}%
_{S_{0}},\mathcal{O}_{S_{1}}}\text{ accepts}]\right].
\end{align*}
\end{lemma}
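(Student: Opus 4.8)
The plan is to reduce the statement to the standard polynomial representation of quantum query algorithms due to Beals et al.~\cite{bbcmw}, and then to apply the two symmetrization lemmas already established --- \Cref{symlem} and \Cref{lem:eraseallsubscripts} --- in their bivariate forms. Crucially, both $p$ and $q$ will be obtained from the \emph{same} underlying multilinear polynomial, merely by symmetrizing it in two different ways; this is what makes the two asserted identities consistent with each other.

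First I would set up the polynomial representation, which is the one step I expect to require any care (the ``slight generalization'' alluded to above). Model $Q$ as making queries to a single combined oracle $\mathcal{O}$ acting on a register $|c\>|i\>|b\>$ with $c\in\{0,1\}$ selecting which of the two oracles is queried: $\mathcal{O}|c\>|i\>|b\> = (1-2bz_i^{(c)})|c\>|i\>|b\>$, where $z^{(0)},z^{(1)}\in\{0,1\}^N$ are the characteristic strings of $S_0,S_1$. The point to verify is that, although a single query now touches both oracle strings, it still contributes degree at most $2$ to the acceptance probability viewed as a function of the $2N$ variables $(z^{(0)},z^{(1)})$: in each computational basis state the selector $c$ is a fixed classical value, so the diagonal entry $1-2bz_i^{(c)}$ is an \emph{affine} function of $(z^{(0)},z^{(1)})$ (it is either $1-2bz^{(0)}_i$ or $1-2bz^{(1)}_i$). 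The usual induction on the number of queries then shows that after $t$ queries each computational-basis amplitude is a polynomial of degree at most $t$ in the $2N$ variables, so $\Pr[Q^{\mathcal{O}_{S_0},\mathcal{O}_{S_1}}\text{ accepts}]$ --- a sum of squared amplitude magnitudes over accepting basis states --- is a real polynomial $\mathcal{P}(z^{(0)},z^{(1)})$ of degree at most $2T$. Since the variables are Boolean, I may reduce $\mathcal{P}$ to a multilinear polynomial of degree at most $2T$ without changing its values on $\{0,1\}^{2N}$.

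Then I would symmetrize $\mathcal{P}$ in two ways. For $p$, apply bivariate Minsky--Papert symmetrization: averaging $\mathcal{P}$ over independent permutations of the $z^{(0)}$-coordinates and of the $z^{(1)}$-coordinates yields a polynomial that is separately symmetric in each block, hence a linear combination of products $e_a(z^{(0)})\,e_b(z^{(1)})$ of elementary symmetric polynomials with $a+b\le 2T$; restricting to $|S_0|=x$, $|S_1|=y$ replaces $e_a$ by $\binom{x}{a}$ and $e_b$ by $\binom{y}{b}$, giving a polynomial $p(x,y)$ of degree at most $2T$ that equals $\E_{|S_0|=x,\,|S_1|=y}[\mathcal{P}]$ for all integers $x,y\in\{0,\dots,N\}$. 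For $q$, apply bivariate erase-all-subscripts symmetrization (\Cref{lem:eraseallsubscripts}): by linearity of expectation, replacing every $z^{(0)}$-variable by $s/N$ and every $z^{(1)}$-variable by $t/N$ in the multilinear expansion of $\mathcal{P}$ produces a polynomial $q(s,t)$ of degree at most $2T$, valid for all real $s,t\in[0,N]$. It then remains only to observe that under the characteristic-vector identification the distribution $D_\mu$ over subsets of $[N]$ is exactly the product distribution on $\{0,1\}^N$ that includes each coordinate independently with probability $\mu/N$, and that $\mathcal{P}$ agrees with $\Pr[Q^{\mathcal{O}_{S_0},\mathcal{O}_{S_1}}\text{ accepts}]$ at all Boolean inputs; these two facts convert the expressions just derived for $p$ and $q$ into the two claimed identities.
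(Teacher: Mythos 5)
Your proposal is correct and follows essentially the same route as the paper: obtain a multilinear polynomial of degree at most $2T$ in the $2N$ Boolean variables of the two characteristic strings, then symmetrize that one polynomial in two ways, via Minsky--Papert for $p$ and via erase-all-subscripts for $q$. The only cosmetic difference is that the paper obtains the degree-$2T$ multilinear polynomial by concatenating the two characteristic strings into a single $2N$-bit oracle string and invoking Lemma 4.2 of Beals et al.\ directly, whereas you re-derive it from scratch by modeling a combined oracle with a selector bit and observing that each query entry remains affine in the $2N$ variables; these are the same fact stated at different levels of detail.
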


\begin{proof}
Take $X = X_0|X_1$ to be the concatenation of the characteristic strings of the two oracles, and let $S \subseteq [2N]$ be such that $X$ is the characteristic string of $S$. Then, Lemma 4.2 of Beals et al.\ \cite{bbcmw} tells us that there is a real multilinear polynomial $r(X)$ of degree at most $2T$ in the bits of $X$ such that $r(X) = \Pr[ Q^{\mathcal{O}_S} \text{ accepts}]$.

Observe that $r$ has a meaningful probabilistic interpretation over arbitrary inputs in $[0, 1]$. A vector $X \in [0, 1]^{2N}$ of probabilities corresponds to a distribution over $\{0,1\}^{2N}$ wherein each bit is chosen from a Bernoulli distribution with the corresponding probability. Because $r$ is multilinear, $r$ in fact computes the expectation of the acceptance probability over this distribution. In particular, the polynomial
\begin{equation}
q(s, t) = r\bigg(\underbrace{\frac{s}{N},\ldots,\frac{s}{N}}_{N \text{ times}}, \underbrace{\frac{t}{N},\ldots,\frac{t}{N}}_{N \text{ times}}\bigg) = \E_{\substack{S_0 \sim D_s,\\S_1 \sim D_t}}\left[\Pr[ Q^{\mathcal{O}_{S_0},\mathcal{O}_{S_1}} \text{ accepts}] \right]    
\end{equation}
corresponds to selecting $S_0 \sim D_s$ and $S_1 \sim D_t$. The total degree of $q$ is obviously
at most the degree of $r$, by the same reasoning as in the proof of \Cref{lem:eraseallsubscripts}.

To construct $p$, we apply the symmetrization lemma of Minsky and Papert \cite{mp} to symmetrize $r$, first with respect to $X_0$, then with respect to $X_1$:
\begin{equation}
    p_0(x, X_1) = \E_{|S_0|=x} r(X_0,X_1) =  \E_{|S_0|=x}\left[\Pr[ Q^{\mathcal{O}_{S_0},\mathcal{O}_{S_1}} \text{ accepts}] \right]
\end{equation}
\begin{equation}
    p(x, y) = \E_{|S_1|=y} p_0(x,X_1) = \E_{\substack{|S_0|=x,\\|S_1|=y}}\left[\Pr[ Q^{\mathcal{O}_{S_0},\mathcal{O}_{S_1}} \text{ accepts}] \right] 
\end{equation}  
The degree of $p$ is at most the degree of $r$, due to \Cref{symlem}.
\end{proof}

We remark that, as a consequence of their definitions in \Cref{lem:symmetrization}, $p$ and $q$ satisfy:
\begin{equation}
q(s, t) = \E \left[ p(X, Y) \right],
\end{equation}
where $X$ and $Y$ are drawn from $N$-trial binomial distributions with means $s$ and $t$, respectively.

\para{Symmetric Laurent polynomials.}
Finally, we state a useful fact about Laurent polynomials:

\begin{lemma}[Symmetric Laurent polynomials]
\label{lem:symmetric} Let $\ell (x)$ be a real Laurent polynomial of positive
and negative degree $d$ that satisfies $\ell (x)=\ell (1/x)$. \ Then there exists a (ordinary) real
polynomial $q$ of degree $d$ such that $\ell (x)=q(x+1/x)$.
\end{lemma}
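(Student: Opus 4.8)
The plan is to write $\ell(x) = \sum_{j=-d}^{d} c_j x^j$ and exploit the symmetry $\ell(x) = \ell(1/x)$, which forces $c_j = c_{-j}$ for every $j$. Grouping the $j$-th and $(-j)$-th terms, we get $\ell(x) = c_0 + \sum_{j=1}^{d} c_j\,(x^j + x^{-j})$. So it suffices to show that for each $j \geq 1$, the symmetric Laurent polynomial $x^j + x^{-j}$ can be expressed as an ordinary polynomial of degree exactly $j$ in the single variable $u := x + 1/x$; summing these (weighted by $c_j$) then yields an ordinary polynomial $q$ of degree $d$ with $\ell(x) = q(x + 1/x)$, as claimed.

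First I would establish the claim about $x^j + x^{-j}$ by induction on $j$, which is the substitution $x \mapsto e^{i\theta}$ version of the standard Chebyshev recurrence. For the base cases, $x^0 + x^0 = 2$ (degree $0$ in $u$) and $x^1 + x^{-1} = u$ (degree $1$). For the inductive step, the identity
\begin{equation}
(x^j + x^{-j})(x + x^{-1}) = (x^{j+1} + x^{-(j+1)}) + (x^{j-1} + x^{-(j-1)})
\end{equation}
rearranges to
\begin{equation}
x^{j+1} + x^{-(j+1)} = u \cdot (x^j + x^{-j}) - (x^{j-1} + x^{-(j-1)}).
\end{equation}
By the inductive hypothesis, $x^j + x^{-j} = P_j(u)$ and $x^{j-1} + x^{-(j-1)} = P_{j-1}(u)$ for ordinary polynomials $P_j, P_{j-1}$ of degrees $j$ and $j-1$ respectively; hence $x^{j+1} + x^{-(j+1)} = u\,P_j(u) - P_{j-1}(u) =: P_{j+1}(u)$, which is an ordinary polynomial of degree $j+1$ (the leading term $u \cdot u^j$ cannot be cancelled by the lower-degree $P_{j-1}$).

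Finally I would assemble $q(u) := c_0 \cdot \tfrac{1}{2} P_0(u) + \sum_{j=1}^{d} c_j\, P_j(u)$ — or more cleanly, just set $q(u) = c_0 + \sum_{j=1}^d c_j P_j(u)$ using $P_0 \equiv 2$ absorbed appropriately — so that $q(x + 1/x) = c_0 + \sum_{j=1}^{d} c_j (x^j + x^{-j}) = \ell(x)$. The degree of $q$ is $\max\{j : c_j \neq 0\} \leq d$, and if $\ell$ genuinely has positive degree $d$ then $c_d \neq 0$ and $\deg q = d$. I do not anticipate a serious obstacle here; the only point requiring a moment's care is verifying that $\ell(x) = \ell(1/x)$ as Laurent polynomials (i.e., as formal expressions, or equivalently as functions on a punctured neighborhood) really does imply the coefficient symmetry $c_j = c_{-j}$ — this follows because the monomials $x^j$ are linearly independent as functions, so matching $\sum c_j x^j$ with $\sum c_j x^{-j} = \sum c_{-j} x^j$ forces equality coefficientwise.
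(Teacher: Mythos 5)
Your proof is correct and takes essentially the same route as the paper: both reduce to showing by induction that $x^j + x^{-j}$ is an ordinary degree-$j$ polynomial in $u = x + 1/x$, after observing that the symmetry $\ell(x) = \ell(1/x)$ forces the coefficients $c_j$ and $c_{-j}$ to agree. The only cosmetic difference is in the inductive step: you use the Chebyshev-style three-term recurrence $x^{j+1} + x^{-(j+1)} = u\,(x^j + x^{-j}) - (x^{j-1} + x^{-(j-1)})$, whereas the paper expands $(x + 1/x)^j$ by the binomial theorem and subtracts the lower-degree symmetric remainder — both are equally valid realizations of the same idea.
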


\begin{proof}
$\ell(x) = \ell(1/x)$ implies that the coefficients of the $x^i$ and $x^{-i}$ terms are equal for all $i$, as otherwise $\ell(x) - \ell(1/x)$ would not equal the zero polynomial. Thus, we may write $\ell(x) = \sum_{i=0}^d a_i \cdot (x^i + x^{-i})$ for some coefficients $a_i$. So, it suffices to show that $x^i + x^{-i}$ can be expressed as a polynomial in $x + 1/x$ for all $0 \le i \le d$.

We prove by induction on $i$. The case $i = 0$ corresponds to constant polynomials. For $i > 0$, by the binomial theorem, observe that $(x + 1/x)^i = x^i + x^{-i} + r(x)$ where $r$ is a degree $i - 1$ real Laurent polynomial satisfying $r(x) = r(1/x)$. By the induction assumption, $r$ can be expressed as a polynomial in $x + 1/x$, so we have $x^i + x^{-i} = (x + 1/x)^i - r(x)$ is expressed as a polynomial in $x + 1/x$.
\end{proof}

%=======================================================================
\subsection{Complexity classes}

\begin{definition}
\label{def:qma}
The complexity class $\mathsf{QMA}$ consists of the
languages $L$ for which there exists a quantum polynomial time
verifier $V$ with the following properties:

\begin{enumerate}
\item Completeness: if $x \in L$, then there exists a quantum witness state $|\psi\>$ on $\mathrm{poly}(|x|)$ qubits such that $\Pr\left[V(x, |\psi\>) \text{ accepts}\right] \ge \frac{2}{3}$.

\item Soundness: if $x \not\in L$, then for any quantum witness state $|\psi\>$ on $\mathrm{poly}(|x|)$ qubits, $\Pr\left[V(x, |\psi\>) \text{ accepts}\right] \le \frac{1}{3}$.
\end{enumerate}
\end{definition}

A quantum verifier that satisfies the above
promise for a particular language will be referred to as a $\mathsf{QMA}$
verifier or $\mathsf{QMA}$ protocol throughout.

Though $\mathsf{SBP}$ and $\mathsf{SBQP}$ can be defined in terms of
counting complexity functions, for our purposes it is easier to
work with the following equivalent definitions (see B\"ohler et al.\ \cite%
{BGM06}):

\begin{definition}
\label{def:sbp_sbqp} The complexity class $\mathsf{SBP}$ consists of the
languages $L$ for which there exists a probabilistic polynomial time
algorithm $M$ and a polynomial $\sigma$ with the following properties:

\begin{enumerate}
\item If $x \in L$, then $\Pr\left[M(x) \text{ accepts}\right] \ge
2^{-\sigma(|x|)}$.

\item If $x \not\in L$, then $\Pr\left[M(x) \text{ accepts}\right] \le
2^{-\sigma(|x|)}/2$.
\end{enumerate}

The complexity class $\mathsf{SBQP}$ is defined analogously, wherein the
classical algorithm is replaced with a quantum algorithm.
\end{definition}

A classical (respectively, quantum) algorithm that satisfies the above
promise for a particular language will be referred to as an $\mathsf{SBP}$
(respectively, $\mathsf{SBQP}$) algorithm throughout. \ Using these
definitions, a  query complexity relation between $\mathsf{QMA}$
protocols and $\mathsf{SBQP}$ algorithms follows from the procedure of
Marriott and Watrous \cite{marriott}, which shows that one can exponentially
improve the soundness and completeness errors of a $\mathsf{QMA}$ protocol
without increasing the witness size. \ 
This relationship is now standard; see
for example \cite[Remark 6]{marriott} or \cite[Proposition 4.2]%
{sherstov2019vanishing} for a proof of the following lemma:

\begin{lemma} %[Guessing lemma]
\label{lem:guessing} Suppose there is a $\mathsf{QMA}$ protocol for some
problem that makes $T$ queries and receives an $m$-qubit witness. \ Then
there is a quantum query algorithm $Q$ for the same problem that makes $O(mT)$ queries, and satisfies the following:

\begin{enumerate}
\item If $x \in L$, then $\Pr\left[Q(x) \text{ accepts}\right] \ge
2^{-m}$.

\item If $x \not\in L$, then $\Pr\left[Q(x) \text{ accepts}\right] \le
2^{-10m}$.
\end{enumerate}
\end{lemma}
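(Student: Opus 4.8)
The plan is to combine the Marriott--Watrous error-reduction procedure with the standard device of feeding the verifier the maximally mixed state in place of an untrusted witness.

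First I would amplify the hypothesized $\mathsf{QMA}$ protocol \emph{without} growing its witness register. Starting from the given protocol (completeness $2/3$, soundness $1/3$, witness length $m$, and $T$ queries), I would invoke Marriott--Watrous amplification~\cite{marriott}: for any integer parameter $k$ it produces a $\mathsf{QMA}$ verifier $V'$ with completeness $\geq 1-2^{-\Omega(k)}$, soundness $\leq 2^{-\Omega(k)}$, witness register \emph{still} of size $m$, and query count $O(kT)$. The feature that makes the whole argument work --- and that plain sequential repetition lacks --- is precisely that the witness register does not grow. Choosing $k=\Theta(m)$ large enough, I obtain a verifier $V'$ with completeness $\geq 1-\delta$ and soundness $\leq\delta$ for $\delta:=2^{-20m}$, using $O(mT)$ queries.

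Next I would discard the prover. Let $Q$ be the quantum query algorithm that, on input $x$, with probability $p:=2^{-m+1}\delta$ accepts outright, and otherwise runs $V'$ with its $m$-qubit witness register initialized to the maximally mixed state $\rho_0=2^{-m}\,\id$ (this costs no queries --- for instance, sample $r\in\{0,1\}^m$ uniformly and feed $|r\>$, or prepare $\rho_0$ coherently with an ancilla). Then $Q$ makes $O(mT)$ queries, as required. Let $A$ be the POVM effect corresponding to ``$V'$ accepts,'' so that $V'$'s acceptance probability on a witness state $\rho$ is the linear function $\mathrm{Tr}(A\rho)$. If $x\notin L$, then $\mathrm{Tr}(A|\phi\>\<\phi|)\leq\delta$ for every pure $|\phi\>$, hence by convexity $\mathrm{Tr}(A\rho_0)\leq\delta$, and so $\Pr[Q(x)\text{ accepts}]\leq p+\delta\leq 2\delta\leq 2^{-10m}$. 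If $x\in L$, let $|\psi\>$ be the honest witness for $V'$, so that $\mathrm{Tr}(A|\psi\>\<\psi|)\geq 1-\delta$; writing $\rho_0=2^{-m}|\psi\>\<\psi|+(1-2^{-m})\sigma$ for some density matrix $\sigma$ and using $\mathrm{Tr}(A\sigma)\geq 0$ gives
\begin{equation}
\Pr[Q(x)\text{ accepts}]=p+(1-p)\,\mathrm{Tr}(A\rho_0)\;\geq\; p+(1-p)\,2^{-m}(1-\delta)\;\geq\; 2^{-m},
\end{equation}
where the last step is a short calculation using $p=2^{-m+1}\delta$ and $m\geq 1$ (which forces $p\leq\delta$). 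The tiny ``free acceptance'' probability $p$ is there precisely to offset the $(1-\delta)$ completeness loss caused by using the maximally mixed witness, which is the only reason the naive argument would land just below $2^{-m}$ in the yes case.

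I do not expect a genuine obstacle, since this is a known lemma: the one substantive input is the Marriott--Watrous guarantee used in the first step, which I treat as a black box (see also \cite[Remark 6]{marriott} and \cite[Proposition 4.2]{sherstov2019vanishing}), and everything else is linearity of the acceptance probability in the witness density matrix together with convexity. The only point needing care is the arithmetic of the constants --- choosing the amplification parameter $k$ and the free-acceptance probability $p$ so that the two acceptance probabilities land at $\geq 2^{-m}$ and $\leq 2^{-10m}$ --- which the computation above does with $\delta=2^{-20m}$; any $\delta=2^{-\Theta(m)}$ works as well, and all such choices keep the total query count $O(mT)$.
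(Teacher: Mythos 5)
Your proof is correct and follows essentially the same route the paper relies on: the paper does not prove this lemma itself but cites the standard argument (Marriott--Watrous in-place amplification followed by substituting the maximally mixed state for the witness), which is exactly what you do. Your extra ``free acceptance'' tweak to push the yes-case probability from $2^{-m}(1-\delta)$ up to exactly $2^{-m}$ is a valid (and slightly more careful) way to match the constant in the statement; the arithmetic checks out.
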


%=======================================================================
\section{QMA complexity of approximate counting}

\label{sec:SBPQMA}

This section establishes
an 
optimal lower bound on the $\mathsf{QMA}$
complexity of approximate counting.
We first lower bound the $\mathsf{SBQP}$ complexity
of the $\mathsf{AND}_2 \circ \mathsf{ApxCount}_{N,w}$ problem (%
\Cref{thm:sbqp}). 
This implies a $\mathsf{QMA}$ lower bound
for $\mathsf{ApxCount}_{N,w}$ via \Cref{lem:guessing},
but it is not quantitatively optimal. 
We prove the optimal $\mathsf{QMA}$ lower bound 
(\Cref{thm:qmabound}) via
\Cref{lem:betterforqma}, which
leverages additional properties 
of 
the $\mathsf{SBQP}$ protocol derived via
\Cref{lem:guessing}
from any $\mathsf{QMA}$ protocol with small
witness length. Finally, \Cref{cor:qma_separation}
describes new oracle separations that
are immediate consequences of \Cref{thm:qmabound} and \Cref{thm:sbqp}. 

%=======================================================================
\subsection{Lower bound for \texorpdfstring{$\mathsf{SBQP}$}{SBQP} algorithms}
\label{sec:SBQP} 

Our lower bound on the $\cl{SBQP}$ complexity of $\AndApxCount$ hinges 
on the following theorem. \ The theorem
uses Laurent polynomials to prove a degree lower bound for bivariate
polynomials that satisfy an upper bound on an \textquotedblleft
L\textquotedblright -shaped pair of rectangles and a lower bound at a nearby
point:

\begin{theorem}
\label{thm:L} Let $0<w<32w<N$ and $M\geq 1$. \ Let $R_{1}=[4w,N]\times \lbrack 0,w/2]$ and
$R_{2}=[0,w/2]\times \lbrack 4w,N]$ be disjoint rectangles in the plane, and
let $L=R_{1}\cup R_{2}$. \ Let $p(x,y)$ be a real polynomial of degree $d$
with the following properties:
\begin{enumerate}
\item $p(4w, 4w) \ge 1.5 \cdot M$.
\item $0 \le p(x, y) \le 1$ for all $(x, y) \in L$.
\end{enumerate}
Then $d = \Omega(\sqrt{N/w} \cdot \log M)$.
\end{theorem}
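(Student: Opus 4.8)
The plan is to collapse the bivariate lower bound to a univariate one by restricting $p$ to a hyperbola through the corner point $(4w,4w)$, and then to apply a Chebyshev-type growth estimate. First, I would note that we may assume $p$ is symmetric, $p(x,y)=p(y,x)$: replacing $p$ by $\tfrac12\bigl(p(x,y)+p(y,x)\bigr)$ leaves the degree unchanged, preserves property~1 since $(4w,4w)$ lies on the diagonal, and preserves property~2 since $R_2$ is the reflection of $R_1$ across the diagonal, so $L$ is symmetric under swapping coordinates.

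The heart of the argument is the hyperbola substitution. Restrict $p$ to the curve $x=4wt$, $y=4w/t$, giving the Laurent polynomial $\ell(t):=p(4wt,4w/t)$; since a monomial $x^iy^j$ with $i+j\le d$ becomes $(4w)^{i+j}t^{\,i-j}$ with $|i-j|\le d$, both the positive and negative degree of $\ell$ are at most $d$. Because $p$ is symmetric, $\ell(1/t)=p(4w/t,4wt)=\ell(t)$, so by \Cref{lem:symmetric} there is an ordinary polynomial $r$ of degree at most $d$ with $\ell(t)=r(t+1/t)$. The curve passes through $(4w,4w)$ at $t=1$, so $r(2)=\ell(1)=p(4w,4w)\ge 1.5M$. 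On the other hand, $(4wt,4w/t)\in R_1$ exactly when $t\in[8,N/(4w)]$ (the condition $4wt\in[4w,N]$ forces $t\in[1,N/(4w)]$, and $4w/t\le w/2$ forces $t\ge 8$), a nonempty interval since $N>32w$; hence $r(u)\in[0,1]$ for all $u=t+1/t$ with $t$ in this interval, i.e.\ for all $u$ in $I:=\bigl[\,8+\tfrac18,\ \tfrac{N}{4w}+\tfrac{4w}{N}\,\bigr]$. Thus $r$ has degree at most $d$, is bounded by $1$ in absolute value on $I$, yet has $|r(2)|\ge 1.5M$ with $2$ lying to the left of $I$. Rescaling $I$ affinely onto $[-1,1]$ sends $2$ to a point $-(1+\mu)$ with $\mu=12.25/|I|$, so \Cref{paturilem} (applied to $r$ composed with this rescaling, reflected about the origin) gives $1.5M\le\exp\!\bigl(2d\sqrt{2\mu+\mu^2}\bigr)$, whence $d\ge\tfrac{\ln(1.5M)}{2\sqrt{2\mu+\mu^2}}$.

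The one delicate point — which I expect to be the main, if mild, obstacle — is that this estimate is only strong enough when $|I|$, equivalently $N/w$, is not too small; when $N$ is barely larger than $32w$, $I$ nearly degenerates and Paturi yields almost nothing. I would resolve this by a case split. If $N\ge 82w$ (say), then $\mu\le 1$ and $|I|=\Omega(N/w)$, so $\sqrt{2\mu+\mu^2}\le\sqrt{3\mu}=O(\sqrt{w/N})$ and the bound above becomes $d=\Omega(\sqrt{N/w}\cdot\log M)$. If $32w<N<82w$, then $\sqrt{N/w}=O(1)$, so it is enough to show $d=\Omega(\log M)$; for this I would use a coarser restriction, of $p$ to the line $x=4w$: the polynomial $g(y):=p(4w,y)$ has degree at most $d$, satisfies $g(y)\in[0,1]$ for $y\in[0,w/2]$ because $\{4w\}\times[0,w/2]\subseteq R_1$, and has $g(4w)\ge 1.5M$ with $4w$ only a constant number of interval-lengths beyond $[0,w/2]$, so \Cref{paturilem} forces $d=\Omega(\log M)$. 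Combining the two regimes yields $d=\Omega\!\bigl(\sqrt{N/w}\cdot\log M\bigr)$, completing the proof.
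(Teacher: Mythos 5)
Your proof is correct and follows essentially the same route as the paper's: symmetrize $p$, restrict to the hyperbola $(4wt,4w/t)$, convert the resulting symmetric Laurent polynomial into an ordinary polynomial in $t+1/t$ via \Cref{lem:symmetric}, and apply \Cref{paturilem}. Your case split for $32w<N<82w$ is a nice extra touch — the paper simply asserts the interval has length $\Theta(N/w)$, which degenerates as $N/w\to 32^{+}$, and your line-restriction argument cleanly covers that regime.
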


\begin{figure}[tbp]
\centering
\begin{tikzpicture}[y=0.4cm, x=0.4cm]
	\pgfmathsetmacro{\N}{20};
	\pgfmathsetmacro{\w}{3};
	\pgfmathsetmacro{\tw}{\w+\w};
	\pgfmathsetmacro{\Np}{\N+1};

	%Label R's
	\draw[<-,thick] (\tw+0.5, \w/2) -- (\w+1,\w/2) node[anchor=east] {$R_1$};
	\draw[<-,thick] (\w/2, \tw+0.5) -- (\w/2,\w+1) node[anchor=north] {$R_2$};

	%Shaded R's
	\fill[gray,opacity=.4] (\tw,0) -- (\tw,\w) -- (\N,\w) -- (\N,0);
	\fill[gray,opacity=.4] (0,\tw) -- (\w,\tw) -- (\w,\N) -- (0,\N);

 	%axis
	\draw[->] (0,0) -- coordinate (x axis mid) (\Np,0);
    	\draw[->] (0,0) -- coordinate (y axis mid) (0,\Np);
    	
    	%curve labels
	\draw[<-,thick] (\tw,\tw) -- (\tw+2,\tw+2) node[anchor=south west] {$t=1$};   	
	\draw[<-,thick] (\w*4,\w) -- (\w*4,\w*2) node[anchor=south] {$t=8$};
	\draw[<-,thick] (\N,4*\w*\w/\N) -- (\N+2,1+4*\w*\w/\N) node[anchor=west] {$t=\frac{N}{4w}$};
	\draw[<-,thick] (1.5*\w,2*\w*4/3) -- (1.5*\w+2,2*\w*4/3+2) node[anchor=south west] {$(x=4wt,y=4w/t)$};
    	
    \draw (0,1pt) -- (0,-3pt) node[anchor=north] {0};
    \draw (\w,1pt) -- (\w,-3pt) node[anchor=north] {$w/2$\vphantom{l}};
    \draw (\tw,1pt) -- (\tw,-3pt) node[anchor=north] {$4w$};
    \draw (\N,1pt) -- (\N,-3pt) node[anchor=north] {$N$};

    \draw (1pt,0) -- (-3pt,0) node[anchor=east] {0};
    \draw (1pt,\w) -- (-3pt,\w) node[anchor=east] {$w/2$};
    \draw (1pt,\tw) -- (-3pt,\tw) node[anchor=east] {$4w$};
    \draw (1pt,\N) -- (-3pt,\N) node[anchor=east] {$N$};

	%labels
	\node[below=0.8cm] at (x axis mid) {$x$};
	\node[rotate=90, above=0.8cm] at (y axis mid) {$y$};
	\node at (\w+\N/2, \w/2) {$0 \le p(x, y) \le 1$};
	\node[rotate=90] at (\w/2, \w+\N/2) {$0 \le p(x, y) \le 1$};

     \draw[thick,domain=1:{\N/(2*\w)},smooth,variable=\t,blue] plot ({2*\w*\t},{2*\w/\t});
     % Plot separately to make it smoother
     \draw[thick,domain=1:{\N/(2*\w)},smooth,variable=\t,blue] plot ({2*\w/\t},{2*\w*\t});
     \node at ({\w+\w},{\w+\w}) {\tiny\textbullet};
     \node at (4*\w,\w) {\tiny\textbullet};
     \node at (\N,4*\w*\w/\N) {\tiny\textbullet};
\end{tikzpicture}
\caption{Diagram of \Cref{thm:L} (not drawn to scale).}
\label{fig:L}
\end{figure}
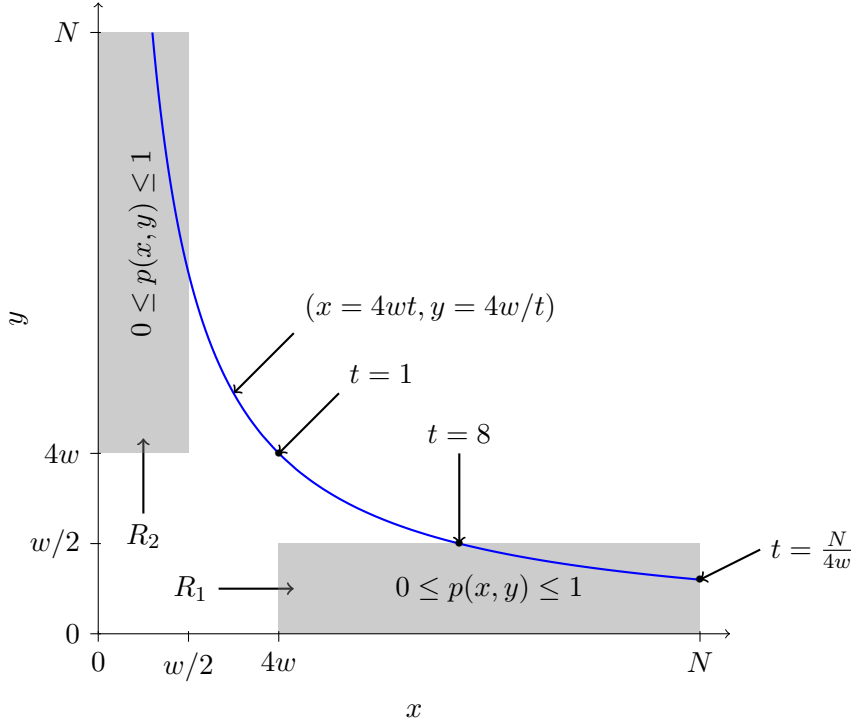

\begin{proof}
Observe that if $p(x, y)$ satisfies the statement of the theorem, then so does $p(y, x)$. This is because the constraints in the statement of the theorem are symmetric in $x$ and $y$ (in particular, because $R_1$ and $R_2$ are mirror images of one another along the line $x = y$; see \Cref{fig:L}). As a result, we may assume without loss of generality that $p$ is symmetric, i.e.,\ $p(x, y) = p(y, x)$. Else, we may replace $p$ by $\frac{p(x, y) + p(y, x)}{2}$ because the set of polynomials that satisfy the inequalities in the statement of the theorem are closed under convex combinations.

Consider the hyperbolic parametric curve $(x = 4wt, y=4w/t)$ as it passes through $R_1$ (see \Cref{fig:L}). We can view the restriction of $p(x, y)$ to this curve as a Laurent polynomial $\ell(t) = p(4wt, 4w/t)$ of positive and negative degree $d$. The bound of $p(x,y)$ on all of $R_1$ implies that $|\ell(t)| \le 1$ when $t \in [8, \frac{N}{4w}]$ and that $\ell(1) \ge 1.5$ (see \Cref{fig:L}). Moreover, the condition that $p(x, y)$ is symmetric implies that $\ell(t) = \ell(1/t)$.

By \Cref{lem:symmetric} for symmetric Laurent polynomials, $\ell(t)$ can be viewed as a degree $d$ polynomial $q(t + 1/t)$. Under the transformation $s = t + 1/t$, $q$ satisfies $|q(s)| \le 1$ for $s \in [8 + 1/8, \frac{N}{4w} + \frac{4w}{N}]$ and $q(2) \ge 1.5 M$. Note that the length of the interval $[8 + 1/8, \frac{N}{4w} + \frac{4w}{N}]$ is $\Theta(N/w)$ because $w < N$. By an appropriate affine transformation of $q$, we can conclude from \Cref{paturilem} with $\mu = \Theta(w/N)$ that $d = \Omega(\sqrt{N/w} \cdot \log M)$.
\end{proof}

Why is \Cref{thm:L} useful? \ One may be tempted to apply this theorem
directly to the polynomial $p(x,y)$ obtained in \Cref{lem:symmetrization} to
conclude a degree lower bound (and thus a query complexity lower bound), as
the \textquotedblleft L\textquotedblright -shaped pair of rectangles $%
L=R_{1}\cup R_{2}$ correspond to \textquotedblleft no\textquotedblright\
instances of $\mathsf{AND}_{2}\circ \mathsf{ApxCount}_{N,w}$, while $(4w,4w)$
corresponds to a \textquotedblleft yes\textquotedblright\ instance. \
However, even though $p(x,y)$ is bounded at lattice points in $L$, it need
not be bounded along the entirety of $L$.\footnote{%
One can nevertheless use this intuition to obtain a nontrivial (though
suboptimal) lower bound by inspecting $p$ alone. Using the Markov brothers'
inequality (\Cref{markovlem}), if $\deg (p)=o(\sqrt{w})$, then the bounds on
$p(x,y)$ at lattice points in $L$ imply that $|p(x,y)|\leq 1+o_{w}(1)$ for
all $(x,y)\in L$. Thus, \Cref{thm:L} applies if $\deg (p)=o(\sqrt{w})$, so
overall we get a lower bound of $\Omega \left( \min \left\{ \sqrt{w},\sqrt{%
N/w}\right\} \right) $ for the $\mathsf{SBQP}$ query complexity of $\mathsf{%
AND}_{2}\circ \mathsf{ApxCount}_{N,w}$. See 
\href{https://arxiv.org/abs/1902.02398}{arXiv:1902.02398} for details.} \ 

To obtain a lower bound, we
instead use the connection between the polynomials $p(x,y)$ and $q(s,t)$
from \Cref{lem:symmetrization}, and establish \Cref{thm:sbqp} from the
introduction, restated for convenience:

\sbqp*

\begin{proof}
Let $N > 32w$ (otherwise the theorem holds trivially). Since $Q$ is an $\mathsf{SBQP}$ algorithm, we may suppose that $Q$ accepts with probability at least $2\alpha$ on a ``yes'' instance and with probability at most $\alpha$ on a ``no'' instance (note that $\alpha$ may be exponentially small in $N$). Take $p(x,y)$ and $q(s, t)$ to be the symmetrized bivariate polynomials of degree at most $2T$ defined in \Cref{lem:symmetrization}. Define $L' = ([0, w] \times [0, w]) \cup ([0, w] \times [2w, N]) \cup ([2w, N] \times [0, w])$. The conditions on the acceptance probability of $Q$ for all $S_0, S_1$ that satisfy the $\ApxCount$ promise imply that $p(x, y)$ satisfies these corresponding conditions:
\begin{enumerate}
\item $1 \ge p(x, y) \ge 2\alpha$ for all $(x, y) \in \left([2w, N] \times [2w, N]\right) \cap \mathbb{Z}^2$.
\item $0 \le p(x, y) \le \alpha$ for all $(x, y) \in L' \cap \mathbb{Z}^2$.
\end{enumerate}

Our strategy is to show that if $T = o(w)$, then these conditions on $p$ imply that the polynomial $q(s, t) \cdot \frac{0.9}{\alpha}$ satisfies the statement of \Cref{thm:L} for all sufficiently large $w$. This in turn implies $T = \Omega(\sqrt{N/w})$. This allows us conclude that either $T = \Omega(w)$ or $T = \Omega(\sqrt{N/w})$, which proves the theorem.

Suppose $T = o(w)$, so that $p(x, y)$ and $q(s, t)$ both have degree $d=o(w)$. We begin by upper bounding $p(x, y)$ at the lattice points $(x, y)$ outside of $L'$. 
We claim the following:
\begin{enumerate}[(a)]
\item $|p(x, y)| \le \alpha \cdot a \cdot \exp(bd^2 / w) \le \alpha \cdot a \cdot \exp(bd)$ whenever $(x, y) \in L'$ and either $x$ or $y$ is an integer, where $a$
and $b$ are the constants from \Cref{lem:coppersmith_rivlin}. This follows from \Cref{lem:coppersmith_rivlin} by fixing either $x$ or $y$ to be an integer and viewing the resulting restriction of $p(x, y)$ as a univariate polynomial in the other variable.
\item $|p(x, y)| \le \alpha \cdot a \cdot \exp(bd) \cdot \exp(2\sqrt{3}d) = \alpha \cdot a \cdot \exp((b + 2\sqrt{3})d)$ whenever $x \in [w, 2w]$, $y \in [0, w]$, and $y$ is an integer. This follows \Cref{paturilem}: consider the univariate polynomial $p(\cdot, y)$ on the intervals $[0, w]$ and $[2w, 3w]$, where it is bounded by (a).
\item $|p(x, y)| \le \alpha \cdot a \cdot \exp((b + 2\sqrt{3})d) \cdot a \cdot \exp(bd^2/w) \le \alpha \cdot a^2 \cdot \exp((2b + 2\sqrt{3})d)$ whenever $x \in [w, 2w]$ and $y \in [0, w]$. This follows from \Cref{lem:coppersmith_rivlin}: consider the univariate polynomial $p(x, \cdot)$ on the interval $[0, w]$, where it is bounded at integer points by (b).
\item $|p(x, y)| \le \alpha \cdot a^2 \cdot \exp((2b + 2\sqrt{3})d) \cdot \exp(4dy/w) = \alpha \cdot a^2 \cdot \exp((2b + 2\sqrt{3} + 4y/w)d)$ whenever $x \in [0, N]$, $y \in [w + 1, N]$, and $x$ is an integer. This follows from  \Cref{paturilem}: consider the univariate polynomial $p(x, \cdot)$ on the interval $[0, w]$, where it is bounded by (a) when $x \in [0, w]$ or $x \in [2w, N]$, or bounded by (c) when $x \in [w, 2w]$. 
By an affine shift, this corresponds to applying \Cref{paturilem} with $\mu = 2y/w - 2$, with the observation that $\sqrt{2\mu + \mu^2} < \mu + 2$.
\end{enumerate}

We now use this to upper bound $q(s, t)$ when $s \in [4w, N]$ and $t \in [0, w/2]$. Let $X$ and $Y$ be drawn from $N$-trial binomial distributions with means $s$ and $t$, respectively,
so that $q(s, t) = \E[p(X,Y)]$. Using the above bounds and basic probability, we have 
\begin{align}
0 \leq q(s, t) = \E[p(X,Y)] &\le \alpha \cdot \bigg( \Pr[X \ge 2w, Y \le w] + \Pr[X \le 2w, Y \le w] \cdot a \cdot \exp\left(\left(b+2\sqrt{3}\right)d\right)  \nonumber \\
& \quad + \sum_{y = w+1}^N \Pr[Y = y] \cdot a^2 \cdot \exp\left(\left(2b + 2\sqrt{3} + 4y/w\right)d\right) \bigg)\\
&\le \alpha \cdot \bigg(1 + \Pr[X \le 2w] \cdot a \cdot \exp\left(\left(b+2\sqrt{3}\right)d\right)   \nonumber \\
& \quad + \sum_{y = w+1}^N \Pr[Y \ge y] \cdot a^2 \cdot \exp\left(\left(2b + 2\sqrt{3} + 4y/w\right)d\right)\bigg).
\end{align}
The probabilities above are easily bounded with a Chernoff bound:
\begin{align}
q(s, t)=\E[p(X,Y)] &\le \alpha \cdot \bigg( 1 + a \cdot \exp\left(\left(b+2\sqrt{3}\right)d - w/2\right)  \nonumber\\
& \quad + \sum_{y = w+1}^N a^2 \cdot \exp\left(\left(2b + 2\sqrt{3} + 4y/w\right)d - y/6\right)\bigg).
\end{align}
Because $a$ and $b$ are universal constants from \Cref{lem:coppersmith_rivlin}, when $d = o(w)$, the first exponential term becomes arbitrarily small for all sufficiently large $w$. Moreover, for all sufficiently large $w$, the remaining sum becomes bounded by a geometric sum. For some constant $c$, we have
\begin{align*}
\sum_{y = w+1}^N a^2 \cdot \exp\left(\left(2b + 2\sqrt{3} + 4y/w\right)d - y/6\right) &\le \sum_{y = w+1}^\infty c \cdot \exp\left(-y/12\right)\\
&\le \frac{c}{1 - \exp(-1/12)} \cdot \exp(-w/12)\\
&= o_w(1).
\end{align*}
Thus we conclude that $0 \le q(s, t) \le \alpha \cdot (1 + o_w(1))$ when $s \in [4w, N]$ and $t \in [0, w/2]$ (i.e.,\ $(s, t) \in R_1$ in the statement of \Cref{thm:L}). By symmetry, we can conclude the same bound when $s \in [0, w/2]$ and $t \in [4w, N]$ (i.e.,\ $(s, t) \in R_2$ in the statement of \Cref{thm:L}).

Now, we lower bound $q(4w, 4w)$. Let $X$ and $Y$ be drawn from independent $N$-trial binomial distributions with mean $4w$, so that $q(4w, 4w) = \mathbb{E}\left[p(X,Y)\right]$. Then we have 
\begin{align*}
\mathbb{E}\left[p(X,Y)\right] &\ge 2\alpha \cdot \Pr[X \ge 2w, Y \ge 2w]\\
&\ge 2\alpha \cdot \left(1 - \Pr[X \le 2w] - \Pr[Y \le 2w]\right)\\
&\ge 2\alpha \cdot \left(1 - 2\exp(-w/2)\right)\\
&\ge 2\alpha \cdot (1 - o_w(1))
\end{align*}
We conclude that $q(s, t) \cdot \frac{0.9}{\alpha}$ satisfies the statement of \Cref{thm:L} (with $M=1$) for all sufficiently large $w$.
\end{proof}

We remark that this lower bound is tight, i.e.,\ there exists an $\mathsf{%
SBQP}$ algorithm that makes $O\left( \min \left\{ w,\sqrt{N/w}\right\}
\right) $ queries. \ The $O(\sqrt{N/w})$ upper bound follows from the $%
\mathsf{BQP}$ algorithm of Brassard, H{\o }yer, and Tapp \cite{bht:count}. \
The $O(w)$ upper bound is in fact an $\mathsf{SBP}$ upper bound with the
following algorithmic interpretation: first, guess $w+1$ items randomly from
each of $S_{0}$ and $S_{1}$. \ Then, verify using the membership oracle that
the first $w+1$ items all belong to $S_{0}$ and that the latter $w+1$ items
all belong to $S_{1}$, accepting if and only if this is the case. \ Clearly,
this accepts with nonzero probability if and only if $\left\vert
S_{0}\right\vert \geq w+1$ and $\left\vert S_{1}\right\vert \geq w+1$. 

%=======================================================================
\subsection{Lower bound for \texorpdfstring{$\mathsf{QMA}$}{QMA}}
\label{sec:QMA} 

In this section, we establish the optimal $\mathsf{QMA}$ lower bound 
(\Cref{thm:qmabound}).
We begin by quantitatively improving the $\mathsf{SBQP}$
lower bound for $\AndApxCount$ of \Cref{thm:sbqp},
under the stronger assumption that the parameter $\alpha$ 
in the $\mathsf{SBQP}$ protocol is not smaller than $2^{-w}$.
(In addition to a stronger
conclusion, this assumption also permits a considerably simpler
analysis than was required to prove \Cref{thm:sbqp}).

\begin{lemma}
\label{lem:betterforqma}
Consider any quantum query algorithm $Q^{\mathcal{O}_{S_{0}},\mathcal{O}_{S_{1}}}$
for $\AndApxCount$
that makes $T$ queries to the membership oracles
$\mathcal{O}_{S_{0}}$ and $\mathcal{O}_{S_{1}}$
for the two instances of $\ApxCount$ and
satisfies the following. For some $m=o(w)$, $\alpha=2^{-m}$, and $M \in [1, \alpha^{-1}]$:
\begin{enumerate}
\item If $x \in L$, then $\Pr\left[Q(x) \text{ accepts}\right] \ge
\alpha$.
\item If $x \not\in L$, then $\Pr\left[Q(x) \text{ accepts}\right] \le
\alpha/(2 M)$.
\end{enumerate}
Then $T = \Omega\left(\sqrt{N/w} \cdot \log M\right)$
\end{lemma}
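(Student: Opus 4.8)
The plan is to follow the proof of \Cref{thm:sbqp}, but to replace the delicate Coppersmith--Rivlin and Paturi estimates used there (the bounds (a)--(d)) with the trivial observation that an acceptance probability lies in $[0,1]$. This simplification works precisely because the hypothesis $\alpha = 2^{-m}$ with $m = o(w)$ keeps $\alpha$ (hence $\alpha/(2M)$) well above the Chernoff tails that will appear. Assuming $N > 32w$ (otherwise the claimed bound is vacuous), I would apply \Cref{lem:symmetrization} to $Q$ to obtain bivariate polynomials $p(x,y)$ and $q(s,t)$ of degree at most $2T$ with $q(s,t) = \E[p(X,Y)]$ for $X,Y$ drawn from $N$-trial binomials with means $s,t$. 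Since $p(x,y)$ at integer $x,y$ is a genuine averaged acceptance probability, $0 \le p(x,y) \le 1$ at all lattice points; the two hypotheses on $Q$ translate to $p(x,y) \ge \alpha$ on the ``yes'' lattice points $[2w,N]^2 \cap \mathbb{Z}^2$ and $p(x,y) \le \alpha/(2M)$ on the ``no'' lattice points $L' \cap \mathbb{Z}^2$, where $L' = ([0,w]^2) \cup ([0,w]\times[2w,N]) \cup ([2w,N]\times[0,w])$.

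The key step is to control $q$ on the rectangles $R_1 = [4w,N]\times[0,w/2]$ and $R_2 = [0,w/2]\times[4w,N]$ of \Cref{thm:L}. For $(s,t) \in R_1$, I would split $q(s,t) = \E[p(X,Y)]$ according to whether $(X,Y)$ lands in the ``no'' rectangle $[2w,N]\times[0,w]$: on that event $p \le \alpha/(2M)$, while off it I use only $p \le 1$ together with a Chernoff bound giving $\Pr[X < 2w \text{ or } Y > w] \le 2\exp(-\Omega(w))$, since $X$ has mean $s \ge 4w$ and $Y$ has mean $t \le w/2$. Hence $0 \le q(s,t) \le \alpha/(2M) + 2\exp(-\Omega(w))$. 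Now I invoke the hypotheses: $M \le \alpha^{-1}$ gives $\alpha/(2M) \ge \alpha^2/2 = 2^{-2m-1}$, and since $m = o(w)$ this dominates the tail, $2\exp(-\Omega(w)) \le o_w(1)\cdot 2^{-2m-1} \le o_w(1)\cdot \alpha/(2M)$, so $q(s,t) \le (\alpha/(2M))(1 + o_w(1))$ on $R_1$, and symmetrically on $R_2$. An entirely analogous estimate at $(4w,4w)$, using $p \ge \alpha$ on $[2w,N]^2$, gives $q(4w,4w) \ge \alpha\,\Pr[X \ge 2w,\, Y \ge 2w] \ge \alpha(1 - o_w(1))$.

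Finally I would rescale: let $\widetilde q := \tfrac{2M}{\alpha(1+o_w(1))}\, q$, a real polynomial of degree at most $2T$. Then $0 \le \widetilde q \le 1$ on $R_1 \cup R_2$, and $\widetilde q(4w,4w) \ge 2M\,\tfrac{1-o_w(1)}{1+o_w(1)} \ge 1.5M$ for all sufficiently large $w$, so $\widetilde q$ satisfies the hypotheses of \Cref{thm:L} with parameter $M$. (\Cref{thm:L} does not require symmetry of the polynomial; it symmetrizes internally, so nothing special is needed if $Q$ treats the two oracles asymmetrically.) Hence $2T \ge \deg \widetilde q = \Omega(\sqrt{N/w}\,\log M)$, i.e.\ $T = \Omega(\sqrt{N/w}\,\log M)$.

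The only real obstacle is the quantitative bookkeeping in the key step: verifying that the lower bound $\alpha/(2M) \ge 2^{-2m-1}$ genuinely swamps the Chernoff tail $\exp(-\Omega(w))$. This is exactly where the lemma's hypothesis is used — it is the assumption $m = o(w)$ on the \emph{witness length} (equivalently $\log(1/\alpha) = o(w)$), rather than the much stronger $T = o(w)$ that was needed in \Cref{thm:sbqp}, that makes the trivial bound $p \le 1$ outside $L'$ sufficient and removes any need to propagate the smallness of $p$.
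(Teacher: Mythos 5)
Your proposal is correct and follows essentially the same route as the paper's proof: symmetrize via \Cref{lem:symmetrization}, use the trivial bound $p\le 1$ outside $L'$ together with a Chernoff bound (valid because $\alpha/(2M)\ge 2^{-2m-1}\gg e^{-\Omega(w)}$ when $m=o(w)$) to bound $q$ on $R_1\cup R_2$ and at $(4w,4w)$, then rescale and invoke \Cref{thm:L} with parameter $M$. The only differences are cosmetic (the paper rescales by $1.8M/\alpha$ rather than $2M/(\alpha(1+o_w(1)))$).
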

\begin{proof}
As in the proof of \Cref{thm:sbqp}, define $L' = ([0, w] \times [0, w]) \cup ([0, w] \times [2w, N]) \cup ([2w, N] \times [0, w])$, and
take $p(x,y)$ and $q(s, t)$ to be the symmetrized bivariate polynomials of degree at most $2T$ defined in \Cref{lem:symmetrization}. $p(x, y)$ satisfies the following 
properties. 
\begin{enumerate}
\item[(a)] $1 \ge p(x, y) \ge \alpha$ for all $(x, y) \in \left([2w, N] \times [2w, N]\right) \cap \mathbb{Z}^2$.
\item[(b)] $0 \le p(x, y) \le \alpha/(1.5 M)$ for all $(x, y) \in L' \cap \mathbb{Z}^2$.
\item[(c)] $0 \leq p(x,y) \leq 1$ for all $(x, y) \in \left( [0,N] \times [0, N]\right) \cap \mathbb{Z}^2$.
\end{enumerate}

We use these properties to upper bound $q(s, t)$ when $s \in [4w, N]$ and $t \in [0, w/2]$. Let $X$ and $Y$ be drawn from $N$-trial binomial distributions with means $s$ and $t$, respectively,
so that $q(s, t) = \E[p(X,Y)]$. Using the above bounds and basic probability, we have 
\begin{align*}
0 \leq q(s, t)=\E[p(X,Y)] &\le \alpha/(2M) \Pr[X \ge 2w, Y \le w]  + 
(1-\Pr[X \ge 2w, Y \le w])
\\
&\le  \alpha/(2M) + 2^{-\Omega(w)} \leq (1+o(1)) \alpha/(2 M)
\end{align*}
Here, the first inequality holds by Properties (a)-(c) above,
while the second follows from a Chernoff Bound, and the third holds because
$\alpha/(2 M) \geq 2^{-o(w)}$.

Thus we conclude that $0 \le q(s, t) \le \alpha/(2M) \cdot (1 + o_w(1))$ when $s \in [4w, N]$ and $t \in [0, w/2]$ (i.e.,\ $(s, t) \in R_1$ in the statement of \Cref{thm:L}). By symmetry, we can conclude the same bound when $s \in [0, w/2]$ and $t \in [4w, N]$ (i.e.,\ $(s, t) \in R_2$ in the statement of \Cref{thm:L}).

Now, we lower bound $q(4w, 4w)$. Let $X$ and $Y$ be drawn from independent $N$-trial binomial distributions with mean $4w$, so that $q(4w, 4w) = \mathbb{E}\left[p(X,Y)\right]$. Then we have 
\begin{align*}
\mathbb{E}\left[p(X,Y)\right] &\ge \alpha \cdot \Pr[X \ge 2w, Y \ge 2w]\\
&\ge \alpha \cdot \left(1 - \Pr[X \le 2w] - \Pr[Y \le 2w]\right)\\
&\ge \alpha \cdot \left(1 - 2\exp(-w/2)\right)\\
&\ge \alpha \cdot (1 - o_w(1))
\end{align*}
We conclude that $q(s, t) \cdot \frac{1.8 M}{\alpha}$ satisfies the statement of \Cref{thm:L} for all sufficiently large $w$.
Hence, $T = \Omega\left(\sqrt{N/w} \cdot \log M\right)$ as claimed.
\end{proof}

We now establish \Cref{thm:qmabound} from the introduction,
which quantitatively 
lower bounds the $\mathsf{QMA}$ complexity of $\mathsf{ApxCount}_{N,w}$. 
The analysis exploits
two key properties of the $\mathsf{SBQP}$ protocols that result
from applying \Cref{lem:guessing} to 
a $\mathsf{QMA}$ protocol with witness length $m$: 
(1) the 
parameter $\alpha$ of the $\mathsf{SBQP}$
protocol is not too small (at least $2^{-m}$)
and (2) the multiplicative gap between acceptance probabilities when $f(x)=0$
vs. $f(x)=1$ is at least $2^m$, which may be much greater than $2$.

\qmabound*

\begin{proof}
Consider a $\mathsf{QMA}$
protocol for $\ApxCount$ with witness size $m$ and query cost $T$. If $m=\Omega(w)$, the theorem is vacuous, so suppose
that $m=o(w)$. Running the verifier, Arthur, a constant number of times with fresh witnesses to reduce the soundness and completeness errors, one obtains a verifier with soundness and completeness errors $1/6$ that receives an $O(m)$-length witness and makes $O(T)$ queries. 
Repeating twice with two oracles and computing the $\mathsf{AND}$, one obtains a $\mathsf{QMA}$ verifier $V'^{\mathcal{O}_{S_0},\mathcal{O}_{S_1}}$ for $\AndApxCount$ with soundness and completeness errors $1/3$ that receives an $O(m)$-length witness and makes $O(T)$ queries. 
Applying \Cref{lem:guessing} to $V'$, there exists a quantum query algorithm $Q^{\mathcal{O}_{S_0},\mathcal{O}_{S_1}}$ for $\AndApxCount$ that makes $O(m \cdot T)$ queries and satisfies the hypothesis of \Cref{lem:betterforqma} with $M=2^{-\Theta(m)}$. \Cref{thm:sbqp} tells us that $m \cdot T = \Omega\left(\sqrt{N/w} \cdot m\right)$.
Equivalently, $T = \Omega\left(\sqrt{N/w}\right)$.
\end{proof}

\Cref{thm:sbqp} also implies several oracle separations:

\begin{corollary}
\label{cor:qma_separation} There exists an oracle $A$ and a pair of
languages $L_0, L_1$ such that:

\begin{enumerate}
\item $L_0, L_1 \in \mathsf{SBP}^A$

\item $L_0 \cap L_1 \not\in \mathsf{SBQP}^A$.

\item $\mathsf{SBP}^A \not\subset \mathsf{QMA}^A$.
\end{enumerate}
\end{corollary}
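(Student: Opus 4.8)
The plan is to obtain all three claims of \Cref{cor:qma_separation} from a single query-complexity lower bound---namely, \Cref{thm:sbqp}---via the standard oracle-encoding paradigm that trades a query lower bound for a circuit/polynomial-size argument (in the style of Beigel or of B\"ohler--Glasser--Meister \cite{BGM06}). First I would fix the natural oracle problem: let the oracle $A$ encode, for each input length $n$, a pair of sets $S_0^{(n)}, S_1^{(n)} \subseteq [N]$ with $N = 2^n$, together with a ``padding'' bit telling which of the two $\mathsf{ApxCount}$ promise cases each set is in, and set $w = w(n)$ to be the threshold chosen so that $\min\{w,\sqrt{N/w}\} = N^{\Omega(1)}$ (e.g.\ $w = N^{1/2}$, giving $\min\{w,\sqrt{N/w}\} = N^{1/4}$; any polynomial threshold works). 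Define $L_i$ to be the language that accepts $1^n$ iff $|S_i^{(n)}| \ge 2w$. Then $L_0, L_1 \in \mathsf{SBP}^A$ by the trivial $\mathsf{SBP}$ algorithm (sample $j \in [N]$ uniformly, accept iff $j \in S_i$), so Claim~1 is immediate, and similarly $L_0 \cap L_1$ is the language of $\AndApxCount$ instances encoded by $A$.

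For Claim~2, suppose for contradiction that $L_0 \cap L_1 \in \mathsf{SBQP}^A$; then there is a polynomial-time quantum algorithm with the $\mathsf{SBQP}$ acceptance gap that, on input $1^n$, makes $\mathrm{poly}(n) = \mathrm{polylog}(N)$ queries to $A$. Restricting attention to the portion of $A$ encoding $S_0^{(n)}, S_1^{(n)}$, this yields an $\mathsf{SBQP}$ query algorithm for $\AndApxCount$ making $T = \mathrm{polylog}(N)$ queries, contradicting \Cref{thm:sbqp}, which forces $T = \Omega(\min\{w,\sqrt{N/w}\}) = N^{\Omega(1)}$. The only subtlety here is the usual diagonalization bookkeeping: one enumerates all $\mathsf{SBQP}^A$ machines (with all choices of the polynomial soundness parameter $\sigma$), and for each one dedicates a sufficiently large input length $n$ at which the query lower bound exceeds that machine's query budget; one then plants, at that length, a hard instance of $\AndApxCount$ on which the machine fails. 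Because the lower bound of \Cref{thm:sbqp} is a worst-case statement over the promise, such a hard instance exists; choosing the oracle bits at disjoint input lengths for disjoint machines keeps all the requirements non-conflicting.

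For Claim~3, I would leverage the structural fact highlighted in the paper: $\mathsf{QMA}$ is closed under intersection (a $\mathsf{QMA}$ verifier for $L_0 \cap L_1$ asks Merlin for the two witnesses and checks both, and entangling the witnesses gains a dishonest Merlin nothing), whereas by Claim~2 $\mathsf{SBQP}^A$ is \emph{not} closed under intersection. Concretely, if $\mathsf{SBP}^A \subseteq \mathsf{QMA}^A$ held, then since $L_0, L_1 \in \mathsf{SBP}^A$ we would get $L_0, L_1 \in \mathsf{QMA}^A$, hence $L_0 \cap L_1 \in \mathsf{QMA}^A \subseteq \mathsf{SBQP}^A$ (the last inclusion holding relative to all oracles, as in \Cref{fig:relations} and \Cref{lem:guessing}), contradicting Claim~2. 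This simultaneously gives the oracle with respect to which $\mathsf{SBQP}$ is not closed under intersection. The main obstacle, as usual in these arguments, is not any single inequality but the careful staging of the diagonalization so that the adversary can always find room to plant a hard $\AndApxCount$ instance at a fresh input length; this is routine given that \Cref{thm:sbqp} is a clean worst-case query bound, so I would state it briefly and refer to the standard template (e.g.\ \cite{BGM06}) for the mechanics.
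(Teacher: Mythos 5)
Your proposal follows the paper's proof quite closely: both construct an oracle $A$ encoding a pair of $\mathsf{ApxCount}$ instances at each length, use \Cref{thm:sbqp} to diagonalize against $\mathsf{SBQP}^A$ machines so that $L_0\cap L_1\notin\mathsf{SBQP}^A$, and then derive $\mathsf{SBP}^A\not\subset\mathsf{QMA}^A$ from the facts that $\mathsf{QMA}$ is closed under intersection and that $\mathsf{QMA}^A\subseteq\mathsf{SBQP}^A$ relativizes.

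There is, however, one concrete flaw in your construction. You let the oracle encode, in addition to the sets $S_0^{(n)},S_1^{(n)}$, a ``padding bit telling which of the two $\mathsf{ApxCount}$ promise cases each set is in.'' If $A$ literally contains such a bit, then any machine (even a deterministic one making a single query) can decide $L_0\cap L_1$ by reading those two bits, so $L_0\cap L_1\in\mathsf{P}^A\subseteq\mathsf{SBQP}^A$ and the diagonalization you describe in Claim~2 collapses---the $\mathsf{SBQP}^A$ machine is under no obligation to ``restrict attention to the portion of $A$ encoding $S_0^{(n)},S_1^{(n)}$,'' and \Cref{thm:sbqp} only lower-bounds algorithms whose sole access to the instance is via membership queries to the sets. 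The fix is to drop that bit entirely, as the paper does: the oracle encodes only the sets (the paper uses $A_i^n=\{x\in\{0,1\}^n:A(i,x)=1\}$), and the promise that each $|A_i^n|$ is in one of the two allowed ranges is a constraint the oracle \emph{builder} maintains while diagonalizing, not information revealed to the machine. With that correction, the rest of your argument---including the careful staging of the diagonalization over machines and the closure-under-intersection step---matches the paper's.
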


\begin{proof}
For an arbitrary function $A: \{0,1\}^* \to \{0,1\}$ and $i \in \{0,1\}$, define $A_i^n = \{x \in \{0,1\}^n : A(i, x) = 1\}$. Define the unary language $L^A_i = \{1^n : |A_i^n| \ge 2^{n/2}\}$. Observe that as long as $A$ satisfies the promise $|A_i^n| \ge 2^{n/2}$ or $|A_i^n| \le 2^{n/2-1}$ for all $n \in \mathbb{N}$, then $L^A_i \in \mathsf{SBP}^A$. Intuitively, the oracles $A$ that satisfy this promise encode a pair of $\ApxCount$ instances $|A_0^n|$ and $|A_1^n|$ for every $n \in \mathbb{N}$ where $N = 2^n$ and $w = 2^{n/2 - 1}$.

\Cref{thm:sbqp} tells us that an $\mathsf{SBQP}$ algorithm $Q$ that makes $o(2^{n/4})$ queries fails to solve $\AndApxCount$ on \textit{some} pair $(S_0, S_1)$ that satisfies the promise. Thus, one can construct an $A$ such that $L_0, L_1 \in \mathsf{SBP}^A$ and $L_0 \cap L_1 \not\in \mathsf{SBQP}^A$, by choosing $(A_0^n, A_1^n)$ so as to diagonalize against all $\mathsf{SBQP}$ algorithms.

Because $\mathsf{QMA}^A$ is closed under intersection for any oracle $A$, and because $\mathsf{QMA}^A \subseteq \mathsf{SBQP}^A$ for any oracle $A$, it must be the case that either $L_0 \not\in \mathsf{QMA}^A$ or $L_1 \not\in \mathsf{QMA}^A$.
\end{proof}

%We remark that this gives an alternative construction of an oracle relative
%to which $\mathsf{SBP}$ is not closed under intersection. \ To our
%knowledge, this is the first that uses the polynomial method directly.

%=======================================================================
\section{Approximate counting with quantum samples and reflections}

\label{sec:samplesreflections}

%=======================================================================
\subsection{The Laurent polynomial method}

\label{RESULT}\label{sec:lower}\label{sec:Laurent}

By using Minsky--Papert symmetrization (\Cref{symlem}), we now prove the key
fact that relates quantum algorithms, of the type we're considering, to real
Laurent polynomials in one variable. \ The following lemma generalizes the
connection between quantum algorithms and real polynomials established by
Beals et al.\ \cite{bbcmw}.

\begin{lemma}
\label{laurentlem} Let $Q$ be a quantum algorithm that makes $T$ queries to $%
\mathcal{O}_{S}$, uses $R_1$ copies of $\left\vert S\right\rangle $, and
makes $R_2$ uses of the unitary $\mathcal{R}_S$. \ Let $R:= R_1 + 2R_2$. 
For $k\in\{1,\ldots,N\}$, let
\begin{equation}
q\left( k\right) :=\E_{\left\vert S\right\vert =k}\left[ \Pr\left[ Q^{%
\mathcal{O}_{S},\mathcal{R}_S}\left( \left\vert S\right\rangle ^{\otimes
R_1}\right) \text{ accepts}\right] \right] .
\end{equation}
Then $q$ can be written a univariate Laurent polynomial, with maximum 
exponent at most $2T+R$\ and minimum exponent at least $-R$.
\end{lemma}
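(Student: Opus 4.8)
The plan is to run the Beals--Buhrman--Cleve--Mosca--de~Wolf polynomial-method argument \cite{bbcmw} while carefully tracking the $1/\sqrt{|S|}$ normalization factors introduced by the $R_1$ input copies of $|S\rangle$ and by the $R_2$ reflections. Fix $k\in\{1,\dots,N\}$, condition on $|S|=k$, and let $X\in\{0,1\}^N$ be the characteristic string of $S$, so that inside any copy of $|S\rangle$ the amplitude on $|i\rangle$ is $x_i/\sqrt{k}$. By deferring measurements I may assume $Q$ applies a sequence of operations to $|0\rangle_{\mathrm{work}}\otimes|S\rangle^{\otimes R_1}$, each operation being an $S$-independent unitary, a query $\mathcal{O}_S$, or a reflection $\mathcal{R}_S=\mathbbold{1}-2|S\rangle\langle S|$ acting on one of its index registers, and then accepts iff a computational-basis measurement of its output register lands in some fixed set $\mathcal{A}$.

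The core of the proof is a degree invariant maintained along the computation. I claim that after $t$ queries and $\rho$ reflections (interspersed with any number of $S$-independent unitaries), the amplitude of every computational basis state $|z\rangle$ can be written as $\sum_{g}A_{z,g}(X)\,k^{-g/2}$, where the sum runs over integers $g\equiv R_1\pmod 2$ with $R_1\le g\le R_1+2\rho$, and each $A_{z,g}$ is a polynomial in $x_1,\dots,x_N$, with coefficients independent of $k$, of degree at most $t+g$. Initially the only level is $g=R_1$, with $A_{z,R_1}$ a product of at most $R_1$ variables, of degree $\le R_1$. An $S$-independent unitary takes $k$-free linear combinations of amplitudes level by level, preserving the form and all degree bounds; a query multiplies the amplitude on $|i\rangle|b\rangle$ by $(1-2bx_i)$, raising $t$ by one and the degree of each $A_{z,g}$ by at most one; and a reflection about $|S\rangle$ sends the amplitude $\alpha_{j,r}$ on a basis state $|j\rangle|r\rangle$ (with $|j\rangle$ in the reflected register and $|r\rangle$ elsewhere) to $\alpha_{j,r}-\tfrac{2x_j}{k}\sum_i x_i\alpha_{i,r}$, so that (using $\tfrac1k\cdot k^{-g/2}=k^{-(g+2)/2}$) the new level-$g'$ coefficient is $A_{j,r,g'}(X)-2x_j\sum_i x_i A_{i,r,g'-2}(X)$, of degree at most $\max\{t+g',\,2+t+(g'-2)\}=t+g'$, with the level range growing by two. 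This proves the invariant by induction.

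At the end of the computation $t=T$ and $\rho=R_2$, so the levels lie in $\{R_1,R_1+2,\dots,R\}$ with $R=R_1+2R_2$. The acceptance probability conditioned on $|S|=k$ is therefore
\begin{equation}
\sum_{z\in\mathcal{A}}\bigl|\alpha_z(X)\bigr|^2=\sum_{z\in\mathcal{A}}\sum_{g,g'}\frac{A_{z,g}(X)\,\overline{A_{z,g'}(X)}}{k^{(g+g')/2}}=\sum_{h=R_1}^{R}\frac{\tilde{P}_h(X)}{k^{h}},
\end{equation}
where $\tilde{P}_h(X):=\sum_{z\in\mathcal{A}}\sum_{g+g'=2h}A_{z,g}(X)\overline{A_{z,g'}(X)}$ is a real polynomial in $X$ (conjugate pairs combine to twice a real part, and diagonal terms are $|A_{z,g}|^2$) of degree at most $\max_{g+g'=2h}\bigl((T+g)+(T+g')\bigr)=2T+2h$. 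Replacing $\tilde{P}_h$ by its multilinear reduction on $\{0,1\}^N$ (which does not increase degree) and applying Minsky--Papert symmetrization (\Cref{symlem}) gives a polynomial $r_h$ of degree at most $2T+2h$ with $r_h(k)=\E_{|X|=k}[\tilde{P}_h(X)]$. Since $k^{-h}$ is constant on the slice $|X|=k$, I can pull it outside the expectation to get $q(k)=\sum_{h=R_1}^{R}r_h(k)/k^{h}$ for every $k\in\{1,\dots,N\}$; because the $r_h$ are fixed polynomials, $q$ is the Laurent polynomial $\sum_{h=R_1}^{R}r_h(t)/t^{h}$. Its maximum exponent is at most $\max_h(\deg r_h-h)\le\max_h(2T+h)=2T+R$ and its minimum exponent is at least $-\max_h h=-R$, as claimed.

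The delicate point — the one place a naive argument fails — is calibrating the degree invariant with the exact constants. Tracking ``numerator degree $\le$ something'' and ``power of $1/\sqrt{k}$ $\le$ something'' separately loses an additive $\Theta(R_2)$ in the final exponents, which is fatal since the lemma demands the tight value $2T+R$. The level-indexed invariant $\deg A_{z,g}\le t+g$ is exactly what is needed: each reflection simultaneously pushes amplitude two levels down in $g$ and may raise the numerator degree by two, and these balance so that $\deg\tilde{P}_h-h\le 2T+h\le 2T+R$. A secondary subtlety is that the $A_{z,g}$ must be built as honest polynomials in $X$ independent of $k$ (the reflection recursion above never substitutes $k=\sum_i x_i$), so that the single Laurent polynomial $\sum_h r_h(t)/t^h$ is simultaneously valid at all $k$, rather than a separate rational expression for each $k$.
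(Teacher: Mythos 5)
Your proof is correct and reaches the tight exponents, and at its core it is the same argument as the paper's: propagate a polynomial representation of the amplitudes through queries and reflections \`a la Beals et al., then apply Minsky--Papert symmetrization. The difference is purely in how the degree bookkeeping is organized. The paper keeps each amplitude as a \emph{single} polynomial numerator over a \emph{single} power of $|S|$: the reflection's matrix entry is written as $\bigl(\delta_{ij}|S|-2x_ix_j\bigr)/|S|$, and the term $\delta_{ij}|S|$ is absorbed into the numerator as the degree-$1$ polynomial $\delta_{ij}\sum_\ell x_\ell$. With that substitution, each reflection raises the numerator degree by at most $2$ and the denominator power by exactly $1$, so the final amplitude has degree at most $R_1+2R_2+T=R+T$ over $|S|^{R/2}$, and squaring gives degree $2(R+T)$ over $|S|^{R}$, i.e.\ maximum exponent $2T+R$ and minimum exponent $-R$ --- already tight. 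So your claim that tracking ``numerator degree'' and ``denominator power'' separately necessarily loses an additive $\Theta(R_2)$ is not quite right: it is only the version that bounds $\max_g \deg A_{z,g}$ and the level range \emph{independently} (so that the worst case pairs the largest degree with the smallest power of $k^{-1/2}$) that loses $2R_2$. Your graded invariant $\deg A_{z,g}\le t+g$ is a legitimate alternative fix that avoids substituting $|S|=\sum_i x_i$ into the numerators; the paper's common-denominator trick is the other fix, and is slightly shorter. Both are fine.
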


\begin{proof}
Let $|\psi_\mathrm{initial}\>$ denote the initial state of the algorithm,
which we can write as
\begin{align*}
|\psi_\mathrm{initial}\> &=
\left\vert S\right\rangle ^{\otimes R_1}
=\left(\frac{1}{\sqrt{|S|}} \sum_{i\in S}|i\>\right)^{\otimes R_1}
=\left(\frac{1}{\sqrt{|S|}} \sum_{i\in [N]}x_i|i\>\right)^{\otimes R_1}\\
&=\frac{1}{\left\vert S\right\vert ^{R_1/2}}\sum_{i_{1},\ldots,i_{R_1}\in\left[
N\right]  }x_{i_{1}}\cdots x_{i_{R_1}}\left\vert i_{1},\ldots,i_{R_1}\right\rangle.
\end{align*}
Thus, each amplitude is a complex multilinear polynomial in $X=\left(
x_{1},\ldots,x_{N}\right)  $ of degree $R_1$, divided by $\left\vert
S\right\vert ^{R_1/2}$.

Throughout the algorithm, each amplitude will remain a complex multilinear
polynomial in $X$ divided by some power of $|S|$.
Since $x_{i}^{2}=x_{i}$\ for all $i$, we can always
maintain multilinearity without loss of generality.

Like Beals et al.\ \cite{bbcmw}, we now consider how the polynomial degree of each amplitude and the power of $|S|$ in the denominator change as the algorithm progresses.
We have to handle 3 different kinds of unitaries that the quantum circuit may use:
the membership query oracle $\mathcal{O}_S$, unitaries independent of the input,
and the reflection unitary $\R_S$.

The first two cases are handled as in Beals et al.\
Since $\mathcal{O}_S$ is a unitary whose entries are degree-1 polynomials
in $X$, each use of this unitary increases a particular
amplitude's degree as a polynomial by $1$ and does not change the power of $|S|$ in the denominator. \
Second, input-independent unitary transformations only take linear
combinations of existing polynomials and hence do not increase the degree of the amplitudes or the power of $|S|$ in the denominator.
Finally, we consider the reflection unitary $\R_S = \id - 2|S\>\<S|$.
The $(i,j)^\mathrm{th}$ entry of this operator is $\delta_{ij} - \frac{2x_ix_j}{|S|} = \frac{\delta_{ij}|S| - 2x_ix_j}{|S|}$, where $\delta_{ij}$ is the Kronecker delta function.
Since $|S| = \sum_i x_i$, this is a degree-2 polynomial divided by $|S|$.
Hence applying this unitary will increase the degree of the amplitudes by $2$ and increase
the power of $|S|$ in the denominator by $1$.

In conclusion, we start with each amplitude being a polynomial of degree $R_1$ divided by $|S|^{R_1/2}$. $T$ queries to the membership oracle will increase the degree of each amplitude by at most $T$ and leave the power of $|S|$ in the denominator unchanged. $R_2$ uses of the reflection unitary will increase the degree by at most $2R_2$ and the power of $|S|$ in the denominator by $R_2$. It follows that $Q$'s final state has the form%
\begin{equation}
\left\vert \psi_\mathrm{final}\right\rangle =\sum_z \alpha_{z}\left(  X\right)
\left\vert z\right\rangle ,
\end{equation}
where each $\alpha_{z}\left(  X\right)  $\ is a complex multilinear polynomial
in $X$ of degree at most $R_1+2R_2+T = R+T$, divided by $\left\vert S\right\vert
^{R_1/2+R_2} = |S|^{R/2}$. \ Since $X$ itself is real-valued, it follows that the real and
imaginary parts of $\alpha_{z}\left(  X\right)  $, considered individually,
are real multilinear polynomials in $X$\ of degree at most $R+T$\ divided by
$\left\vert S\right\vert ^{R/2}$.

Hence, if we let%
\begin{equation}
p\left(  X\right)  :=\Pr\left[  Q^{\mathcal{O}_{S},\R_S}\left(  \left\vert
S\right\rangle ^{\otimes R_1}\right)  \text{ accepts}\right]  ,
\end{equation}
then%
\begin{equation}
p\left(  X\right)  =\sum_{\text{accepting }z}\left\vert \alpha_{z}\left(
X\right)  \right\vert ^{2}=\sum_{\text{accepting }z}\left(  \operatorname{Re}%
^{2}\alpha_{z}\left(  X\right)  +\operatorname{Im}^{2}\alpha_{z}\left(
X\right)  \right)
\end{equation}
is a real multilinear polynomial in $X$ of degree at most $2\left(
R+T\right)  $, divided through (in every monomial) by $\left\vert S\right\vert
^{R}=\left\vert X\right\vert ^{R}$.

Now consider%
\begin{equation}
q\left(  k\right)  :=\E_{\left\vert X\right\vert =k}\left[
p\left(  X\right)  \right]  .
\end{equation}
By \Cref{symlem}, this is a real univariate polynomial in $\left\vert
X\right\vert $ of degree at most $2\left(  R+T\right)  $, divided through (in
every monomial) by $\left\vert S\right\vert ^{R}=\left\vert X\right\vert ^{R}%
$. \ Or said another way, it's a real Laurent polynomial in $\left\vert
X\right\vert $, with maximum exponent at most $R+2T$\ and minimum exponent at
least $-R$.
\end{proof}

%=======================================================================
\subsection{Upper bounds}
\label{UPPER}\label{sec:upper}

Before proving our lower bounds on the degree of 
Laurent polynomials approximating $\mathsf{ApxCount}_{N,w}$, we
establish some simpler \emph{upper bounds}.
We show upper bounds on Laurent polynomial degree and in the 
queries, samples, and reflections model.

\para{Laurent polynomial degree of approximate counting.}
We now describe a \emph{purely negative} degree Laurent polynomial of degree $O(w^{1/3})$ for
approximate counting. \ This upper bound will serve as an important source
of intuition when we prove the (matching) lower bound of \Cref{thm:main}
(see \Cref{s:nextsubsection}). We are thankful to user \textquotedblleft
fedja\textquotedblright\ on MathOverflow for describing this construction.%
\footnote{%
See %
\url{https://mathoverflow.net/questions/302113/real-polynomial-bounded-at-inverse-integer-points}%
}

\begin{lemma}[fedja]
\label{fedjatight} For all $w$, there is a real polynomial $p$ of degree $O\left(w^{1/3}\right)$ such that:
\begin{enumerate}
\item $0 \le p(1/k) \le \frac{1}{3}$ for all $k \in [w]$.
\item $\frac{2}{3} \le p(1/k) \le 1$ for all integers $k \ge 2w$.
\item $0 \le p(1/k) \le 1$ for all $k \in \{w+1,w+2,\ldots,2w-1\}$.
\end{enumerate}
\end{lemma}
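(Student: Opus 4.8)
The plan is to construct $p$ as a composition of two lower-degree polynomials: an ``outer'' univariate polynomial that behaves like a smoothed step function, composed with an ``inner'' polynomial that maps the inverse-integer points $1/k$ to a convenient rescaled domain. The key observation is that we want $p(1/k)$ to be close to $0$ for $k \le w$ and close to $1$ for $k \ge 2w$; equivalently, writing $t = 1/k$, we want a polynomial in $t$ that is near $0$ on $\{1/k : k \in [w]\}$ (a set of points clustered near $0$ but spread out up to $1$) and near $1$ on $\{1/k : k \ge 2w\}$ (a set of points accumulating at $0$ from below $1/(2w)$). So the two ``clusters'' of points are separated by the gap $(1/(2w), 1)$ on the lower end only near $t = 1/(2w)$ — the points $1/k$ for $k \in [w]$ actually spread all the way out to $t=1$. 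The right way to see the structure: after substituting $t = 1/k$, the ``YES'' points lie in $(0, 1/(2w)]$ and the ``NO'' points lie in $[1/w, 1]$, with a multiplicative gap of $2$ between the two regimes near the threshold.

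The first step I would take is to reduce to approximating a step function on an interval with a Chebyshev-type construction, exploiting that the hard region is near $t = 0$ where the points pile up. Rescale by setting $u = 2wt = 2w/k$, so NO points satisfy $u \ge 2$ (these are $u = 2w/k$ for $k \in [w]$, ranging over $[2, 2w]$) and YES points satisfy $0 < u \le 1$ (these are $u = 2w/k$ for $k \ge 2w$, i.e., $u \in (0,1]$). Now the two regions are separated by the fixed gap $(1,2)$, independent of $w$ — but the catch is that YES points accumulate densely near $0$ and NO points spread out to $u = 2w$. The degree in $t$ equals the degree in $u$, so we need a polynomial $g(u)$ of degree $O(w^{1/3})$ with $g \in [2/3,1]$ on $(0,1]$, $g \in [0,1/3]$ on $[2, 2w]$ at integer-related points, and $g \in [0,1]$ on $(1,2)$. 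The main obstacle is the conflict between the two ends: Chebyshev/Jackson-type approximation of a step function on $[0, 2w]$ that transitions over a constant-width window would naively need degree $\Omega(\sqrt{w})$, not $O(w^{1/3})$; so a single direct construction cannot work, and we must use that we only need control at the specific points $u = 2w/k$ rather than on the whole interval.

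The resolution — and this is where I expect the real work to lie — is to split the NO region into a ``near'' part and a ``far'' part and handle them separately via a product construction. Write $p = p_1 \cdot p_2$ (or a suitable combination). Let $p_1$ be a polynomial of degree $O(w^{1/3})$ that is $\approx 1$ on the YES region $(0,1]$ and small on the ``near NO'' points $u \in [2, w^{1/3}]$ (equivalently $k \in [2w^{2/3}, w]$), but only needs to be bounded (say in $[0,1]$) on the ``far NO'' points; since the near-NO points only extend to $u = O(w^{1/3})$, a degree-$O(w^{1/3})$ Chebyshev-like polynomial achieving a constant-width transition near $u = 1.5$ suffices on the interval $[0, O(w^{1/3})]$ — one can use a damped/shifted Chebyshev polynomial $T_d$ scaled so that $|p_1| \le 1$ on $[0, O(w^{1/3})]$ while $p_1(u) \approx 1$ for $u \le 1$, with degree $d = \Theta(w^{1/3})$ by Paturi's bound (\Cref{paturilem}) controlling growth. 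Then let $p_2$ be a polynomial, again of degree $O(w^{1/3})$, that is $\approx 1$ on YES points and small on the ``far NO'' points $k \in [w^{2/3}, w]$ say, using the fact that there are only $O(w^{2/3})$ such points at inverse-integer locations $1/k$ — one can directly build $p_2$ as a product $\prod (1 - t/t_j)$-type object over a carefully chosen $O(w^{1/3})$-size subset, or better, adapt the MathOverflow ``fedja'' idea of using that the relevant points are $\{1/k\}$ and a polynomial of degree $\ell$ can vanish-ish on a geometric-like progression of $2^{O(\ell)}$ of them. I would look up or reconstruct fedja's exact argument here, since the cited MathOverflow post is the stated source; the crux is a clever choice of which $O(w^{1/3})$ interpolation nodes to use so that a single low-degree polynomial, via the structure of the points $1/k$, is forced to be small at \emph{all} integer $k \ge 2w$ while staying bounded at $k \le w$. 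Once $p_1, p_2$ (and possibly a third factor for the far tail via \Cref{lem:coppersmith_rivlin}-type bounds) are in hand, multiplying them and checking the three claimed inequalities region-by-region — using Chernoff/Paturi bounds to control the product at non-node points in $\{w+1,\dots,2w-1\}$ — completes the proof, with total degree $O(w^{1/3})$.
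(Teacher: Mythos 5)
Your overall architecture---a product of two low-degree factors, one implementing the $1/(2w)$-versus-$1/w$ transition on a short interval near $0$, and one neutralizing the points that fall outside that interval---matches the paper's, but there is a genuine gap at exactly the step you defer to ``looking up fedja's argument,'' and the way you split the points makes that step impossible as set up. In your rescaled variable $u=2w/k$ you let the transition factor $p_1$ be bounded only on $[0,O(w^{1/3})]$, so the points it leaves uncontrolled are those with $u>w^{1/3}$, i.e.\ $k=O(w^{2/3})$---that is $\Theta(w^{2/3})$ points. A degree-$O(w^{1/3})$ factor cannot vanish at all of them, and \Cref{paturilem} does not keep $p_1$ bounded there: a degree-$d$ polynomial bounded on $[0,w^{1/3}]$ can be as large as $w^{\Theta(d)}$ at $u=2w$. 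Hence the product $p_1p_2$ has no reason to lie in $[0,1]$ at the far points; you would need $p_2$ to cancel an enormous blow-up of $p_1$ pointwise, which your sketch does not (and realistically cannot) arrange.

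The paper's construction places the boundary differently, at $t=1/k=w^{-1/3}$ rather than at $k\approx w^{2/3}$: then only the $w^{1/3}$ points $1,1/2,\dots,1/w^{1/3}$ lie outside the controlled interval, and the explicit gadget $u(t)=\prod_{j=1}^{w^{1/3}}(1-jt)$ (degree $w^{1/3}$) vanishes exactly at those points, lies in $[0,1]$ for all $t\in[0,w^{-1/3}]$, and equals $1-O(w^{-1/3})$ for $t\le 1/(2w)$. The second factor then only needs to be a step approximator on $[0,w^{-1/3}]$ with value at least $2/3$ on $[0,1/(2w)]$, value at most $1/3$ on $[1/w,w^{-1/3}]$, and range $[0,1]$ throughout; after rescaling by $2w$ this is the standard problem of distinguishing $[0,1]$ from $[2,2w^{2/3}]$ while staying bounded on $[0,2w^{2/3}]$, solvable in degree $O(\sqrt{w^{2/3}})=O(w^{1/3})$. (One caveat, which applies to the paper's own one-sentence description of $v$ as well: this second factor must genuinely stay below $1/3$ on all of $[1/w,w^{-1/3}]$, not merely oscillate within $[0,1]$ there as a raw affinely-shifted Chebyshev polynomial would; otherwise condition 1 of \Cref{fedjatight} fails at points $k$ with $w^{2/3}\ll k\le w$ where the first factor is already $1-o(1)$.) Your proposal does correctly demand this ``genuinely small on the near side'' property, but without the $\prod_j(1-jt)$ gadget and with the near/far boundary in the wrong place, the construction does not close.
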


\begin{proof}
Assuming for simplicity that $w$ is a perfect cube, consider%
\begin{equation}
u\left(  x\right)  :=\left(  1-x\right)  \left(  1-2x\right)  \cdots\left(
1-w^{1/3}x\right)  .
\end{equation}
Notice that $\deg\left(  u\right)  =w^{1/3}$\ and $u\left(  \frac{1}%
{k}\right)  =0$ for all $k\in\left[  w^{1/3}\right]  $. \ Furthermore, we have
$u\left(  x\right) \in [0,1]$\ for all $x\in\left[
0,\frac{1}{w^{1/3}}\right]  $, and also $u\left(  x\right)  \in\left[
1-O\left(  \frac{1}{w^{1/3}}\right)  ,1\right]  $\ for all $x\in\left[
0,\frac{1}{w}\right]  $. \ Now, let $v$\ be the Chebyshev polynomial of degree
$w^{1/3}$, affinely adjusted so that $v\left(  x\right)
\in [0,1]$\ for all $x\in\left[  0,\frac{1}{w^{1/3}}\right]  $ (rather
than in $[-1,1]$ for all all $\left\vert x\right\vert \leq1$), and with a large jump between
$\frac{1}{2w}$\ and $\frac{1}{w}$. \ Then the product, $p(x):=u\left(  x\right)  v\left(  x\right)  $, has degree $2w^{1/3}$\ and
satisfies all the requirements, except possibly that the constants $\frac{1}{3}$ and $\frac{2}{3}$ in the first two requirements may be off. Composing with a constant degree polynomial corrects this, and gives a polynomial of degree $O(w^{1/3})$ that satisfies all three requirements.
\end{proof}

Interestingly, if we restrict our attention to purely negative degree
Laurent polynomials, then a matching lower bound is not too hard to show. \
In the same MathOverflow post, user fedja also proves the following, which can also
be shown using earlier work of Zhandry \cite[Proof of Theorem 7.3]{zhandry}):

\begin{lemma}%[\protect\cite{zhandry}, see also fedja]
\label{fedjalem}Let $p$\ be a real polynomial, and suppose that $\left\vert
p\left( 1/k\right) \right\vert \leq1$ for all $k\in\left[ 2w\right] $, and
that $p\left( \frac{1}{w}\right) \leq\frac{1}{3}$\ while $p\left( \frac{1}{2w%
}\right) \geq\frac{2}{3}$. \ Then $\deg\left( p\right) =\Omega\left(
w^{1/3}\right) $.
\end{lemma}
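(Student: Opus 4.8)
The goal is to show that any real polynomial $p$ with $|p(1/k)| \le 1$ for $k \in [2w]$, and with a jump of $\ge \frac13$ between the two consecutive inverse-integer points $1/(2w)$ and $1/w$, must have degree $\Omega(w^{1/3})$. The natural approach is to assume $d := \deg(p) = o(w^{1/3})$ and derive a contradiction with the jump condition, by showing that a polynomial which is small in absolute value at \emph{all} the points $1/k$ (for $k \le 2w$) cannot change much between two nearby such points when its degree is too small. Note the points $1/k$ cluster near $0$: the gap $1/w - 1/(2w) = 1/(2w)$ is tiny, and there are $\sim w^{2/3}$ of the points $1/k$ inside (say) the interval $[0, 1/w^{1/3}]$, with consecutive gaps of size $\Theta(1/k^2)$.

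\textbf{Step 1: reduce to a statement about $p$ on a continuous interval.} First I would observe that it suffices to control $p$ on the interval $I = [0, 1/w^{1/3}]$, since the jump occurs inside $I$. I would invoke an Ehlich--Zeller / Coppersmith--Rivlin type argument adapted to \emph{non-uniformly spaced} points — precisely the situation \Cref{ezrclem} is designed for. The subtlety is that \Cref{ezrclem} requires consecutive gaps $\le 1$ after rescaling and degree $\le \sqrt{k}$ where $k$ is the interval length in units of the gap. Rescale $I$ to $[0, K]$ where $K \approx (1/w^{1/3})/(1/w^2)$... but the gaps are not uniform. Instead, the cleaner route is: split $I$ near the origin. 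On the piece of $I$ with $k \in \{w^{1/3}, \ldots, 2w\}$ — i.e. $x \in [1/(2w), 1/w^{1/3}]$ — the points $1/k$ have consecutive gaps at most $1/(k(k-1)) \le 1/(w^{1/3}(w^{1/3}-1))$, and there are about $2w$ of them. After affinely rescaling so that the smallest gap becomes $1$, the interval $I$ has length roughly $w^{4/3}$, and \Cref{ezrclem} applies provided $d = o(\sqrt{w^{4/3}}) = o(w^{2/3})$, which certainly holds when $d = o(w^{1/3})$. This yields: $\max_{x \in [1/(2w), 1/w^{1/3}]} |p(x)| \le 2$ (up to constants), so $p$ is bounded by an absolute constant on an interval containing both $1/w$ and $1/(2w)$.

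\textbf{Step 2: a Markov / Bernstein-type bound on the jump.} Having established that $|p| \le C$ on an interval $J$ of length $\ell = \Theta(1/w^{1/3})$ containing $1/(2w)$ and $1/w$ (which are distance $1/(2w)$ apart), I would apply the Markov brothers' inequality (\Cref{markovlem}): $|p'(x)| \le \frac{2C}{\ell} d^2 = O(w^{1/3} d^2)$ for all $x \in J$. Then by the mean value theorem,
\[
|p(1/w) - p(1/(2w))| \le \frac{1}{2w} \cdot \max_{x \in J} |p'(x)| = O\!\left(\frac{w^{1/3} d^2}{w}\right) = O\!\left(\frac{d^2}{w^{2/3}}\right).
\]
For this to be at least $\frac13$, we need $d^2 = \Omega(w^{2/3})$, i.e. $d = \Omega(w^{1/3})$, as desired. (One should double-check whether the Markov bound alone, without the Ehlich--Zeller step, already suffices: applying Markov directly on $[0,1/w^{1/3}]$ requires an a priori bound on $\max_{[0,1/w^{1/3}]}|p|$, which is exactly what Step 1 supplies from the hypothesis that $|p|\le 1$ only at the discrete inverse-integer points.)

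\textbf{Main obstacle.} The delicate point is Step 1: the inverse-integer points are \emph{not} evenly spaced, and near $x = 1/w^{1/3}$ the gaps ($\sim w^{-2/3}$) are much larger than near $x = 1/(2w)$ ($\sim w^{-2}$). A single affine rescaling that makes the smallest gap equal to $1$ forces the interval to have enormous length $\sim w^{4/3}$, which is fine for the degree hypothesis but one must verify \Cref{ezrclem}'s gap condition "$z_{i+1} - z_i \le 1$" holds \emph{everywhere} after rescaling — it does, since we rescaled by the \emph{largest} gap's reciprocal... wait, no: rescaling by $1/(\text{smallest gap})$ makes other gaps $> 1$, violating the hypothesis; rescaling by $1/(\text{largest gap})$ makes the interval short but then small gaps are fine and the condition $z_{i+1}-z_i \le 1$ holds. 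With the largest gap being $\Theta(w^{-2/3})$ near the right end, the rescaled interval $[0,K]$ has $K = \Theta((1/w^{1/3})/(w^{-2/3})) = \Theta(w^{1/3})$, and \Cref{ezrclem} needs $\deg(p) \le \sqrt{K} = \Theta(w^{1/6})$ — which is \emph{too strong} a requirement. So the single-rescaling approach fails, and the honest fix is to chop $[0, 1/w^{1/3}]$ into $O(\log w)$ dyadic subintervals $[1/2^{j+1}, 1/2^j]$ (scaled appropriately) on each of which the points are roughly uniformly spaced, apply \Cref{ezrclem} on each piece to bootstrap a bound of $2^{O(\log w)}$... which is also too lossy. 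The cleanest resolution — and what I expect the actual proof to do — is to cite Zhandry's argument \cite[Proof of Theorem 7.3]{zhandry} or fedja's MathOverflow post directly, or to work with the substitution $y = 1/x$ turning $p(1/k)$ into evaluating a \emph{rational} function at integers $k \in [2w]$ and invoking a result on polynomials bounded at integer points (\Cref{lem:coppersmith_rivlin} on $[0, 2w]$ after clearing denominators by $x^d$), combined with Markov's inequality on the original interval. I would pursue this rational-substitution route: set $\tilde p(k) = k^d p(1/k)$, a polynomial of degree $\le 2d$ in $k$... but the bound $|p(1/k)| \le 1$ gives $|\tilde p(k)| \le k^d$, which is not uniformly bounded — so even this needs care, and the genuinely slick argument likely uses the structure of \Cref{fedjalem}'s sister lemma \Cref{fedjatight} in reverse (optimality of the explicit construction), suggesting the intended proof mirrors Zhandry's method of dual polynomials or a direct extremal argument rather than a black-box approximation-theory reduction.
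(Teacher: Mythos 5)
Your proposal is not a complete proof, and you essentially concede as much in your final paragraph; let me pin down exactly where it breaks. The fatal issue is Step 1. To extract $\Omega(w^{1/3})$ from Markov in Step 2, you need $p$ to be $O(1)$-bounded on a \emph{continuous} interval of length $\ell = \Theta(w^{-1/3})$ containing $1/w$ (your own calculation shows the jump is at most $O(d^2/(w\ell))$, so $\ell$ must be $\Omega(w^{-1/3})$ to force $d = \Omega(w^{1/3})$). But on $[w^{-2/3}, w^{-1/3}]$ there are only $\Theta(w^{1/3})$ inverse-integer points, spaced $\Theta(w^{-2/3})$ apart, and a degree-$\Theta(w^{1/3})$ polynomial bounded by $1$ at $\Theta(w^{1/3})$ such points can genuinely reach $\exp(\Theta(w^{1/3}))$ between them (\Cref{lem:coppersmith_rivlin} is essentially tight in this regime), so no discrete-to-continuous transfer is available there: the continuous-boundedness premise of Step 2 is \emph{false in general}, not merely hard to establish. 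If you instead restrict to the sub-interval where the transfer does work --- your dyadic application of \Cref{ezrclem} is fine for $x \lesssim w^{-2/3}$ when $d = o(w^{1/3})$, and the per-piece losses do not compound --- the resulting interval has length only $\Theta(w^{-2/3})$ and Markov yields merely $d = \Omega(w^{1/6})$. This is not an artifact of sloppiness: the paper's own ``explosion argument'' (\Cref{thm:explosion}), which is a far more careful orchestration of exactly the tools you invoke (\Cref{markovlem}, \Cref{paturicor}, \Cref{ezrclem}), tops out at $\Omega(w^{1/4})$ for the negative-degree part, with the Ehlich--Zeller step as the bottleneck in precisely the way you identified.

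For comparison: the paper does not give a self-contained approximation-theoretic proof of \Cref{fedjalem} at that point (it attributes the statement to fedja's MathOverflow post and to Zhandry); the proof actually contained in the paper is the LP-duality argument of \Cref{sec:dualpoly}. There one exhibits an explicit dual witness $\Phi(t) = (-1)^t \binom{N}{t} Q_T(t)$ supported on $T = \{w, 2w\} \cup \{\lfloor w/(ci^2)\rfloor : i \le d_1\}$ with $d_1 = \Theta(w^{1/3})$, shows it is orthogonal to all polynomials of degree at most $d_1$ (\Cref{phdlem}), and shows that its mass concentrates on $\{w, 2w\}$; the support points $\lfloor w/(ci^2)\rfloor$ are the Chebyshev-type maximum-error points of the matching upper-bound construction \Cref{fedjatight}, as dictated by complementary slackness. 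So the route you correctly suspected at the end --- a dual/extremal argument rather than a primal Markov--Ehlich--Zeller reduction --- is indeed the one that works; the primal route cannot reach $w^{1/3}$.
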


\Cref{s:explosion} and \Cref{s:dualpoly} below take the considerable step of
extending \Cref{fedjalem} from purely negative degree Laurent
polynomials to general Laurent polynomials.

\para{Upper bounds in the queries, samples, and reflections model.} 
Although we showed that there is a purely negative degree Laurent polynomial of degree
$O(w^{1/3})$ for $\ApxCount$, this does not imply the existence of a quantum algorithm
in the queries, samples, and reflections model with similar complexity.

We now show that our lower bounds in the queries, samples, and reflections model (in \Cref{thm:main}) are tight (up to constants). This is \Cref{thm:alg} in the introduction, restated here for convenience:

\alg*

\begin{proof}
We describe two quantum algorithms for this problem with the two stated complexities.

The first algorithm uses $O(w^{1/3})$ samples and reflections. This algorithm is reminiscent of the original collision finding algorithm of Brassard, H{\o}yer, and Tapp~\cite{BHT98}. We first use $O(w^{1/3})$ copies of $|S\>$ to learn a set $M\subset S$ of size $w^{1/3}$ by simply measuring copies of $|S\>$ in the computational basis. Now we know that the ratio $|S|/|M|$ is either $w^{2/3}$ or $2w^{2/3}$. Now consider running Grover's algorithm on the set $S$ where the elements in $M$ are considered the ``marked'' elements. Grover's algorithm alternates reflections about the uniform superposition over the set being searched, $S$, with an operator that reflects about the marked elements in $M$. The first reflection is simply $\R_S$, which we have access to. The second unitary can be constructed since we have an explicit description of the set $M$. Now Grover's algorithm can be used to distinguish whether the fraction of marked elements is $1/w^{2/3}$ or half of that, and the cost will be $O(w^{1/3})$.

The second algorithm uses $O(\sqrt{{N}/{w}})$ reflections only and no copies of $|S\>$. 
Consider running the standard approximate counting algorithm~\cite{BHMT02} that uses membership queries to $S$ and distinguishes $|S|\leq w$ from $|S|\geq 2w$ using $O(\sqrt{N/w})$ membership queries. 
Observe that this algorithm starts with the state $|\psi\> =\frac{1}{%
\sqrt{N}}\left( \left\vert 1\right\rangle +\cdots +\left\vert N\right\rangle
\right)$, which is in $\textrm{span}\{|S\>,|\bar{S}\>\}$, and only uses reflections about $|\psi\> $ and membership queries to $|S\>$ in the form of a unitary that maps $|i\>$ to $-|i\>$ when $i \in S$.
This means the state of the algorithm remains in $\textrm{span}\{|S\>,|\bar{S}\>\}$
at all times. 
Within this subspace, a membership query to $S$ is the same as a reflection about $|S\>$.
Hence we can replace membership queries with the reflection operator to get an approximate counting algorithm that only uses $O(\sqrt{N/w})$ reflections and no copies of $|S\>$.
\end{proof}

Note that both the algorithms presented above generalize to the situation
where we want to distinguish $|S|=w$ from $|S|=(1+\varepsilon )w$. \ For the
first algorithm, we now pick a subset $M$ of size $w^{1/3}/\varepsilon ^{2/3}
$. Now we want to $(1+\varepsilon )$-approximate the fraction of marked
elements, which is either $1/(w\varepsilon )^{2/3}$ or $(1+\varepsilon )^{-1}
$ times that. This can be done with approximate counting~\cite[Theorem 15]%
{BHMT02}, and the cost will be $O\left( \frac{1}{\varepsilon }(w\varepsilon
)^{1/3}\right) =O\left( \frac{w^{1/3}}{\varepsilon ^{2/3}}\right) $.
The second algorithm is simpler to generalize, since we simply plug in
the query complexity of $\varepsilon$-approximate counting, which is 
$O\Bigl( \frac{1}{\varepsilon }\sqrt{\frac{N}{w}}\Bigr) $.

%=======================================================================
\subsection{Lower bound using the explosion argument}
\label{s:explosion}\label{sec:explosion}

We now show a weaker version of \Cref{thm:main} using the explosion argument
described in the introduction. \ The difference between the following
theorem and \Cref{thm:main} is the exponent of $w$ in the lower bound.

\begin{theorem}
\label{thm:explosion}Let $Q$ be a quantum algorithm that makes $T$ queries
to the membership oracle for $S$, and uses a total of $R$ copies of $|S\>$
and reflections about $|S\>$. \ If $Q$ decides whether $\left\vert
S\right\vert =w$\ or $\left\vert S\right\vert =2w$ with success probability
at least $2/3$, promised that one of those is the case, then either
\begin{equation}
T=\Omega \left( \sqrt{\frac{N}{w}}\right) \qquad \text{or}\qquad R=\Omega
\left( \min \left\{ w^{1/4},\sqrt{\frac{N}{w}}\right\} \right) .
\end{equation}
\end{theorem}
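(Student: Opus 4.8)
The plan is to combine \Cref{laurentlem} with the ``explosion argument'' sketched in the introduction. First I would invoke \Cref{laurentlem}: if $Q$ makes $T$ queries and uses $R_1$ copies of $|S\>$ together with $R_2$ reflections (so $R = R_1 + 2R_2$), then the average acceptance probability $q(k) = \E_{|S|=k}[\Pr[Q\text{ accepts}]]$ is a Laurent polynomial in $k$ with maximum exponent at most $2T+R$ and minimum exponent at least $-R$, and with $q(k) \in [0,1]$ for all integers $k \in \{1,\ldots,N\}$ (since it is an average of probabilities), with $q(2w) \ge 2/3$ and $q(w) \le 1/3$ (using that one of those is promised). Assume for contradiction that both $T = o(\sqrt{N/w})$ and $R = o(\min\{w^{1/4}, \sqrt{N/w}\})$.

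Next I would split $q$ into its purely-positive-degree part and purely-negative-degree part. Write $q(k) = u(k) + v(1/k)$ where $u$ is an ordinary polynomial of degree $2T + R = o(\sqrt{N/w})$ (absorbing the constant term into $u$, as per the footnote convention) and $v$ is an ordinary polynomial of degree $R = o(\min\{w^{1/4}, \sqrt{N/w}\})$ with $v(0) = 0$. The heart of the argument is then the ``explosion'': I would argue that if $u$ and $v$ both had such small degree, then one of them would be forced to take a value far outside $[0,1]$ at some integer point, which forces the other to take an even larger value to compensate (since $q = u + v(1/\cdot)$ must stay in $[0,1]$ at all integers), and iterating this yields unbounded growth, a contradiction. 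Concretely: $v(1/k)$ for large $k$ is close to $v(0) = 0$, so on the large integers $k \in \{2w, \ldots, N\}$ we have $u(k) \approx q(k) \in [0,1]$; but $u$ has degree $o(\sqrt{N/w})$, and these $\Theta(N)$ points are spaced $1$ apart over an interval of length $\Theta(N)$, so by \Cref{ezrclem} (the Ehlich--Zeller/Rivlin--Cheney bound, valid since $\deg u = o(\sqrt{N}) \ll \sqrt{N}$) $u$ is actually bounded on the whole real interval $[2w, N]$. Then I would use \Cref{paturilem} (Paturi's inequality) to bound how large $u$ can grow as we extrapolate down toward $k = 1$: since $u$ has degree $o(\sqrt{N/w})$, stepping from the interval $[2w, N]$ of length $\sim N$ down to $k \in [1, w]$ corresponds to an extrapolation parameter $\mu = O(w/N)$, so $|u(k)| \le \exp(O(\deg(u)\sqrt{w/N})) = \exp(o(1)) = O(1)$ for all $k \in [1, w]$. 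Hence $|u(k)| = O(1)$ on all of $[1, N]$. Feeding this back: $v(1/k) = q(k) - u(k) = O(1)$ for all integers $k \in [1, N]$, i.e.\ $v$ is bounded by a constant at all the points $1, 1/2, 1/3, \ldots, 1/N$. Now apply \Cref{fedjalem} (or rather its proof technique, via \Cref{markovlem}/\Cref{ezrclem} applied to $v$ after the change of variables): $v$ bounded at $1/1, \ldots, 1/(2w)$ with $v(1/w) = q(w) - u(w)$ and $v(1/(2w)) = q(2w) - u(2w)$ — but wait, here the explosion genuinely kicks in because we only know $u$ is $O(1)$, not that $u(w)$ and $u(2w)$ are close to each other.

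The cleaner way to finish, and the step I expect to be the main obstacle, is to make the explosion quantitative rather than appealing to \Cref{fedjalem} as a black box. I would track constants: suppose $\max_{k \in [1,N] \cap \Z} |u(k)| \le C$ and $\max_{k \in [1,N] \cap \Z}|v(1/k)| \le C'$. The Coppersmith--Rivlin bound (\Cref{lem:coppersmith_rivlin}) applied to $v$ on the grid $\{1/N, \ldots, 1\}$ (rescaled) shows $v$ is bounded by $a \cdot \exp(b\deg(v)^2/w) \cdot C'$ on the continuous interval $[1/N, 1]$ when $\deg(v) \le \sqrt{w}$, which is $O(C')$ since $\deg(v) = o(w^{1/4}) = o(\sqrt{w})$; actually $\deg(v)^2/w = o(\sqrt{w}/w) = o(1)$, so the blow-up factor is $1 + o(1)$. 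Then \Cref{ezrclem}-type reasoning and Markov's inequality (\Cref{markovlem}) on $v$ — using $\deg(v) = o(w^{1/4})$ so that $\deg(v)^2 = o(\sqrt{w})$ — forces the variation of $v$ between the points $1/w$ and $1/(2w)$ (which are separated by $\sim 1/w^2$ within an interval where $v' $ is controlled by Markov) to be $o(1)$, hence $|u(w) - u(2w)| \ge |v(1/(2w)) - v(1/w)| - |q(2w) - q(w)|$... no: rather $v(1/(2w)) - v(1/w) = (q(2w) - q(w)) - (u(2w) - u(w))$, and $q(2w)-q(w) \ge 1/3$ while the Markov bound on $v$ over $[1/(2w), 1/w]$ forces $|v(1/(2w)) - v(1/w)| = o(1)$, so $u(2w) - u(w) \ge 1/3 - o(1)$. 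But $u$ is $O(C)$-bounded on $[1,N]$ and has degree $o(\sqrt{N/w})$; applying \Cref{ezrclem} and \Cref{markovlem} again, its variation over the length-$w$ interval $[w, 2w]$ inside $[1,N]$ (length $\sim N$) is at most $O(C) \cdot \deg(u)^2 \cdot w / N = o(C)$ — contradiction with $u(2w) - u(w) \ge 1/3 - o(1)$ unless $C$ is large, and then we bootstrap: large $C$ forces (via the Paturi step) large values of $u$ near $k=1$, which via $q \in [0,1]$ forces correspondingly large $C'$, which via the same chain forces even larger $C$, and so on, giving the ``explosion'' and the desired contradiction. The delicate point throughout is keeping the degree thresholds consistent ($\deg u = o(\sqrt{N/w})$ for the Ehlich--Zeller and Paturi steps on $u$; $\deg v = o(w^{1/4})$ so that $\deg(v)^2 = o(\sqrt w)$, which is what the Markov/Coppersmith--Rivlin steps on $v$ need — this is exactly why the weaker exponent $w^{1/4}$ appears here rather than $w^{1/3}$), and making sure the iteration's multiplicative growth factor is bounded below by a constant $> 1$ so that finitely many rounds suffice to reach a contradiction.
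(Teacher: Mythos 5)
Your overall strategy — decomposing $q = u + v(1/\cdot)$, using Markov/Paturi/Ehlich--Zeller estimates, and iterating to force unbounded growth — matches the paper's, and you correctly pinpoint that the exponent $w^{1/4}$ arises because the steps applied to $v$ need $\deg(v)^2 = o(\sqrt{w})$. However, there are two genuine gaps in the execution.

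First, you apply Coppersmith--Rivlin (\Cref{lem:coppersmith_rivlin}) to $v$ ``on the grid $\{1/N, \ldots, 1\}$ (rescaled).'' That lemma is stated for the evenly spaced integer grid $\{0,1,\ldots,k\}$. The inverse-integer points $1, 1/2, \ldots, 1/N$ are not evenly spaced, and no affine change of variables makes them so; hence Coppersmith--Rivlin does not apply. The paper instead uses \Cref{ezrclem} (Ehlich--Zeller/Rivlin--Cheney), which is proved for any grid $0 = z_1 < \cdots < z_M = k$ with $z_{i+1}-z_i \le 1$. That condition \emph{is} met by the rescaled function $V(x) = v(x/w)$ at the points $w/\sqrt{w}, w/(\sqrt{w}+1), \ldots, w/(2w)$, precisely because consecutive gaps $w/(j(j+1))$ are at most $1$ once $j \ge \sqrt{w}$. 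This lemma also explains the threshold directly: its hypothesis requires degree at most $\sqrt{k}$, and after rescaling $k \approx \sqrt{w}$, so $\deg(v) \le w^{1/4}$. You should use \Cref{ezrclem}, not \Cref{lem:coppersmith_rivlin}, for this step.

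Second, the explosion bootstrap is gestured at rather than carried out. Your initial sketch (that $v(0)=0$ implies $v(1/k)\approx 0$ for large $k$, so $u(k)\approx q(k)$ on $\{2w,\ldots,N\}$) is false — nothing bounds the coefficients of $v$ a priori, so $v(1/k)$ can be arbitrarily large even though $v(0)=0$ — and you rightly abandon it. But the ``cleaner'' version still posits bounds $|u|\le C$, $|v|\le C'$ at discrete points and waves at a contradiction without specifying the iteration or verifying its gain factor exceeds $1$. The paper instead introduces explicit quantities $G_u, \Delta_u, H_u, I_u, L_u$ (and the $v$-analogues) over carefully nested intervals $[\sqrt{w},2w]$, $[\sqrt{w},N]$, $[w,N]$, $\{w,\ldots,N\}$, and proves a closed loop of inequalities of the form $L_u \ge \Omega\bigl(\tfrac{N}{w\deg(u)^2}\bigr) G_u$, $G_v \ge L_u - 1$, $L_v \ge \Omega\bigl(\tfrac{\sqrt{w}}{\deg(v)^2}\bigr) G_v$, $G_u \ge L_v - 1$. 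Under the degree hypotheses both gain factors diverge, so starting from the trivial dichotomy $G_u \ge 1/6$ or $G_v \ge 1/6$ (forced by $q(w)\le 1/3$, $q(2w)\ge 2/3$) yields unbounded growth. To complete your argument you would need to exhibit this precise chain — in particular, the specific intervals are chosen so that the hypotheses of \Cref{markovlem}, \Cref{paturicor}, and \Cref{ezrclem} hold simultaneously with the claimed degree thresholds, and that verification is the crux of the proof.
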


\begin{proof}
Since we neglect multiplicative constants in our lower bounds, let us allow the algorithm to use up to $R$ copies of $|S\>$ and $R$ uses of $\R_S$.
Let%
\begin{equation}
q\left(  k\right)  :=\E_{\left\vert S\right\vert =k}\left[
\Pr\left[  Q^{\mathcal{O}_{S},\R_S}\left(  \left\vert S\right\rangle ^{\otimes
R}\right)  \text{ accepts}\right]  \right]  .
\end{equation}
Then by \Cref{laurentlem}, we can write $q$ as a Laurent polynomial:
\begin{equation}
q\left(  k\right)  =u\left(  k\right)  +v\left(  1/k\right)  ,
\end{equation}
where $u$\ is a real polynomial in $k$ with $\deg\left(  u\right)  = O(T+R)$,
and $v$\ is a real polynomial in $1/k$\ with $\deg\left(  v\right) = O(R)$.
\ So to prove the theorem, it suffices to show that either $\deg\left(
u\right)  =\Omega\left(  \sqrt{\frac{N}{w}}\right)  $, or else $\deg\left(
v\right)  =\Omega\left(  w^{1/4}\right)  $. \ To do so, we'll assume that
$\deg\left(  u\right)  =o\left(  \sqrt{\frac{N}{w}}\right)  $\ and
$\deg\left(  v\right)  =o\left(  w^{1/4}\right)  $, and derive a contradiction.

Our high-level strategy is as follows: we'll observe that, if approximate
counting is being successfully solved, then either $u$ or $v$ must attain a
large first derivative somewhere in its domain. \ By the approximation theory
lemmas that we proved in \Cref{s:approxprelim}, this will force that polynomial to have a large
range---even on a subset of integer (or inverse-integer) points. \ But the
sum, $u\left(  k\right)  +v\left(  1/k\right)  $, is bounded in $\left[
0,1\right]  $\ for all $k\in\left[  N\right]  $. \ So if one polynomial has a
large range, then the other does too. \ But this forces the \textit{other}
polynomial to have a large derivative somewhere in its domain, and therefore
(by approximation theory) to have an even larger range, forcing the first
polynomial to have an even larger range to compensate, and so on. \ As long as
$\deg\left(  u\right)  $\ and $\deg\left(  v\right)  $ are both small enough,
this endless switching will force both $u$ and $v$ to attain
\textit{unboundedly }large values---with the fact that one polynomial is in
$k$, and the other is in $1/k$, crucial to achieving the desired
\textquotedblleft explosion.\textquotedblright\ \ Since $u$ and $v$ are
polynomials on compact sets, such unbounded growth is an obvious absurdity,
and this will give us the desired contradiction.

In more detail, we will study the following quantities.%
\begin{equation}%
\begin{tabular}
[c]{ll}%
$G_{u}:=\max_{x,y\in\left[  \sqrt{w},2w\right]  }\left\vert u\left(  x\right)
-u\left(  y\right)  \right\vert ~\ \ \ \ \ \ $ & $G_{v}:=\max_{x,y\in\left[
\frac{1}{N},\frac{1}{w}\right]  }\left\vert v\left(  x\right)  -v\left(
y\right)  \right\vert $\\
$\Delta_{u}:=\max_{x\in\left[  \sqrt{w},2w\right]  }\left\vert u^{\prime
}\left(  x\right)  \right\vert $ & $\Delta_{v}:=\max_{x\in\left[  \frac{1}%
{N},\frac{1}{w}\right]  }\left\vert v^{\prime}\left(  x\right)  \right\vert
$\\
$H_{u}:=\max_{x,y\in\left[  \sqrt{w},N\right]  }\left\vert u\left(  x\right)
-u\left(  y\right)  \right\vert $ & $H_{v}:=\max_{x,y\in\left[  \frac{1}%
{N},\frac{1}{\sqrt{w}}\right]  }\left\vert v\left(  x\right)  -v\left(
y\right)  \right\vert $\\
$I_{u}:=\max_{x,y\in\left[  w,N\right]  }\left\vert u\left(  x\right)
-u\left(  y\right)  \right\vert $ & $I_{v}:=\max_{x,y\in\left[  \frac{1}%
{2w},\frac{1}{\sqrt{w}}\right]  }\left\vert v\left(  x\right)  -v\left(
y\right)  \right\vert $\\
$L_{u}:=\max_{x,y\in\left\{  w,\ldots,N\right\}  }\left\vert u\left(
x\right)  -u\left(  y\right)  \right\vert $ & $L_{v}:=\max_{x,y\in\left\{
\sqrt{w},\ldots,2w\right\}  }\left\vert v\left(  \frac{1}{x}\right)  -v\left(
\frac{1}{y}\right)  \right\vert $%
\end{tabular}
\end{equation}

We have $0\leq q\left(  k\right)  \leq1$\ for all $k\in\left[  N\right]  $,
since in those cases $q\left(  k\right)  $\ represents a probability. \ Since
$Q$ solves approximate counting, we also have $q\left(  w\right)  \leq\frac
{1}{3}$\ and $q\left(  2w\right)  \geq\frac{2}{3}$. \ This means in particular
that either

\begin{enumerate}
\item[(i)] $u\left(  2w\right)  -u\left(  w\right)  \geq\frac{1}{6}$, and
hence $G_{u}\geq\frac{1}{6}$, or else

\item[(ii)] $v\left(  \frac{1}{2w}\right)  -v\left(  \frac{1}{w}\right)
\geq\frac{1}{6}$, and hence $G_{v}\geq\frac{1}{6}$.
\end{enumerate}

We will show that either case leads to a contradiction.

We have the following inequalities regarding $u$:%
\begin{equation}%
\begin{tabular}
[c]{ll}%
$G_{u}\geq L_{v}-1$ & by the boundedness of $q$\\
$\Delta_{u}\geq\frac{G_{u}}{2w}$ & by basic calculus\\
$H_{u}\geq\frac{\Delta_{u}\left(  N-\sqrt{w}\right)  }{\deg\left(  u\right)
^{2}}$ & by \Cref{markovlem}\\
$I_{u}\geq\frac{H_{u}}{2}$ & by \Cref{paturicor}\\
$L_{u}\geq\frac{I_{u}}{2}$ & by \Cref{ezrclem}%
\end{tabular}
\end{equation}
Here the fourth inequality uses the fact that, setting $\varepsilon
:=\frac{\sqrt{w}}{N}$, we have $\deg\left(  u\right)  =o\left(  \frac{1}%
{\sqrt{\varepsilon}}\right)  $ (thereby satisfying the hypothesis of \Cref{paturicor}), while the fifth inequality uses the fact that $\deg\left(
u\right)  =o\left(  \sqrt{N}\right)  $.

Meanwhile, we have the following inequalities regarding $v$:%
\begin{equation}%
\begin{tabular}
[c]{ll}%
$G_{v}\geq L_{u}-1$ & by the boundedness of $q$\\
$\Delta_{v}\geq G_{v}w$ & by basic calculus\\
$H_{v}\geq\frac{\Delta_{v}\left(  \frac{1}{\sqrt{w}}-\frac{1}{N}\right)
}{\deg\left(  v\right)  ^{2}}$ & by \Cref{markovlem}\\
$I_{v}\geq\frac{H_{v}}{2}$ & by \Cref{paturicor}\\
$L_{v}\geq\frac{I_{v}}{2}$ & by \Cref{ezrclem}%
\end{tabular}
\end{equation}
Here the fourth inequality uses the fact that, setting $\varepsilon
:=\frac{1/2w}{1/\sqrt{w}}=\frac{1}{2\sqrt{w}}$, we have $\deg\left(  v\right)
=o\left(  \frac{1}{\sqrt{\varepsilon}}\right)  $ (thereby satisfying the
hypothesis of \Cref{paturicor}). \ The fifth inequality uses the fact
that, if we set $V\left(  x\right)  :=v\left(  x/w\right)  $, then the
situation\ satisfies the hypothesis of \Cref{ezrclem}:\ we are interested
in the range of $V$ on the interval $\left[  \frac{1}{2},\sqrt{w}\right]  $,
compared to its range on discrete points $\frac{w}{\sqrt{w}},\frac{w}{\sqrt
{w}+1},\ldots,\frac{w}{2w}$\ that are spaced at most $1$ apart from each
other; and we also have $\deg\left(  V\right)  =\deg\left(  v\right)
=o\left(  w^{1/4}\right)  $.

All that remains is to show that, if we insert either $G_{u}\geq\frac{1}{6}%
$\ or $G_{v}\geq\frac{1}{6}$ into the coupled system of inequalities above,
then we get unbounded growth and the inequalities have no solution. \ Let us
collapse the two sets of inequalities to%
\begin{align*}
L_{u}  &  \geq\frac{1}{4}\frac{N-\sqrt{w}}{\deg\left(  u\right)  ^{2}}%
\frac{G_{u}}{2w}=\Omega\left(  \frac{N}{w\deg\left(  u\right)  ^{2}}%
G_{u}\right)  ,\\
L_{v}  &  \geq\frac{1}{4}\frac{\frac{1}{\sqrt{w}}-\frac{1}{N}}{\deg\left(
v\right)  ^{2}}G_{v}w=\Omega\left(  \frac{\sqrt{w}}{\deg\left(  v\right)
^{2}}G_{v}\right)  .
\end{align*}
Hence%
\begin{align*}
G_{u}  &  \geq L_{v}-1=\Omega\left(  \frac{\sqrt{w}}{\deg\left(  v\right)
^{2}}G_{v}\right)  -1,\\
G_{v}  &  \geq L_{u}-1=\Omega\left(  \frac{N}{w\deg\left(  u\right)  ^{2}%
}G_{u}\right)  -1.
\end{align*}
By the assumption that $\deg\left(  v\right)  =o\left(  w^{1/4}\right)  $\ and
$\deg\left(  u\right)  =o\left(  \sqrt{\frac{N}{w}}\right)  $, we have
$\frac{\sqrt{w}}{\deg\left(  v\right)  ^{2}}\gg1$\ and $\frac{N}{w\deg\left(
u\right)  ^{2}}\gg1$. \ Plugging in $G_{u}\geq\frac{1}{6}$\ or $G_{v}\geq
\frac{1}{6}$, this is enough to give us unbounded growth.
\end{proof}

%=======================================================================
\subsection{Lower bound using dual polynomials}
\label{s:dualpoly}\label{sec:dualpoly}

In this section we use the method of dual polynomials to
establish our main result, \Cref{thm:main}, restated for convenience:

\main*

Let $p(r)$ be a univariate Laurent polynomial of negative degree $D_{1}$ and
positive degree $D_{2}$. \ That is, let $p(r)$ be of the form
\begin{equation}
p(r)=a_{0}/r^{D_{1}}+a_{1}/r^{D_{1}-1}+\dots
+a_{D_{1}-1}/r+a_{D_{1}}+a_{D_{1}+1}\cdot r+\cdots +a_{D_{2}+D_{1}}\cdot
r^{D_{2}}.
\end{equation}

\Cref{thm:main} follows by combining the Laurent polynomial method (\Cref{laurentlem}) and the following
theorem.

\begin{theorem}
\label{mainthm} 
Let $\varepsilon < 1$. Suppose that $p$ has negative degree $D_1$ and positive
degree $D_2$ and satisfies the following properties.

\begin{itemize}
\item $\left\vert p(w)-1\right\vert \leq \varepsilon $

\item $\left\vert p(2w)+1\right\vert \leq \varepsilon $

\item $\left\vert p(\ell )\right\vert \leq 1+\varepsilon \text{ for all }%
\ell \in \{1,2,\dots ,n\}$
\end{itemize}

Then either $D_1 \geq \Omega\left(w^{1/3}\right)$ or $D_2 \geq \Omega\left(\sqrt{N/w}%
\right)$.
\end{theorem}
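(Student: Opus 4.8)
The plan is to prove \Cref{mainthm} by the method of dual polynomials, following the two-step roadmap sketched in \Cref{sec:intro}. Suppose toward a contradiction that $D_1 = o(w^{1/3})$ and $D_2 = o\bigl(\sqrt{N/w}\bigr)$. Whether a Laurent polynomial $p$ with negative degree $D_1$ and positive degree $D_2$ meeting the three stated conditions exists is a linear-programming feasibility question in the $D_1+D_2+1$ coefficients of $p$, so by LP duality (equivalently, a Hahn--Banach separation argument) it suffices to exhibit a signed measure $\psi$ on the grid $\{1,2,\ldots,N\}$ that (i) annihilates every Laurent monomial of degree in $\{-D_1,\ldots,D_2\}$, i.e.\ $\sum_{\ell}\psi(\ell)\,\ell^{\,j}=0$ for each such $j$, and (ii) is correlated with the target: $\psi(w)>0$, $\psi(2w)<0$, and $|\psi(w)|+|\psi(2w)|$ beats roughly $\frac{1+\varepsilon}{1-\varepsilon}$ times the $\ell_1$-mass of $\psi$ on $\{1,\ldots,N\}\setminus\{w,2w\}$ (the precise inequality depending on how the slack $\varepsilon$ is bookkept). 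Such a $\psi$ certifies that no admissible $p$ can exist, contradicting the hypothesis.

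Step 1 removes the negative powers from condition (i) by a change of variables. Multiplying a candidate $p$ by $r^{D_1}$ turns it into an ordinary polynomial of degree at most $D_1+D_2$; dually, replacing $\psi(\ell)$ with $\phi(\ell):=\psi(\ell)\,\ell^{-D_1}$ converts (i) into the statement that $\phi$ is orthogonal to \emph{all ordinary polynomials of degree at most $D_1+D_2$}, at the cost that the normalization and correlation in (ii) are now measured against the weight $\ell^{D_1}$ (and the target values at $w$ and $2w$ pick up factors $w^{D_1}$ and $(2w)^{D_1}$). Though elementary, this is the move that lets us build and combine dual witnesses using machinery developed for ordinary rather than Laurent polynomials.

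In Step 2 I would construct $\phi$ by gluing together two simpler dual witnesses. The first, $\psi_1$, certifies that an ordinary polynomial that is $+1$ near $w$, $-1$ near $2w$, and bounded on $\{1,\ldots,N\}$ must have degree $\Omega\bigl(\sqrt{N/w}\bigr)$; this is the dual object underlying the optimal query lower bound for $\ApxCount$, available from Bun and Thaler \cite{bt13} (refining \v{S}palek \cite{spalek}). The second, $\psi_2$, is the dual form of \Cref{fedjalem}: it certifies that a polynomial in $1/r$ exhibiting the same jump on the inverse-integer grid $\{1,1/2,\ldots,1/(2w)\}$ must have degree $\Omega(w^{1/3})$, and can be produced by making Zhandry's \cite{zhandry} argument constructive. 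Crucially, $\psi_2$ can be supported on $\{1,\ldots,2w\}$ while $\psi_1$ spreads over all of $\{1,\ldots,N\}$, so the two witnesses act on essentially disjoint scales. I would take $\psi_1$ as the backbone and then correct it: since $D_1=o(w^{1/3})$, only $D_1$ further moment conditions (orthogonality to $1/r,\ldots,1/r^{D_1}$) remain violated, and one can add a correction derived from $\psi_2$ — concentrated near small $\ell$, where the negative powers vary most — that kills exactly those moments, stays orthogonal to the positive monomials up to degree $D_2$, and (after weighting by $\ell^{D_1}$) is too small to spoil the correlation. The result is a single $\phi$ showing degree $D_1+D_2$ is insufficient for the weighted problem, the desired contradiction.

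The main obstacle is this gluing. A naive sum $\psi_1+c\,\psi_2$ fails, because $\psi_2$ is not orthogonal to the high positive monomials and $\psi_1$ is not orthogonal to the negative ones; combining them requires a more careful correction/product-type construction (reminiscent of the dual-block arguments used for composed functions) together with tight control of correlation versus $\ell_1$-mass. The bookkeeping is delicate: the weight $\ell^{D_1}$ can be as large as $(N/w)^{D_1}$ with $D_1$ up to $\Theta(w^{1/3})$, so the mass of the correction and the tails of $\psi_1$ must be estimated precisely — this is also what produces the slightly awkward shape of the bound (and, downstream, the $\min$ in \Cref{thm:main}). A secondary and more routine point is pinning down the exact LP duality, including the one-sided role of $w$ and $2w$ and the dependence on $\varepsilon$.
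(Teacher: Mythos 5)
Your proposal correctly identifies the paper's overall framework: LP duality, the multiply-by-$r^{D_1}$ transformation that converts the Laurent problem into an ordinary-polynomial problem, and ``gluing'' a positive-degree dual witness (derived from Bun--Thaler, refining {\v{S}}palek) with a negative-degree one (derived from Zhandry). You also correctly flag the gluing as the crux. But the gluing mechanism you sketch --- take $\psi_1$ as a backbone and add a small $\psi_2$-derived correction that kills the remaining $D_1$ moments while staying orthogonal to the positive monomials and contributing negligible $\ell^{D_1}$-weighted mass --- is not constructed, and you acknowledge it is not clear how to construct it. This is a genuine gap: your additive-correction picture does not deliver orthogonality to all $D_1+D_2+1$ moments, because a sum of two witnesses each orthogonal to only a subset of the moments is generically orthogonal to neither.

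The paper avoids this entirely by choosing a witness format where orthogonality is automatic. It takes $\Phi(t) = (-1)^t \binom{N}{t} Q_T(t)$, with $Q_T(t) = \prod_{i \in \{0,\dots,N\} \setminus T}(t - i)$ and $T \subseteq \{0,\dots,N\}$ a small interpolation set containing $\{w,2w\}$. In this format, the positive-degree witness corresponds to $T = \{w,2w\} \cup T_2$ with $T_2 = \{c i^2 w : i \le d_2\}$, the negative-degree one to $T = \{w,2w\} \cup T_1$ with $T_1 = \{\lfloor w/(c i^2) \rfloor : i \le d_1\}$, and the ``glue'' is simply the \emph{union} $T = \{w,2w\} \cup T_1 \cup T_2$. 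Orthogonality to every polynomial of degree up to $|T|-2 = d_1 + d_2$ then follows for free from the finite-difference identity $\sum_{\ell=0}^N (-1)^\ell \binom{N}{\ell} P(\ell) = 0$ for $\deg P < N$ --- there is no correction step at all. What remains is a (substantial) estimate that, after the $\ell^{D_1}$ reweighting, the $\ell_1$ mass of $\Phi$ on $T_1 \cup T_2$ is a small fraction of $|\Phi(w)| \cdot w^{D_1}$; this is where the Zhandry-style bound on $T_1$ and a Chebyshev-style bound on $T_2$ actually enter. Identifying this union-of-interpolation-sets construction is the missing step in your proposal.
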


In fact, our proof of \Cref{mainthm} will show that the lower bound holds
even if $\left\vert p(\ell )\right\vert \leq 1+\varepsilon $ only for $\ell
\in \{w^{1/3},w^{1/3}+1,\dots ,w\}\cup \{2w,2w+1,\dots ,N\}$. \ We refer to
a Laurent polynomial $p$ satisfying the three properties of \Cref{mainthm}
as an \emph{approximation for approximate counting}.

\paragraph{Proof of \Cref{mainthm}.}

Let $p$ be any Laurent polynomial satisfying the hypothesis of \Cref{mainthm}%
. \ We begin by transforming $p$ into a (standard) polynomial $q$ in a
straightforward manner. \ This transformation is captured in the following
lemma, whose proof is so simple that we omit it.

\begin{lemma}
\label{mainlem} If $p$ satisfies the properties of \Cref{mainthm}, then the
polynomial $q(r) = p(r) \cdot r^{D_1}= a_0 + a_1 r + \dots + a_{D_1 + D_2}
r^{D_1 + D_2}$ is a (standard) polynomial of degree at most $D_1 + D_2$, and
$q$ satisfies the following three properties.

\begin{itemize}
\item $\left\vert q(w)-w^{D_{1}}\right\vert \leq \varepsilon \cdot w^{D_{1}}$

\item $\left\vert q(2w)+(2w)^{D_{1}}\right\vert \leq \varepsilon \cdot
(2w)^{D_{1}}$

\item $\left\vert q(\ell )\right\vert \leq \left( 1+\varepsilon \right) \ell
^{D_{1}}\text{ for all }\ell \in \{1,2,\dots ,N\}$
\end{itemize}
\end{lemma}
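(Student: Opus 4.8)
The plan is straightforward, since the lemma merely records the effect of clearing denominators in $p$. First I would substitute the general form of $p$ from the display preceding \Cref{mainthm}, namely $p(r) = a_0 r^{-D_1} + a_1 r^{-D_1+1} + \cdots + a_{D_1-1} r^{-1} + a_{D_1} + a_{D_1+1} r + \cdots + a_{D_1+D_2} r^{D_2}$, and multiply every term by $r^{D_1}$. Each exponent shifts up by $D_1$, so the lowest exponent becomes $0$ and the highest becomes $D_1 + D_2$; hence $q(r) := r^{D_1} p(r) = a_0 + a_1 r + \cdots + a_{D_1+D_2} r^{D_1+D_2}$ has no negative powers and is a genuine polynomial of degree at most $D_1 + D_2$. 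This settles the structural part of the statement.

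Next I would translate the three analytic constraints. The single identity doing all the work is that for every real $x > 0$ and every constant $c$ we have $|q(x) - c\,x^{D_1}| = x^{D_1}\,|p(x) - c|$, which is immediate from $q(x) = x^{D_1} p(x)$ together with $x^{D_1} > 0$. Taking $x = w$ and $c = 1$ converts $|p(w) - 1| \le \varepsilon$ into $|q(w) - w^{D_1}| \le \varepsilon\,w^{D_1}$; taking $x = 2w$ and $c = -1$ converts $|p(2w) + 1| \le \varepsilon$ into $|q(2w) + (2w)^{D_1}| \le \varepsilon\,(2w)^{D_1}$; and taking $x = \ell$ for each positive integer $\ell \le N$ converts $|p(\ell)| \le 1 + \varepsilon$ into $|q(\ell)| \le (1+\varepsilon)\,\ell^{D_1}$. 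That is all three bullet points.

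I do not expect any real obstacle here: the only point requiring the slightest care is that the rescaling identity is valid only at strictly positive arguments, but every evaluation point appearing in \Cref{mainthm} — namely $w$, $2w$, and the integers $1,\dots,N$ — is positive, so this is automatic. The genuine difficulty of \Cref{mainthm} lives entirely downstream of this lemma, in the analysis of the standard polynomial $q$ via the method of dual polynomials — gluing together a witness $\psi_1$ that forces positive degree $\Omega(\sqrt{N/w})$ with a witness $\psi_2$ that forces negative degree $\Omega(w^{1/3})$ — and not in the elementary bookkeeping captured by \Cref{mainlem}.
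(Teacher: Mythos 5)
Your proof is correct and is exactly the elementary argument the paper has in mind; indeed, the paper explicitly omits the proof of \Cref{mainlem} as ``so simple,'' and your clearing-of-denominators plus the observation that $|q(x) - c\,x^{D_1}| = x^{D_1}|p(x)-c|$ at the (positive) evaluation points $w$, $2w$, and $\ell \in \{1,\dots,N\}$ is the intended one-line justification.
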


We now turn to showing that, for any constant $\varepsilon < 1$, no
polynomial $q$ can satisfy the conditions of \Cref{mainlem} unless $D_1 \geq
\Omega(w^{1/3})$ or $D_2 \geq \Omega\left(\sqrt{N/w}\right)$.

Consider the following linear program. \ The variables of the linear program
are $\varepsilon $, and the $D_{2}+D_{1}+1$ coefficients of $q$.

\begin{equation}
\boxed{\begin{array}{ll} \text{minimize } & \eps \\ \mbox{such that} & \\
&|q(w) - w^{D_1}| \leq \eps \cdot w^{D_1}\\ &|q(2w) + (2w)^{D_1}| \leq \eps
\cdot (2w)^{D_1}\\ & |q(\ell)| \leq (1+\eps) \cdot \ell^{D_1} \text{ for all
} \ell \in \{1, 2, \dots, N\}\\ &\eps \geq 0 \end{array}}
\end{equation}

Standard manipulations reveal the dual.

\begin{equation} \label{duallp}
\boxed{\begin{array}{ll} \text{maximize } \phi(w) \cdot w^{D_1} - \phi(2w)
\cdot (2w)^{D_1} - \sum_{\ell \in \{1, \dots, N\}, \ell \not\in \{w, 2w\}}
|\phi(\ell)| \cdot \ell^{D_1} \\ \mbox{such that} & \\ \sum_{\ell=1}^{N}
\phi(\ell) \cdot \ell^j =0 \text{ for } j= 0, 1, 2, \dots, D_1 + D_2 &\\
\sum_{\ell=1}^N |\phi(\ell)| \cdot \ell^{D_1} = 1\\ \phi \colon \Reals \to
\mathbb{R} \end{array}}
\end{equation}

\label{s:duallp}

\medskip\medskip

\Cref{mainthm} will follow if we can exhibit a solution $\phi $ to the dual
linear program achieving value $\varepsilon >0$, for some
setting of $D_{1}\geq \Omega (w^{1/3})$ and $D_{2}\geq \Omega \left( \sqrt{%
N/w}\right) $.\footnote{%
We will alternatively refer to such dual solutions $\phi $ as \emph{dual
witnesses}, since they act as a witness to the fact that any low-degree
Laurent polynomial $p$ approximating the approximate counting problem must
have large error.} \ We now turn to this task.

\subsubsection{Constructing the dual solution}
\label{duallpsec}

For a set $T\subseteq \{0,1,\dots ,N\}$, define
\begin{equation}
Q_{T}(t)=\prod_{i=0,1,\dots ,N,i\not\in T}(t-i).
\end{equation}%
Let $c>2$ be an integer constant that we will choose later (the bigger we choose $c$
to be, the better the objective value achieved by our final dual witness. \
But choosing a bigger $c$ will also lower the degrees $D_{1},D_{2}$ of
Laurent polynomials against which our lower bound will hold).

We now define two sets $T_{1}$ and $T_{2}$. \ The size of $T_{1}$ will be
\begin{equation}
d_{1}:=\lfloor \left( w/c\right) ^{1/3}\rfloor =\Theta \left( w^{1/3}\right)
\end{equation}%
and the size of $T_{2}$ will be $d_{2}$ for
\begin{equation}
d_{2}:=\lfloor \sqrt{N/(cw)}\rfloor =\Theta \left( \sqrt{N/w}\right) .
\end{equation}%
%Here, the hidden constants in $d_1$ and $d_2$ may depend on $c$.
%Specifically, we will choose $d_2 = \sqrt{N/(cw)}$, and we will defer the precise choice of $d_1$ until later.

Let
\begin{equation}
T_1 = \left\{\lfloor w/(c i^2) \rfloor \colon i=1, 2, \dots, d_1\right\}
\end{equation}
and
\begin{equation}
T_2 = \left\{c \cdot i^2 \cdot w \colon i = 1, 2, \dots, d_2:=\sqrt{N/(cw)}%
\right\}.
\end{equation}
Finally, define
\begin{equation}
T = \{w, 2w\} \cup T_1 \cup T_2.
\end{equation}

At last, define $\Phi \colon \{0, 1, \dots, N\} \to \mathbb{R}$ via
\begin{equation}  \label{phidef}
\Phi(t) = (-1)^t \cdot \binom{N}{t} \cdot Q_T(t).
\end{equation}

Our final dual solution $\phi $ will be a scaled version of $\Phi $. \
Specifically, $\Phi $ itself does not satisfy the second constraint of the
dual linear program, that $\sum_{\ell =1}^{N}|\Phi (\ell )|\cdot \ell
^{D_{1}}=1$. \ So letting
\begin{equation}
C=\sum_{\ell =1}^{N}|\Phi (\ell )|\cdot \ell ^{D_{1}},  \label{Cdef}
\end{equation}%
our final dual witness $\phi $ will be $\Phi /C$.

\para{The sizes of $T_{1}$ and $T_{2}$.} \ Clearly,
under the above definition of $T_{2}$, $|T_{2}|=d_{2}$ as claimed above. \
It is not as immediately evident that $|T_{1}|=d_{1}$: to establish this, we
must show that for distinct $i,j\in \{1,2,\dots ,d_{1}\}$, $\lfloor
w/(ci^{2})\rfloor \neq \lfloor w/(cj^{2})\rfloor $. \ This is handled in the
following easy lemma.

\begin{lemma}
\label{lem:distinct} Let $i\neq j$ be distinct numbers in $\{1,\dots ,d_{1}\}
$ and $c>2$ be a constant. \ Then as long as $d_{1}<\left( w/c\right) ^{1/3}$%
, it holds that $\lfloor w/(ci^{2})\rfloor \neq \lfloor w/(cj^{2})\rfloor $.
\end{lemma}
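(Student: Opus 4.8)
The plan is to reduce the statement to a single elementary gap estimate. Observe first that if two reals satisfy $a > b$ and $a - b \ge 1$, then $\lfloor a\rfloor \neq \lfloor b\rfloor$: indeed $\lfloor a\rfloor > a - 1 \ge b \ge \lfloor b\rfloor$. Since for $i < j$ we have $\frac{w}{ci^2} > \frac{w}{cj^2}$, it therefore suffices to prove that for distinct $i, j \in \{1, \dots, d_1\}$, say $i < j$,
\[
\frac{w}{ci^2} - \frac{w}{cj^2} \;=\; \frac{w}{c}\left(\frac{1}{i^2} - \frac{1}{j^2}\right) \;\ge\; 1 .
\]

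First I would lower-bound the bracketed quantity over the lattice triangle $1 \le i < j \le d_1$. For fixed $i$ the expression $\frac{1}{i^2} - \frac{1}{j^2}$ is increasing in $j$, so its minimum over the allowed range of $j$ occurs at $j = i+1$, giving $\frac{1}{i^2} - \frac{1}{(i+1)^2} = \frac{2i+1}{i^2(i+1)^2}$; this in turn is decreasing in $i$, so the global minimum over the triangle is attained at $(i,j) = (d_1-1, d_1)$ and equals $\frac{2d_1-1}{(d_1-1)^2 d_1^2}$. Using $2d_1 - 1 \ge d_1$ and $(d_1-1)^2 d_1^2 \le d_1^4$, this is at least $d_1^{-3}$. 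Plugging back in, the gap is at least $\frac{w}{c\,d_1^3}$, and the hypothesis $d_1 < (w/c)^{1/3}$ forces $d_1^3 < w/c$, so the gap exceeds $1$. (In fact a gap of exactly $1$ already suffices, so the bound is comfortably robust, and one need not worry about whether $(w/c)^{1/3}$ is an integer.)

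The only point requiring any care — and essentially the whole content of this easy lemma — is confirming that the minimum of $\frac{1}{i^2} - \frac{1}{j^2}$ over the triangle sits in the corner $(d_1-1, d_1)$; this is exactly the pair of monotonicity observations above, each of which is immediate. Everything else is arithmetic.
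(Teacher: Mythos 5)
Your proof is correct and follows essentially the same route as the paper's: both reduce the claim to showing the gap $\frac{w}{c}\bigl(\frac{1}{i^2}-\frac{1}{j^2}\bigr)\ge 1$, locate the minimum of this gap at the adjacent pair $(d_1-1,d_1)$, and bound it below by $w/(c\,d_1^3)\ge 1$ using the hypothesis on $d_1$. No gaps.
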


\begin{proof}
Assume without loss of generality that  $i> j$. Then $w/(c j^2)- w/(c i^2)$ is clearly minimized when $i=d_1$ and $j=i-1$.
For the remainder of the proof, fix $i=d_1$.
In this case,
\begin{align}
w/(c j^2) - w/(c i^2) \geq
w/\left(c (i-1)^2\right) - w/\left(c i^2\right)
= \frac{w i^2 - w \left(i-1\right)^2}{c \cdot i^2 \cdot \left(i-1\right)^2} \notag \\
= \frac{w}{c} \cdot \frac{2i-1}{i^2\left(i-1\right)^2 }
\geq \frac{w}{c} \cdot \frac{2i-1}{i^4}
\geq \frac{w}{c i^3} \geq 1. \label{eq28}
\end{align}
Here, the final inequality holds because $i^3 = d_1^3 \leq w/c$.

\Cref{eq28} implies the lemma, as two numbers whose difference is at least 1 cannot have
the same integer floor.
\end{proof}

\Cref{lem:distinct} is false for $d_1 = \omega(w^{1/3})$, highlighting on a
technical level why one cannot choose $d_1$ larger than $\Theta(w^{1/3})$
without the entire construction and analysis of $\Phi$ breaking down.

\subsubsection{Intuition: ``gluing together'' two simpler dual solutions}

Before analyzing the dual witnesses
$\Phi$ and $\phi$ constructed in \Cref{phidef} and \Cref{Cdef},
in this subsection and the next, we provide detailed intuition
for why the definitions of $\Phi$ and $\phi$ are natural,
and briefly overview their analysis. 

\label{sec:intuition}

\para{A dual witness for purely positive degree (i.e.,
approximate degree).} \ Suppose we were merely interested in showing an
approximate degree lower bound of $\Omega (\sqrt{N/w})$ for approximate
counting (i.e., a lower bound on the degree of traditional polynomials that
distinguish input $w$ from $2w$, and are bounded at all other integer inputs
in $1,\dots ,N$). \ This is equivalent to exhibiting a solution to the dual
linear program with $D_{1}=0$. \ A valid dual witness $\phi_1$ for this
simpler case is to also use \Cref{phidef}, but to set

\begin{equation}  \label{simplerT1}
T= \{w, 2w\} \cup T_2,
\end{equation}
rather than $T=\{w, 2w\} \cup T_1 \cup T_2$.

We will explain intuition for why \Cref{simplerT1} is a valid dual
solution for the approximate degree of approximate counting in the next
subsection. \ For now, we wish to explain how this construction relates to
prior work. In \cite{bt13}, for any constant $\delta >0$, a dual
witness is given for the fact that the $(1-\delta )$-approximate degree
of $\mathsf{OR}$ is $\Omega (\sqrt{N})$. \ This dual witness \emph{nearly}
corresponds to the above, with $w=1$. \ Specifically, Bun and Thaler \cite{bt13} use the
set $T=\{0,1\}\cup \{ci^{2}\colon i=1,2,\dots ,\sqrt{N/c}\}$, and they show
that almost all of the \textquotedblleft mass\textquotedblright\ of this
dual witness is located on the inputs $0$ and $1$, i.e.,
\begin{equation}
\left\vert \Phi (0)\right\vert +\left\vert \Phi (1)\right\vert \geq
(1-\delta )\cdot \sum_{i=2}^{N}\left\vert \Phi (i)\right\vert .
\label{epseq}
\end{equation}%
Here, the bigger $c$ is chosen to be, the smaller the value of $\delta$
for which \Cref{epseq} holds.

In the case of $w=1$, our dual witness for approximate counting differs from
this only in that $\{0, 1\}$ is replaced with $\{1, 2\}$. This is
because, in order to show a lower bound for distinguishing input $w=1$ from
input $2w=2$, we want almost all of the mass to be on inputs $\{1, 2\}$
rather than $\{0, 1\}$ (this is what will ensure that the objective function
of the dual linear program is large).

For general $w$, we want most of the mass of $\psi $ to be concentrated on
inputs $w$ and $2w$. \ Accordingly, relative to the $w=1$ case, we
effectively multiply \emph{all} points in $T$ by $w$, and one can show that
this does not affect the calculation regarding concentration of mass.

\para{A dual witness for purely negative degree.} \
Now, suppose we were merely interested in showing that Laurent polynomials
of \emph{purely negative} degree require degree $\Omega (w^{1/3})$ to
approximate the approximate counting problem. \ This is equivalent to
exhibiting a solution to the dual linear program with $D_{2}=0$.  Then a
valid dual witness $\phi_2 $ for this simpler case is to also use 
\Cref{phidef}, but to set

\begin{equation}  \label{simplerT2}
T= \{w, 2w\} \cup T_1.
\end{equation}

Again, we will give intuition for why this is a valid dual solution in the
next subsection (\Cref{s:nextsubsection}). \ For now, we wish to explain how
this construction relates to prior work. \ Essentially, the $\Omega \left(
w^{1/3}\right) $-degree lower bound for Laurent polynomials with \emph{only
negative} powers was proved by Zhandry \cite[Theorem 7.3]{zhandry}. \
Translating Zhandry's theorem into our setting is not entirely trivial, and
he did not explicitly construct a solution to our dual linear program. \
However (albeit with significant effort), one can translate his argument to
our setting to show that \Cref{simplerT2} gives a valid dual
solution to prove a lower bound against Laurent polynomials with only
negative powers.

\para{Gluing them together.} The above discussion
explains that the key ideas for constructing dual solutions $\phi _{1}$, $%
\phi _{2}$ witnessing degree lower bounds for Laurent polynomials of \emph{%
only negative} or \emph{only positive} powers were essentially already
known, or at least can be extracted from prior work with enough effort. \ In this work, we are interested in proving lower bounds for Laurent
polynomials with both positive and negative powers. \ Our dual solution $%
\Phi $ essentially just \textquotedblleft glues together\textquotedblright\
the dual solutions that can be derived from prior work. \ By this, we mean
that the set $T$ of integer points on which our $\Phi $ is nonzero is the
\emph{union} of the corresponding sets for $\phi _{1}$ and $\phi _{2}$
individually. \ Moreover, this union is nearly disjoint, as the only points
in the intersection of the two sets being unioned are $w$ and $2w$.

\para{Overview of the analysis.}
To show that we have constructed
a valid solution 
to the dual linear program (\Cref{duallp}),
we
must establish
that (a) $\Phi$ is uncorrelated
with every polynomial of degree 
at most $D_1 + D_2$
and (b) $\Phi$ is
well-correlated  with 
any function $g$ that evaluates
to $+1$ on input $w$, to $-1$ 
on input $2w$, and is bounded in $[-1, 1]$ elsewhere.
In (b), the correlation is taken
with respect to 
an appropriate weighting
of the inputs, that on input $\ell \in [N]$ places
mass proportional to $\ell^{D_1}$.

The definition of $\Phi$
as a ``gluing together'' of $\phi_1$ and $\phi_2$
turns out, in a
straightforward manner, to ensure
that $\Phi$ is uncorrelated
with polynomials of degree at 
$D_1 + D_2$. 
All that remains is to
show that $\Phi$ is well-correlated
with $g$ under the appropriate
weighting of inputs. 
This turns out to be technically demanding, but ultimately
can be understood as stemming from the fact that 
$\phi_1$ and $\phi_2$ are individually
well-correlated with $g$ (albeit, in the case of $\phi_2$, under a
\emph{different} weighting of the inputs than the weighting that is relevant for $\Phi$). 

\subsubsection{Intuition via complementary slackness}

\label{s:nextsubsection}We now attempt to lend some insight into why the
dual witnesses $\phi_1 $ and $\phi_2 $ for the purely positive degree and purely
negative degree take the form that they do. \ This section is deliberately
slightly imprecise in places, and builds on intuition that has been put
forth in prior works proving approximate degree lower bounds via dual
witnesses \cite{bt13, Tha16, bkt}.

Notice that $\phi_1 $ is precisely defined so that $\phi_1 (i)=0$ for any $%
i\not\in \{w,2w\}\cup T_{2}$, and similarly $\phi_2 (i)=0$ for any $i\not\in
\{w,2w\}\cup T_{1}$. \ The intuition for why this is reasonable comes from
complementary slackness, which states that an optimal dual witness should
equal $0$ except on inputs that correspond to primal constraints that are
\emph{made tight by an optimal primal solution}. \ By \textquotedblleft
constraints made tight by an optimal primal solution\textquotedblright , we
mean constraints that, for the optimal primal solution, hold with equality
rather than (strict) inequality.

Unpacking that statement, this means the following. \ Suppose that $q$ is an
optimal solution to the primal linear program of \Cref{s:duallp}, meaning it
minimizes the error $\varepsilon $ amongst all polynomials of the same same
degree. \ The constraints made tight by $q$ are precisely those inputs $\ell
$ at which $q$ hits its \textquotedblleft maximum error\textquotedblright\
(e.g., an input $\ell$ such that $|q(\ell)|=(1+\varepsilon )\cdot \ell ^{D_{1}}$).
\ We call these inputs \emph{maximum-error} inputs for $q$. \ Complementary
slackness says that there is an optimal solution to the dual linear program (\Cref{duallp})
that equals $0$ at all inputs that are not maximum-error inputs for $q$.

In both the purely positive degree case, and the purely negative degree
case, we know roughly what primal optimal solutions $q$ look like, and
moreover we know what roughly their maximum-error points look like. \ In the
first case, the maximum-error points are well-approximated by the points in $%
T_{2}$, and in the purely negative degree case, the maximum error points are
well-approximated by the points in $T_{1}$. \ Let us explain.

\para{Purely positive degree case.} \ Let $T_{d}$ be
the degree $d$ Chebyshev polynomial of the first kind.  It can be seen that $%
P(\ell )=T_{\sqrt{N}}\left( 1+2/N-\ell /N\right) $ satisfies $P(1)\geq 2$,
while $|P(\ell )|\leq 1$ for $\ell =2,3,\dots ,N$. \ That is, up to scaling,
$P$ approximates the approximate counting problem for $w=1$, and its known
that its degree is within a constant factor of optimal.

It is known that the extreme points of $T_{d}$ are of the following form,
for $k=1,\dots ,d$:
\begin{equation}
\cos \left( \frac{(2k-1)}{2d}\pi \right) \approx 1-k^{2}/(2d^{2}),
\label{tester}
\end{equation}%
where the approximation uses the Taylor expansion of the cosine function
around $0$. \ \Cref{tester} means that the extreme points of $P$
are roughly those inputs $\ell $ such that $1+2/N-\ell /N\approx
1-k^{2}/(2d^{2})$, where $d=\sqrt{N}$. \ Such $\ell $ are roughly of the
form $\ell \approx c\cdot i^{2}$ for some constant $c$, as $i$ ranges from $1
$ up to $\Theta (N^{1/2})$.

More generally, when $w\geq 1$, an asymptotically optimal approximation for
distinguishing input $w$ from $2w$ is $P(\ell )=T_{\sqrt{N/w}}\left(
1+2w/N-\ell /(wN)\right) $. \ The extreme points of $P$ are roughly of the
form $\ell \approx c\cdot i^{2}\cdot w$ for some constant $c$, as $i$ ranges
from $1$ up to $\Theta (\sqrt{N/w})$, which is exactly the form of the
points in our set $T_{2}$.

\para{Purely negative degree case.} In \Cref{fedjatight}%
, we exhibited a simple, purely negative degree Laurent polynomial $p$
(i.e., $p(\ell )$ is a standard polynomial in $1/\ell $) with degree $%
D_{1}=w^{1/3}$ that solves the approximate counting problem (the
construction is due to MathOverflow user \textquotedblleft
fedja\textquotedblright ). \ Roughly speaking, $p$ can be written as a
product $p(\ell )=u(\ell )\cdot v(\ell )$, where $u(\ell )$ has the roots $%
\ell =1,2,\dots ,w^{1/3}$, and $v(\ell )$ is (an affine transformation) of a
Chebyshev polynomial of degree $w^{1/3}$, applied to $1/\ell $. \ One can
easily look at this construction and see that $p(\ell )$ outputs \emph{%
exactly} the correct value on inputs $\{1,2,\dots ,w^{1/3}\}$, so these are
not maximum error points for $p$. \ Moreover, the analysis of the maximum
error points for Chebyshev polynomials above can be applied to show that the
maximum error points of $p$ are roughly of the form $\ell $ such that $%
1/\ell =c\cdot i^{2}/w$ for some constant $c$, with $i$ ranging from $1$ up
to $\Theta (w^{1/3})$. \ This means that the extreme points are roughly of
the form $\ell \approx \frac{w}{ci^{2}}$, which is why our set $T_{1}$
consists of points of the form $\lfloor \frac{w}{ci^{2}}\rfloor $ (the
floors are required because we are proving lower bounds against polynomials
whose behavior is only constrained at integer inputs).

\subsubsection{Analysis of the dual solution \texorpdfstring{$\Phi$}{Phi}}

\begin{lemma}
\label{phdlem}Let $d_{1}=|T_{1}|$ and $d_{2}=|T_{2}|$. \ Then for any $%
j=0,1,\dots ,d_{1}+d_{2}$, it holds that
\begin{equation*}
\sum_{\ell =1}^{N}\Phi (\ell )\cdot \ell ^{j}=0.
\end{equation*}
\end{lemma}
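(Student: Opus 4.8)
The plan is to show that $\Phi(t) = (-1)^t \binom{N}{t} Q_T(t)$ is orthogonal to all monomials $\ell^j$ for $j = 0, 1, \dots, d_1 + d_2$, which is exactly $|T| - 2$ since $|T| = 2 + d_1 + d_2$ (recalling $T = \{w, 2w\} \cup T_1 \cup T_2$ and the $w, 2w$ points are disjoint from $T_1 \cup T_2$ by construction). The key structural fact is that $Q_T(t) = \prod_{i \in \{0,\dots,N\} \setminus T}(t - i)$ is a polynomial in $t$ of degree exactly $N + 1 - |T|$. Therefore $Q_T(t) \cdot \ell^j$, viewed as a polynomial in $t = \ell$, has degree at most $(N + 1 - |T|) + (d_1 + d_2) = (N+1-|T|) + (|T| - 2) = N - 1 < N$.

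\begin{proof}
Recall that $T = \{w, 2w\} \cup T_1 \cup T_2$ has size $|T| = 2 + d_1 + d_2$, since Lemma~\ref{lem:distinct} guarantees $|T_1| = d_1$, while $|T_2| = d_2$ is immediate, and the sets $\{w, 2w\}$, $T_1$, $T_2$ are pairwise disjoint (the points of $T_1$ lie in $\{1, \dots, w/c\}$ and those of $T_2$ are multiples of $w$ at least $cw > 2w$, and neither contains $w$ or $2w$). By definition, $Q_T$ is a polynomial of degree $N + 1 - |T| = N - 1 - (d_1 + d_2)$. Hence for any $j \le d_1 + d_2$, the product $Q_T(t) \cdot t^j$ is a polynomial in $t$ of degree at most $N - 1$.

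Now I invoke the standard fact about finite differences: for any polynomial $P$ of degree strictly less than $N$,
\begin{equation}
\sum_{t=0}^{N} (-1)^t \binom{N}{t} P(t) = 0.
\end{equation}
This is because $\sum_{t=0}^N (-1)^t \binom{N}{t} P(t) = (-1)^N (\Delta^N P)(0)$, the $N$-th finite difference of $P$ evaluated at $0$, which vanishes whenever $\deg P < N$. Applying this with $P(t) = Q_T(t) \cdot t^j$, which has degree at most $N - 1 < N$, gives
\begin{equation}
\sum_{t=0}^N (-1)^t \binom{N}{t} Q_T(t) \cdot t^j = 0.
\end{equation}
The $t = 0$ term of this sum contributes $0$ (it has a factor of $t^j = 0$ when $j \ge 1$, and when $j = 0$ it contributes $(-1)^0 \binom{N}{0} Q_T(0) = Q_T(0) = 0$ since $0 \notin T$ means $0$ is a root of $Q_T$). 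Therefore
\begin{equation}
\sum_{\ell = 1}^N \Phi(\ell) \cdot \ell^j = \sum_{\ell=1}^N (-1)^\ell \binom{N}{\ell} Q_T(\ell) \cdot \ell^j = \sum_{t=0}^N (-1)^t \binom{N}{t} Q_T(t) \cdot t^j = 0,
\end{equation}
as claimed.
\end{proof}

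The main (and really only) subtlety here is the bookkeeping on degrees and set sizes: one must use Lemma~\ref{lem:distinct} to be sure that $|T_1| = d_1$ exactly (so that $|T|$ is as expected and $Q_T$ has the claimed degree), and one must handle the $\ell = 0$ boundary term of the alternating-binomial-sum identity, which is why it was convenient that $0 \notin T$. Everything else is the classical observation that alternating sums against $\binom{N}{t}$ annihilate polynomials of degree below $N$.
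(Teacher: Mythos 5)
Your proof is correct and follows essentially the same route as the paper's: both reduce the claim to the fact that the alternating binomial sum $\sum_{\ell=0}^{N}(-1)^{\ell}\binom{N}{\ell}P(\ell)$ annihilates any polynomial $P$ of degree at most $N-1$, apply it to $P(\ell)=Q_T(\ell)\cdot\ell^{j}$, and dispose of the $\ell=0$ term using $0\notin T$. Your version is slightly more explicit about why $\deg Q_T = N+1-|T|$ (via the disjointness of $\{w,2w\}$, $T_1$, $T_2$ and Lemma~\ref{lem:distinct}) and about the finite-difference justification of the identity, but the argument is the same.
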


\begin{proof}
A basic combinatorial fact is that for any polynomial $Q$ of degree at most $N-1$,
the following identity holds:
\begin{equation} 
\label{keyidentity} 
\sum_{\ell=0}^N \binom{N}{\ell} (-1)^\ell Q(\ell) = 0.
\end{equation}

Observe that for any $j \leq d_1 + d_2+1$,
\begin{equation} \label{isapolynomial}  
Q_T(\ell) \cdot \ell^j  \text{ is a polynomial in } \ell \text{ of degree at most } N-1. 
\end{equation}
Furthermore, $\Phi(0)=0$, because $0 \not\in T$. Hence
\begin{equation} \label{nozeroterm} 
\sum_{\ell=0}^N \binom{N}{\ell} (-1)^\ell Q_T(\ell) \cdot \ell^j= \sum_{\ell=1}^N \binom{N}{\ell} (-1)^\ell Q_T(\ell) \cdot \ell^j. 
\end{equation}
Thus, we can calculate:
\begin{eqnarray*}
\sum_{\ell=1}^N \Phi(\ell) \cdot \ell^{j} = \sum_{\ell=1}^N (-1)^{\ell} \cdot \binom{N}{\ell} \cdot Q_T(\ell) \cdot \ell^j\\
= \sum_{\ell=0}^N (-1)^\ell \cdot \binom{N}{\ell} \cdot Q_T(\ell) \cdot \ell^j
=0.
\end{eqnarray*}
Here, the second equality follows from \Cref{nozeroterm}, while the third follows
from Equations \eqref{keyidentity} and \eqref{isapolynomial}.
\end{proof}

Let us turn to analyzing $\Phi $'s value on various inputs. \ Clearly the
following condition holds:

\begin{equation}
\Phi (\ell )=0~\text{for all }\ell \not\in T\text{.}  \label{zeros}
\end{equation}%
Next, observe that for any $r\in T$,
\begin{equation*}
\left\vert \Phi (r)\right\vert =N!\cdot \frac{1}{\prod_{j\in T,j\neq r}|r-j|}%
.
\end{equation*}

Consider any quantity $c\cdot i^{2}\cdot w\in T_{2}$. \ Then
\begin{flalign} & \left|\Phi(c \cdot w \cdot i^2)\right|/\left|\Phi(w)\right| = \frac{\prod_{j \in T, j \neq w} |w- j|}{ \prod_{j \in T, j \neq c \cdot i^2 \cdot w} |w \cdot c \cdot i^2 - j|} \notag  \\
& = \frac{\left|w-2w\right| \cdot \left(\prod_{j=1}^{d_2} \left|w-c \cdot j^2 \cdot w\right|\right) \cdot \left(\prod_{j=1}^{d_1} \left(w-\left\lfloor \frac{w}{cj^2}\right\rfloor\right)\right)}{\left|c \cdot i^2 \cdot w - w\right| \cdot \left|c \cdot i^2 \cdot w - 2w\right| \cdot \left( \prod_{j=1, j\neq i}^{d_2} \left|w \cdot c \cdot i^2 - w \cdot c \cdot j^2\right| \right)\cdot \left( \prod_{j=1}^{d_1} \left(w \cdot c \cdot i^2 - \left\lfloor\frac{w}{c \cdot j^2}\right\rfloor\right)\right)}\notag \\
& = \frac{c^{d_2} \cdot \left(\prod_{j=1}^{d_2} \left( j^2 - \frac{1}{c}\right)\right) \cdot \prod_{j=1}^{d_1} \left(w-\left\lfloor\frac{w}{c\cdot j^2}\right\rfloor\right)}{ \left(ci^2 -1\right)\cdot \left(ci^2 -2 \right) \cdot c^{d_2-1} \cdot \left( \prod_{j=1, j \neq i}^{d_2} \left|i^2 - j^2\right|\right) \cdot \left(\prod_{j=1}^{d_1} \left(w \cdot c \cdot i^2 - \left\lfloor \frac{w}{c \cdot j^2}\right\rfloor\right)\right)} \notag \\
& \leq
\frac{ c \cdot \left( \prod_{j=1}^{d_2} \left( j^2 - \frac{1}{c}\right)\right) \cdot \prod_{j=1}^{d_1} \left(w-\left\lfloor\frac{w}{c\cdot j^2}\right\rfloor\right)}{ \left(ci^2 -1\right)\cdot \left(ci^2 -2 \right) \cdot \left(\prod_{j=1, j \neq i}^{d_2} \left|i^2 - j^2\right|\right) \cdot \left(\prod_{j=1}^{d_1} \left(w \cdot c \cdot i^2 -  \frac{w}{c \cdot j^2}\right)\right)} \label{firstlongequation}
\end{flalign}

Now, observe that
\begin{align}
\prod_{j=1}^{d_1} \left(w-\left\lfloor\frac{w}{c\cdot j^2}\right\rfloor\right)
\leq \prod_{j=1}^{d_1} \left(w - \frac{w}{cj^2} +1 \right) =
\prod_{j=1}^{d_1} w \cdot \left(1-\frac{1}{cj^2}\right) \cdot \left( 1+
\frac{1}{w \cdot \left(1-\frac{1}{cj^2}\right)}\right)  \notag \\
\leq \prod_{j=1}^{d_1} w \cdot \left(1-\frac{1}{cj^2}\right) \left(1 + \frac{%
1}{(1-1/c) \cdot w}\right) \leq \left( \prod_{j=1}^{d_1} w \cdot \left(1-%
\frac{1}{cj^2}\right)\right) \cdot \left(1+o(1) \right).  \label{alongequation}
\end{align}

Hence, we see that Expression \eqref{firstlongequation} is bounded by

\begin{align}
& \frac{c\cdot \left( \prod_{j=1}^{d_{2}}\left( j^{2}-\frac{1}{c}\right)
\right) \cdot \left( \prod_{j=1}^{d_{1}}\left( 1-\frac{1}{c\cdot j^{2}}%
\right) \right) \cdot (1+o(1))}{\left( ci^{2}-1\right) \cdot \left(
ci^{2}-2\right) \cdot \left( \prod_{j=1,j\neq i}^{d_{2}}\left\vert
i^{2}-j^{2}\right\vert \right) \cdot \left( \prod_{j=1}^{d_{1}}\left( c\cdot
i^{2}-\frac{1}{c\cdot j^{2}}\right) \right) }  \notag \\
& \leq \frac{c\cdot \left( d_{2}!\right) ^{2}\cdot \left(
\prod_{j=1}^{d_{1}}\left( 1-\frac{1}{c\cdot j^{2}}\right) \right) \cdot
(1+o(1))}{\left( ci^{2}-1\right) \cdot \left( ci^{2}-2\right) \cdot \left(
\prod_{j=1,j\neq i}^{d_{2}}\left\vert i-j\right\vert \left\vert
i+j\right\vert \right) \cdot (c\cdot i^{2})^{d_{1}}\cdot \left(
\prod_{j=1}^{d_{1}}\left( 1-\frac{1}{c^{2}\cdot i^{2}\cdot j^{2}}\right)
\right) }  \notag \\
& =\frac{c\cdot \left( d_{2}!\right) ^{2}\cdot 2i^{2}\cdot \left(
\prod_{j=1}^{d_{1}}\left( 1-\frac{1}{c\cdot j^{2}}\right) \right) \cdot
(1+o(1))}{\left( ci^{2}-1\right) \cdot \left( ci^{2}-2\right) \cdot \left(
d_{2}+i\right) !\left( d_{2}-i\right) !\cdot (c\cdot i^{2})^{d_{1}}\cdot
\left( \prod_{j=1}^{d_{1}}\left( 1-\frac{1}{c^{2}\cdot i^{2}\cdot j^{2}}%
\right) \right) }  \notag \\
& \leq \frac{c\cdot 2i^{2}\cdot \left( d_{2}!\right) ^{2}\cdot (1+o(1))}{%
\left( ci^{2}-1\right) \left( ci^{2}-2\right) \cdot \left( d_{2}+i\right)
!\left( d_{2}-i\right) !\cdot (c\cdot i^{2})^{d_{1}}}\leq \frac{2\left(
1+o(1)\right) }{\left( 1-\frac{1}{c\cdot i^{2}}\right) \cdot (c\cdot
i^{2}-2)\cdot (c\cdot i^{2})^{d_{1}}}.  \label{verylongequation}
\end{align}%
In the penultimate inequality, we used the fact that $\frac{(d_{2}!)^{2}}{%
(d_{2}+i)!(d_{2}-i)!}=\frac{\binom{2d_{2}}{d_{2}+i}}{\binom{2d_{2}}{d_{2}}}%
\leq 1$.

Next, consider any quantity $\left\lfloor \frac{w}{c\cdot i^{2}}\right\rfloor \in T_{1}$%
. \ Then
\begin{flalign} & \left|\Phi\left( \left\lfloor \frac{w}{c \cdot i^2}\right\rfloor\right)\right|/\left|\Phi(w)\right| \notag \\
& = \frac{|w-2w| \left(\prod_{j=1}^{d_2} |w - cj^2 w| \right) \left(\prod_{j=1}^{d_1} \left( w - \left\lfloor  \frac{w}{cj^2}\right\rfloor\right)\right)}{ \left(w-\left\lfloor \frac{w}{c \cdot i^2}\right\rfloor\right) \cdot \left(2w - \left\lfloor \frac{w}{c \cdot i^2}\right\rfloor \right) \left(\prod_{j=1}^{d_2} \left(w \cdot c \cdot j^2 - \left\lfloor \frac{w}{c \cdot i^2}\right\rfloor\right)\right) \prod_{j=1, j\neq i}^{d_1} \left|\left\lfloor \frac{w}{c \cdot i^2}\right\rfloor - \left\lfloor \frac{w}{c \cdot j^2}\right\rfloor\right|} \notag \\
& \leq
\frac{|w-2w| \left(\prod_{j=1}^{d_2} |w - cj^2 w| \right) \left(\prod_{j=1}^{d_1} \left( w - \left\lfloor  \frac{w}{cj^2}\right\rfloor\right)\right)}{ \left(w- \frac{w}{c \cdot i^2}\right) \cdot \left(2w -  \frac{w}{c \cdot i^2} \right) \left(\prod_{j=1}^{d_2} \left(w \cdot c \cdot j^2 - \frac{w}{c \cdot i^2}\right)\right) \prod_{j=1, j\neq i}^{d_1} \left|\left\lfloor \frac{w}{c \cdot i^2}\right\rfloor - \left\lfloor \frac{w}{c \cdot j^2}\right\rfloor\right|} \notag \\
& \leq  \frac{|w-2w| \left(\prod_{j=1}^{d_2} |w - cj^2 w| \right) \left(\prod_{j=1}^{d_1} \left( w -  \frac{w}{cj^2}\right)\right) \cdot \left(1+o(1)\right)}{ \left(w- \frac{w}{c \cdot i^2}\right) \cdot \left(2w -  \frac{w}{c \cdot i^2} \right) \left(\prod_{j=1}^{d_2} \left(w \cdot c \cdot j^2 - \frac{w}{c \cdot i^2}\right)\right) \prod_{j=1, j\neq i}^{d_1} \left|\left\lfloor \frac{w}{c \cdot i^2}\right\rfloor - \left\lfloor \frac{w}{c \cdot j^2}\right\rfloor\right|} \label{anotherlongequation}
\end{flalign}
Here, the final inequality used \Cref{alongequation}. 
Let us consider the expression 
$\prod_{j=1,j\neq i}^{d_{1}}\left\vert \left\lfloor \frac{%
w}{c\cdot i^{2}}\right\rfloor - \left\lfloor \frac{w}{c\cdot j^{2}}\right\rfloor \right\vert $%
. \ 
This quantity is \emph{at least}
\begin{align}
\prod_{j=1, j\neq i}^{d_1}\left( \left|\frac{w}{c \cdot i^2} - \frac{w}{c
\cdot j^2} \right| -1\right) = w^{d_1-1} \cdot \prod_{j=1, j \neq i}^{d_1}
\frac{\left|j^2 - i^2\right| - \frac{c i^2 j^2}{w}}{ci^2 j^2}  \notag \\
= w^{d_1-1} \cdot \prod_{j=1, j \neq i}^{d_1} \frac{\left|j-i| \cdot
|j+i\right| - \frac{c i^2 j^2}{w}}{ci^2 j^2}  \notag \\
= \left(\frac{w}{ci^2}\right)^{d_1-1} \cdot \prod_{j=1, j \neq i}^{d_1}
\frac{\left|j-i| \cdot |j+i\right| - \frac{c i^2 j^2}{w}}{j^2}
\label{aboveexp}
\end{align}

We claim that Expression \eqref{aboveexp} is at least
\begin{equation}
\left( \frac{w}{ci^{2}}\right) ^{d_{1}-1}\cdot \frac{1}{2}.
\label{provedintheappendix}
\end{equation}%
In the case that $c=2$ and $d_{1}$ is (at most) $w^{1/3}$, this is precisely
\cite[Claim 4]{zhandry}. \ We will ultimately take $c$ to be a constant
strictly greater than 2 and hence $d_{1}=\lfloor \left( w/c\right)
^{1/3}\rfloor $ is a constant factor smaller than $w^{1/3}$. \ The proof of
\cite[Claim 4]{zhandry} works with cosmetic changes in this case. \ For
completeness, we present a derivation of the claim in \Cref{app}.

\Cref{provedintheappendix} implies that Expression \eqref{anotherlongequation} is
at most:

\begin{align}
\frac{|w-2w| \left(\prod_{j=1}^{d_2} |w - cj^2 w| \right)
\left(\prod_{j=1}^{d_1} \left( w - \frac{w}{cj^2}\right)\right) \cdot
\left(1+o(1)\right)}{ \left(w- \frac{w}{c \cdot i^2}\right) \cdot \left(2w -
\frac{w}{c \cdot i^2} \right) \left(\prod_{j=1}^{d_2} \left(w \cdot c \cdot
j^2 - \frac{w}{c \cdot i^2}\right)\right) \left(\frac{w}{ci^2}%
\right)^{d_1-1} \cdot \frac{1}{2}}  \notag \\
= \frac{2 \left(\prod_{j=1}^{d_2} |1 - cj^2| \right) \left(\prod_{j=1}^{d_1}
\left(1 - \frac{1}{cj^2}\right)\right) \cdot \left(1+o(1)\right)}{ \left(1-
\frac{1}{c \cdot i^2}\right) \cdot \left(2 - \frac{1}{c \cdot i^2} \right)
\left(\prod_{j=1}^{d_2} \left(c \cdot j^2 - \frac{1}{c \cdot i^2}%
\right)\right) \left(\frac{1}{ci^2}\right)^{d_1-1}}  \notag \\
= \frac{2 \left(\prod_{j=1}^{d_2} (j^2-1/c) \right) \left(\prod_{j=1}^{d_1}
\left(1 - \frac{1}{cj^2}\right)\right) \cdot \left(1+o(1)\right)}{ \left(1-
\frac{1}{c \cdot i^2}\right) \cdot \left(2 - \frac{1}{c \cdot i^2} \right)
\left(\prod_{j=1}^{d_2} \left(j^2 - \frac{1}{c^2 \cdot i^2}\right)\right)
\left(\frac{1}{ci^2}\right)^{d_1-1}}  \notag \\
\leq \frac{2 \left(1+o(1)\right)}{ \left(1- \frac{1}{c \cdot i^2}\right)
\cdot \left(2 - \frac{1}{c \cdot i^2} \right) \left(\frac{1}{ci^2}%
\right)^{d_1-1}} \leq 4 \cdot \left(ci^2\right)^{d_1-1}.  \label{implication}
\end{align}

Summarizing Equations \eqref{verylongequation} and \eqref{implication}, we have shown
that: for any quantity $c \cdot i^2 \cdot w \in T_2$,
\begin{equation}  \label{keyeq1}
\left|\Phi(c \cdot w \cdot i^2)\right|/\left|\Phi(w)\right| \leq \frac{2
\left(1+o(1)\right)}{\left(1-\frac{1}{c \cdot i^2}\right) \cdot (c \cdot i^2
- 2) \cdot (c\cdot i^2)^{d_1}}
\end{equation}
and for any quantity $\left\lfloor \frac{w}{c \cdot i^2} \right\rfloor \in T_1$,
\begin{equation}  \label{keyeq2}
\left|\Phi\left( \left\lfloor \frac{w}{c \cdot i^2}\right\rfloor\right)\right|/\left|\Phi(w)%
\right| \leq 4\cdot \left(ci^2\right)^{d_1-1}.
\end{equation}

Let $\phi =\Phi /C$, where $C$ is as in \Cref{Cdef}. \ Let $%
D_{1}=d_{1}$ and $D_{2}=d_{2}$. \ \Cref{phdlem} implies that $\phi $ is a
feasible solution for the dual linear program of \Cref{duallpsec}. \ We now
show that, for any constant $\delta >0$, by choosing $c$ to be a
sufficiently large constant (that depends on $\delta $), we can ensure
that $\phi $ achieves objective value $1-2\delta $.

Let
\begin{eqnarray*}
A &=&|\Phi (w)|\cdot w^{D_{1}}, \\
B &=&|\Phi (2w)|\cdot (2w)^{D_{1}},
\end{eqnarray*}%
and
\begin{equation*}
E=\sum_{i=1}^{d_{1}}|\Phi (\lfloor w/ci^{2}\rfloor )|\cdot \left( \lfloor
w/ci^{2}\rfloor \right) ^{D_{1}}+\sum_{i=1}^{d_{2}}|\Phi (\lfloor w\cdot
ci^{2}\rfloor )|\cdot \left( w\cdot c\cdot i^{2}\right) ^{D_{1}}.
\end{equation*}%
By \Cref{zeros}, $C=A+B+E$.

Moreover, observe that $\text{sgn}(\Phi(w)) = - \text{sgn}(\Phi(2w))$, so
without loss of generality we may assume $\Phi(w) \geq 0$ and $\Phi(2w) \leq
0$ (if not, then replace $\Phi$ with $-\Phi$ throughout).

We now claim that by choosing $c$ to be a sufficiently large constant, we
can ensure that $E \leq \delta \cdot A$. To see this, observe that
Equations \eqref{keyeq1} and \eqref{keyeq2}, along with the fact that $%
D_1=d_1$ and $D_2=d_2$ implies that
\begin{align*}
E/A \leq \frac{1}{w^{D_1}} \left[ \left(\sum_{i=1}^{d_1} \left(\lfloor
w/ci^2\rfloor\right)^{D_1}\cdot 4 \cdot \left(ci^2\right)^{d_1-1} \right) +
\left(\sum_{i=1}^{d_2} \left(w \cdot c \cdot i^2\right)^{D_1} \frac{2
\left(1-\frac{1}{c \cdot i^2}\right)\left(1+o(1)\right)}{(c \cdot i^2 - 2)
\cdot (c\cdot i^2)^{d_1}}\right)\right] \\
\leq \frac{1}{w^{D_1}} \left[ \left(\sum_{i=1}^{d_1}
\left(w/ci^2\right)^{D_1} \cdot 4 \cdot \left(ci^2\right)^{d_1-1} \right) +
\left(\sum_{i=1}^{d_2} \left(w \cdot c \cdot i^2\right)^{D_1} \frac{2
\left(1-\frac{1}{c \cdot i^2}\right)\left(1+o(1)\right)}{(c \cdot i^2 - 2)
\cdot (c\cdot i^2)^{d_1}}\right) \right] \\
\leq 4 \left(\sum_{i=1}^{d_1} \frac{1}{c\cdot i^2}\right) +
\left(\sum_{i=1}^{d_2} \frac{2 \left(1+o(1)\right)}{\left(1-\frac{1}{c \cdot
i^2}\right) \left(c \cdot i^2 - 2\right)}\right) \\
\end{align*}

Since $\sum_{i=1}^{\infty}1/(ci^2) \leq \frac{\pi^2}{6c}$, we see that
choosing $c$ to be a sufficiently large constant depending on $\delta$
ensures that $E/A \leq \delta$ as desired.

Hence, $\phi$ achieves objective value at least
\begin{align*}
\phi(w) \cdot w^{D_1} - \phi(2w) \cdot (2w)^{D_1} - \sum_{\ell \in \{1,
\dots, N\}, \ell \not\in \{w, 2w\}} |\phi(\ell)| \cdot \ell^{D_1} \\
\geq \frac{A + B - E}{A+B+E} \geq \frac{(1-\delta) A + B}{%
(1+\delta)A + B} \geq 1-2\delta.
\end{align*}

%=======================================================================
\subsection{Approximate counting with classical samples}
\label{sec:classical_samples_queries}

For completeness, in this section, we sketch classical counterparts of \Cref{thm:main} and \Cref{thm:alg}. That is, we show tight bounds on classical randomized algorithms for $\ApxCount$ that make membership queries and have access to uniform random samples from the set being counted.

\begin{proposition}
There is a classical randomized algorithm that solves $\ApxCount$ with high probability using either $O(N/w)$ queries to the membership oracle for $S$, or else using $O(\sqrt{w})$ uniform samples from $S$.
\end{proposition}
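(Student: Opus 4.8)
The plan is to exhibit two separate classical algorithms, one making only $O(N/w)$ membership queries and one using only $O(\sqrt{w})$ uniform samples, each deciding between $|S|\ge 2w$ and $|S|\le w$ with error bounded below $1/3$; amplification to any desired constant error is then a standard $O(1)$-fold majority vote, so it suffices to get constant-bounded error.

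For the query algorithm I would draw indices $i_1,\dots,i_T\in[N]$ independently and uniformly for $T=\Theta(N/w)$, make one membership query for each, and let $Z$ be the number lying in $S$. Then $Z$ is a sum of independent indicators with mean $T|S|/N$, which is at least $2Tw/N$ on a YES instance and at most $Tw/N$ on a NO instance. Taking the constant in $T$ large enough, a Chernoff bound separates these two regimes with constant probability, and thresholding $Z$ at $1.5\,Tw/N$ decides the problem. This step is entirely routine.

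For the sampling algorithm I would use a birthday test: set $R:=\max\{2,\lceil\sqrt{w}\rceil\}$, draw $R$ uniform samples from $S$ with replacement, and check whether they are pairwise distinct. If $|S|=k$, the probability that the samples are pairwise distinct equals $d(k):=\prod_{j=1}^{R-1}(1-j/k)$ when $k\ge R$, and $0$ otherwise; crucially $d$ is nondecreasing in $k$, so a collision occurs with probability at least $1-d(w)$ on a NO instance and at most $1-d(2w)$ on a YES instance. The heart of the argument is showing that $d(w)$ and $d(2w)$ differ by an absolute constant $\gamma>0$ uniformly over all $w$: for $w$ above an absolute constant this follows from $1-x\le e^{-x}$ and $1-x\ge e^{-x-x^2}$ (valid for $x\in[0,1/2]$) together with $R^2/w=\Theta(1)$, which makes $d(w)\approx e^{-1/2}$ and $d(2w)\approx e^{-1/4}$; for the finitely many small values of $w$ one has $R=2$ and the gap in collision probabilities is $\Omega(1/w)=\Omega(1)$, handled directly. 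Given this constant gap, I would repeat the birthday test $K=O(1/\gamma^2)=O(1)$ times, for $KR=O(\sqrt{w})$ samples total, estimate the collision probability to additive error $\gamma/2$ via Hoeffding's inequality, and threshold at the midpoint of $[\,1-d(2w),\,1-d(w)\,]$.

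The only step that needs genuine (but still elementary) care is verifying the uniform constant lower bound $\gamma$ on $d(2w)-d(w)$; everything else is bookkeeping. If one additionally wants to allow samples that are merely $1/\mathrm{poly}(N)$-close to uniform rather than exactly uniform, a standard perturbation argument shows every probability above changes by $o(1)$, so all bounds survive with $\gamma$ replaced by $\gamma/2$.
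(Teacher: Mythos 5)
Your proof is correct and takes essentially the same approach as the paper: $O(N/w)$ membership queries via Chernoff/Hoeffding on the empirical density of sampled indices, and $O(\sqrt{w})$ uniform samples via the birthday paradox. The only minor difference is that you threshold on the indicator of ``any collision occurred'' (with an explicit constant gap $d(2w)-d(w)\geq\gamma$ and a few repetitions), whereas the paper's sketch thresholds on the count of colliding pairs and invokes a variance/Chebyshev bound; both are standard and equally routine, and your version is somewhat more careful about the promise range $|S|\leq w$ versus $|S|\geq 2w$ and about small $w$.
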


\begin{proof}[Proof sketch]
By reducing approximate counting to the problem of estimating the mean of a biased coin, $O(N/w)$ queries are sufficient.

Alternatively, if we take $R$ samples, then the expected number of birthday collisions is $\binom{R}{2} \cdot \frac{1}{|S|}$ and the variance is $\binom{R}{2} \cdot \frac{1}{|S|}\left(1-\frac{1}{|S|}\right)$. So, taking $O(\sqrt{w})$ samples and computing the number of birthday collisions is sufficient to distinguish $|S| \le w$ from $|S| \ge 2w$ with $\frac{2}{3}$ success probability.
\end{proof}

\begin{proposition}
Let $M$ be a classical randomized algorithm that makes $T$ queries to the membership oracle for $S$,
and takes a total of $R$ uniform samples from $S$.
If $M$ decides whether $\left\vert S\right\vert =w$\ or $\left\vert S\right\vert =2w$
with high probability, promised that one of those is the case, then either
$T = \Omega(N/w)$ or $R = \Omega(\sqrt{w})$.
\end{proposition}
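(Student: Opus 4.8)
The plan is to establish the two lower bounds separately, via a distinguishing/indistinguishability argument, and observe that any algorithm must pay for at least one of them. First I would set up the hard instance distribution: a uniformly random set $S \subseteq [N]$ of size exactly $w$ (the NO case) versus a uniformly random $S$ of size exactly $2w$ (the YES case). The algorithm $M$ sees two kinds of information: answers to membership queries $i \in S$?, and uniform samples from $S$. I would argue that unless $M$ uses enough of each resource, the distribution on transcripts it sees is statistically close in the two cases, so it cannot decide correctly with probability bounded away from $1/2$.

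For the query lower bound $T = \Omega(N/w)$, the key observation is that membership queries on a random $w$-element (resp.\ $2w$-element) subset of $[N]$ reveal essentially nothing until the algorithm happens to query an element of $S$: each fresh query $i$ lands in $S$ with probability $|S|/N \in \{w/N, 2w/N\}$, and the two cases are indistinguishable until such an event occurs. A standard argument (e.g.\ via the relationship between distinguishing advantage and collision probability, or a hybrid argument over queries) shows that with $o(N/w)$ queries, the probability of ever querying an element of $S$ is $o(1)$ in both cases, and conditioned on no such hit the transcripts are identically distributed; hence the distinguishing advantage is $o(1)$. For the sample lower bound $R = \Omega(\sqrt{w})$, I would use the birthday-collision structure: with $R = o(\sqrt{w})$ uniform samples from a set of size between $w$ and $2w$, the probability that any two samples coincide is $\binom{R}{2}/|S| = o(1)$, so with high probability all samples are distinct. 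Conditioned on all-distinct samples, $R$ uniform samples from a random $w$-subset of $[N]$ have the \emph{same} distribution as $R$ uniform samples from a random $2w$-subset (both are just $R$ distinct uniformly random elements of $[N]$, since $R \le w$), so again the algorithm gets no information.

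The remaining point is to handle an algorithm that uses \emph{both} queries and samples, possibly adaptively and interleaved; here one runs the combined hybrid argument. As long as $T = o(N/w)$ and $R = o(\sqrt{w})$, define the ``bad'' event to be that either some membership query returns $1$, or two samples collide, or a sample coincides with a previously queried element (or vice versa). A union bound over the three failure modes shows this bad event has probability $o(1)$ in both the $w$-case and the $2w$-case. Conditioned on the bad event not occurring, the transcript seen by $M$ — the sequence of query answers (all $0$) and samples (all distinct, all outside the set of queried points, otherwise uniform on $[N]$) — has an identical distribution under $|S|=w$ and $|S|=2w$, because the only distributional dependence on $|S|$ enters through exactly those events we have excluded. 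Therefore $M$'s acceptance probability differs by at most $o(1)$ between the two cases, contradicting the assumption that $M$ decides correctly with high probability, and we conclude $T = \Omega(N/w)$ or $R = \Omega(\sqrt{w})$.

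\textbf{Main obstacle.} The main subtlety is the adaptivity and interleaving of queries and samples: a careful hybrid/coupling argument is needed to show that conditioning on the ``no hit, no collision'' event really does make the two transcript distributions identical rather than merely close, and that the bad-event probability is controlled uniformly over all adaptive strategies. The individual $O(N/w)$-query and $O(\sqrt{w})$-sample obstructions are routine; combining them cleanly is the one place requiring care.
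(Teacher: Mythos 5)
Your proposal is correct and follows essentially the same route as the paper's proof sketch: a random $w$-subset versus a random $2w$-subset, a union bound showing that with $o(\sqrt{w})$ samples and $o(N/w)$ queries the algorithm sees no sample collisions and no query hits with probability $1-o(1)$, and the observation that conditioned on these non-events the transcript distributions are identical. The only cosmetic difference is that the paper dispenses with the adaptivity/interleaving issue by asserting (without loss of generality) that the algorithm may take all samples first and then query only unsampled points, whereas you handle the interleaving directly via the bad-event coupling — both are fine.
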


\begin{proof}[Proof sketch]
Note that without loss of generality, we may
assume that the algorithm first takes all of the samples it needs, and then queries random elements of $[N]$ that did not appear in the samples. Suppose the algorithm takes $R = o(\sqrt{w})$ samples and then makes $T = o(N/w)$ queries. Consider what happens when the algorithm tries to distinguish a random subset of size $w$ from a random subset of size $2w$ of $[N]$. By a union bound, the probability that the algorithm sees any collisions in the samples is $o(1)$, and the probability that the algorithm finds any additional elements of $S$ via queries is also $o(1)$. So, if the set has size either $w$ or $2w$, with $1 - o(1)$ probability, the algorithm's view of the samples is just a random subset of size $R$ of $[N]$ drawn without replacement, and the algorithm's view of the queries is just $T$ ``no'' answers to membership queries. Hence, the algorithm fails to distinguish random sets of size $w$ and size $2w$ with any constant probability of success.
\end{proof}

%=======================================================================
\subsection{Extending the lower bound to QSampling unitarily}
\label{sec:unitary}
So far in this section we have proved upper and lower bounds on the power of quantum algorithms for approximate counting that have 
access to two resources (in addition to membership queries): 
copies of $|S\>$, and the unitary transformation that reflects about $|S\>$. 
The assumption of access to the reflection unitary is justified by the argument that, if we had access to a unitary that prepared $|S\>$, then it could be used to reflect about $|S\>$ as well. 

Giving the algorithm access 
to just the two resources above is an appealing model to use for upper bounds, since it does not assume anything about the method by which copies of $|S\>$ are prepared. This means algorithms
derived in this model work in many different settings.
For example, the algorithm may be able to QSample
because someone else simply handed the algorithm copies of $|S\>$, or perhaps several copies of $|S\>$ just happen to be stored in the algorithm's quantum memory as a side effect of the execution of some earlier quantum algorithm. The upper bound given
in \Cref{thm:alg} applies in any of these settings.

On the other hand, since only
permitting access to QSamples
and reflections about $|S\>$ ties the algorithm's hands,
lower bounds for this model
(e.g., \Cref{thm:main}) could be viewed
as weaker than is desirable. 
In particular, our original 
justification
for allowing access to reflections about $|S\>$
was that access to a unitary that prepared 
the state $|S\>$ would in particular allow 
such reflections to be done. 
Given this justification, it is very natural to wonder whether 
our lower bounds extend beyond just 
QSamples and reflections, to
algorithms that are given access
to \emph{some} unitary process that 
permits both QSampling and reflections about $|S\>$. 

Note that
an algorithm with access to such a unitary could potentially exploit the unitary in ways other
than QSamples and reflections to learn information about $|S\>$. 
For example, the algorithm could choose to run the unitary on inputs that do not produce $|S\>$.
More generally, given a quantum circuit that implements a unitary, it is possible to construct, in a completely black-box manner, the inverse of this unitary, and also a controlled version of the unitary.
The algorithm may choose to run the inverse on a state other than $|S\>$ to learn some additional information that is not captured by access
to QSamples and reflections alone.

In summary, in this section
we ask whether we can we extend the lower bound of \Cref{thm:main} to work in a model where the algorithm is given access to some unitary operator that conveys the power to both QSample and reflect about $|S\>$.\footnote{We thank Alexander Belov (personal communication) for raising this question.} 
Via \cref{thm:finalthm} below, we explain that the answer is yes.

\medskip

It may seem convenient to assume that the unitary
transformation preparing $|S\>$ 
maps the all-zeros state to $|S\>$.
But this is not the most general method of preparing $|S\>$ by a unitary.
A unitary $U$ that maps the all-zeros state to $|S\>|\psi\>$ would also suffice to create copies of $|S\>$, since the register containing $|\psi\>$ can simply be ignored for the remainder of the computation.
More formally, assume $U$ behaves as 
\begin{equation}
    U|0^m\> = |S\>|\psi\>,
\end{equation}
where $|S\>|\psi\>$ is some $m$-qubit state.
Clearly we can use $U$ to create as many copies of $|S\>$ as we like, 
which as a by-product also creates copies of $|\psi\>$.
This unitary also lets us reflect about $|S\>$. 
To see how, first use this unitary to create a copy of $|\psi\>$, and then consider the action of the unitary $U(\id - 2|0^m\>\<0^m|)U^\dagger$ on the state $|\phi\>|\psi\>$ for any state $|\phi\>$. We claim that this unitary acts as a reflection about $|S\>$ when restricted to the first register. 
This establishes that any $U$ of this form subsumes the power of both QSamples and reflections about $|S\>$.

Let us also assume without loss of generality that $|S\>|\psi\>$ is orthogonal to $|0^m\>$ from now on. This can be achieved by adding an additional qubit to the input that is always negated by the unitary. That is, we could instead consider the map $(U \otimes X)|0^m\>|0\> = |S\>|\psi\>|1\>$, which is orthogonal to the starting state by construction, and only increases the value of $m$ by $1$.

Of course, the requirement that $U|0^m\> = |S\>|\psi\>$ does not fully specify $U$, as
it does not prescribe how $U$ behaves on other input states.
A reasonable prescription is that $U$ should behave ``trivially'' on other input states, so that it does not leak information about $S$ by its behavior on other states. 
In tension with this prescription is the fact the rest of the unitary must depend on $S$, since the first column of the unitary contains $|S\>$, and the rest of the columns have to be orthogonal to this.

Alexander Belov (personal communication) brought to our attention a 
very simple construction of such a unitary that leaks minimal additional information about $S$. 
Consider the unitary $U$ that satisfies $U|0^m\> = |S\>|\psi\>$ and $U|S\>|\psi\> = |0^m\>$, with $U$ acting as identity outside $\mathrm{span}\{|0^m\>,|S\>|\psi\>\}$. $U$ is simply a reflection about the state $\frac{1}{\sqrt{2}}\bigl(|0^m\>-|S\>|\psi\>\bigr)$. This state is correctly normalized because we assumed that $|S\>|\psi\>$ is orthogonal to $|0^m\>$. 
Clearly $U$ is now fully specified on the entire domain (once we have fixed $|\psi\>$) and it does not seem to leak any additional information about $S$.

In order to prove concrete lower bounds on the cost
of algorithms for approximate counting given access to $U$, we need to fix $|\psi\>$. 
To answer the question posed in this section, we only need to establish that there exists \emph{some} choice of $|\psi\>$ for which our algorithms cannot be improved. (Note that we cannot hope to establish lower bounds for arbitrary $|\psi\>$, since $|\psi\>$ could just contain the answer to the problem we are solving.)

To this end we make the specific choice of $|\psi\> = |S\>$ and consider the unitary $V$ that acts as the unitary $U$ above with $|\psi\> = |S\>$. 
In other words, $V$ maps $|0^m\>$ to $|S\>|S\>$, $|S\>|S\>$ to $|0^m\>$, and acts as identity on the rest of the space. 
We also assume that $|0^m\>$ is orthogonal to $|S\>|S\>$. 
In other words, $V$ simply reflects about the state $\frac{1}{\sqrt{2}}\bigl(|0^m\>-|S\>|S\>\bigr)$.

As previously discussed, granting an algorithm access to this unitary $V$ lends the algorithm at least as much power the ability to QSample and perform reflections about $|S\>$. How efficiently can we solve approximate counting with membership queries and uses of the unitary $V$? 

We can use our Laurent polynomial method to establish optimal lower bounds in this model as well and we obtain lower bounds identical to \Cref{thm:main}. 

\begin{theorem} \label{thm:finalthm} \label{thm:unitary}
Let $Q$ be a quantum algorithm that makes $T$ queries to the membership oracle for $S$,
and makes $R$ uses of the unitary $V$ defined above (and its inverse and controlled-$V$).
If $Q$ decides whether $\left\vert S\right\vert =w$\ or $\left\vert S\right\vert =2w$
with high probability, promised that one of those is the case, then either
\begin{equation}
T=\Omega\left(  \sqrt{\frac{N}{w}}\right)  \qquad \textrm{or} \qquad
R=\Omega\left(  \min\left\{  w^{1/3},\sqrt{\frac{N}{w}}\right\}  \right).
\end{equation}
\end{theorem}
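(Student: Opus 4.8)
The plan is to mirror the proof of \Cref{thm:main}: I would first prove a generalization of the Laurent polynomial method (\Cref{laurentlem}) that allows uses of $V$, of $V^{-1}$, and of controlled-$V$ in place of free copies of $|S\>$ and of the reflection $\mathcal{R}_S$, and then feed the resulting Laurent polynomial into \Cref{mainthm} exactly as in the proof of \Cref{thm:main}.

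The heart of the argument is the bookkeeping for the matrix entries of $V$. By definition $V$ is the reflection about $|v\> := \tfrac{1}{\sqrt 2}\bigl(|0^m\> - |S\>|S\>\bigr)$, so $V = \id - 2|v\>\<v|$, and expanding the projector gives
\begin{equation}
V = \id - |0^m\>\<0^m| + |0^m\>\<S|\<S| + |S\>|S\>\<0^m| - |S\>|S\>\<S|\<S| .
\end{equation}
Writing $|S\> = \tfrac{1}{\sqrt{|S|}}\sum_i x_i|i\>$ and homogenizing with $|S| = \sum_i x_i$ (as in the treatment of $\mathcal{R}_S$ in \Cref{laurentlem}), I would check that every matrix entry of each summand above is, after reducing $x_i^2 = x_i$, a real multilinear polynomial in $X$ of degree at most $4$ divided by $|S|^2$. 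The same bound holds for the entries of controlled-$V$ (which are products of a Kronecker delta on the control and untouched registers with an entry of $V$ or of $\id$), and for $V^{-1}$, since $V$ and controlled-$V$ are reflections and hence their own inverses. Now I would rerun the argument of \Cref{laurentlem}: the algorithm starts in the computational-basis state $|0^m\>$, which carries no power of $|S|$ in the denominator; each of the $T$ membership queries raises every amplitude's polynomial degree by $1$ and leaves the denominator unchanged; and each of the $R$ uses of $V$, $V^{-1}$, or controlled-$V$ raises the degree by at most $4$ and multiplies the denominator by $|S|^2$. Hence each final amplitude is a real multilinear polynomial in $X$ of degree at most $T + 4R$ divided by $|S|^{2R}$, so the acceptance probability $p(X)$ is a real multilinear polynomial of degree at most $2T + 8R$ divided by $|S|^{4R}$. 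Applying Minsky--Papert symmetrization (\Cref{symlem}) to $p(X)\cdot|X|^{4R}$, the averaged acceptance probability
\begin{equation}
q(k) := \E_{|S| = k}\bigl[\Pr[Q \text{ accepts}]\bigr]
\end{equation}
is a univariate Laurent polynomial in $k$ with positive degree at most $2T + 4R$ and negative degree at most $4R$.

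With this in hand, the finish is immediate. If $Q$ solves approximate counting then $q(w) \le 1/3$ and $q(2w) \ge 2/3$, while $q(\ell) \in [0,1]$ for every integer $\ell \in \{1,\dots,N\}$; after a constant number of repetitions of $Q$ and a majority vote (which only multiplies $T$ and $R$ by $O(1)$) we may assume $q(w)$ and $1 - q(2w)$ are as small as we like. The affine image $\tilde p(k) := 1 - 2q(k)$ is then a Laurent polynomial with the same positive and negative degrees as $q$ satisfying the hypotheses of \Cref{mainthm} for some $\varepsilon < 1$. \Cref{mainthm} forces the negative degree of $\tilde p$ to be $\Omega(w^{1/3})$ or its positive degree to be $\Omega(\sqrt{N/w})$, i.e., $4R = \Omega(w^{1/3})$ or $2T + 4R = \Omega(\sqrt{N/w})$. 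Rearranging, either $T = \Omega\bigl(\sqrt{N/w}\bigr)$ or $R = \Omega\bigl(\min\{w^{1/3}, \sqrt{N/w}\}\bigr)$, which is exactly the claimed bound.

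The step I expect to be the main obstacle is the first one: verifying that replacing $\mathcal{R}_S$ by the specific unitary $V$ does not break the ``multilinear polynomial over a power of $|S|$'' invariant, and in particular that allowing the algorithm to apply $V^{-1}$ and controlled-$V$ to arbitrary intermediate states --- rather than only to $|0^m\>$ or $|S\>|S\>$ --- is harmless. This should follow from the fact that the degree/denominator bound above is established for \emph{every} entry of $V$ and of controlled-$V$ (not merely for their action on the distinguished two-dimensional subspace), together with the observation that both are involutions. Everything else is a routine adaptation of the proofs of \Cref{laurentlem} and \Cref{thm:main}.
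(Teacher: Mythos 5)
Your proposal is correct and follows essentially the same route as the paper: express the entries of $V$ (and of controlled-$V$, using that $V$ is an involution) as low-degree polynomials in $X$ over a fixed power of $|S|$, rerun the bookkeeping of \Cref{laurentlem} to get a symmetrized Laurent polynomial with positive degree $O(T+R)$ and negative degree $O(R)$, and invoke \Cref{mainthm}. Your explicit five-term expansion of $V=\id-2|v\>\<v|$ and the resulting ``degree $\le 4$ over $|S|^{2}$ per entry'' accounting is, if anything, slightly more careful than the paper's, and the constants differ only in ways absorbed by the $\Omega(\cdot)$ in the conclusion.
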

\begin{proof}
We follow the same strategy as in
the proof of \Cref{thm:main}. Recall that $x \in \{0, 1\}^N$ denotes the indicator vector of the set $S$. We only need to show that such a quantum algorithm gives rise to a Laurent polynomial in $|S|:=\sum_{i=1}^n x_i$, with maximum exponent $O(T+R)$ and minimum exponent at least $-O(R)$ (as shown in \Cref{laurentlem} for the QSamples and reflections model).

We can prove this exactly the same way as \Cref{laurentlem} is established. Our quantum algorithm starts out from a canonical starting state that does not depend on the input and hence each entry of the starting state is a degree-$0$ polynomial.
Membership queries involve multiplication with an oracle whose entries are ordinary polynomials of degree at most $1$. The only thing that remains is understanding what the entries of the unitary $V$ look like. We claim that the entries of $V$ are given
by a polynomial of degree at most 2 in the entries of the input $x$,
with all coefficients of this degree-2 polynomial equal
to either a constant, or a constant multiple of $|S|^{-1}$.

To see this, note that $V$ is simply a reflection about the state
\begin{equation}
    \frac{1}{\sqrt{2}}\bigl(|0^m\>-|S\>|S\>\bigr)
    = \frac{1}{\sqrt{2}}\left(|0^m\>-\frac{1}{|S|}\Bigl(\sum_{i}x_i|i\>\Bigr)\Bigl(\sum_{j}x_j|j\>\Bigr)
    \right).
\end{equation} 
The coefficient in front of $|0^m\>$ is a degree-$0$ polynomial and the other nonzero coefficients are
a polynomial of degree at most 2 in the entries of the input $x$, with each coefficient of this polynomial equal to a constant multiple of $|S|^{-1}$.

Hence, each entry of the unitary $V$ is also a polynomial of degree at most 2 in the entries of the input $x$,
with each coefficient of this degree-2 polynomial equal
to either a constant, or a constant multiple of $|S|^{-1}$. The same also holds for controlled-$V$, since that unitary is just the direct sum of identity with $V$. $V$ is also self-inverse, so we do not need to account for that separately.

After the algorithm has made all the membership queries and uses of $V$, each amplitude of the final quantum state can be expressed as a polynomial of degree $O(T + R)$ in the input $x$, in which all coefficients are constant multiples of $|S|^{-R}$. The acceptance probability $p(x)$ of this algorithm will be a sum of squares of such polynomials. 
Exactly
as in the proof of \Cref{thm:main},
\Cref{symlem} implies that there is a univariate polynomial
$q$ of degree at most $O(T+R)$, with coefficients that are multiples
of the coefficients of $p$, such that for all integers $k \in \{0, \dots, N\}$, 
\begin{equation}
q\left(  k\right)  :=\E_{\left\vert X\right\vert =k}\left[
p\left(  X\right)  \right]  .
\end{equation}
Since the coefficients of $p(X)$ are constant multiples of $|X|^{-2R}$, 
$q$ is in fact a real Laurent polynomial in $k$, with maximum exponent at most $O(R+T)$\ and minimum exponent at
least $-2R$.
The theorem follows by a direct 
application \Cref{mainthm} to $q$.
\end{proof}
%=======================================================================
\section{Discussion and open problems\label{OPEN}}

\label{sec:open}

%=======================================================================
\subsection{Approximate counting with QSamples and queries only}

\label{IMPROVE}If we consider the model where we only have membership
queries and samples (but no reflections), then the best upper bound we can
show is $O\left( \min \left\{ \sqrt{w},{\sqrt{{N}/{w}}}\right\} \right) $,
using the sampling algorithm that looks for birthday collisions, and the 
quantum counting algorithm. It
would be interesting to improve the lower bound further in this case, but it
is clear that the Laurent polynomial approach cannot do so, since it hits a
limit at $w^{1/3}$. \ Hence a new approach is needed to tackle the model
without reflections.

We now give what we think is a viable path to solve this problem. \
Specifically, we observe that our problem---of lower-bounding the number of
copies of $\left\vert S\right\rangle $\ \textit{and} the number of queries
to $\mathcal{O}_{S}$\ needed for approximate counting of $S$---can be
reduced to a pure problem of lower-bounding the number of copies of $%
\left\vert S\right\rangle $. \ To do so, we use a hybrid argument, closely
analogous to an argument recently given by Zhandry \cite{zhandry:lightning}
in the context of quantum money.

Given a subset $S\subseteq\left[ L\right] $, let $\left\vert S\right\rangle $%
\ be a uniform superposition over $S$ elements. \ Then let%
\begin{equation}
\rho_{L,w,k}:=\E_{S\subseteq\left[ L\right] ~:~\left\vert S\right\vert =w}%
\left[ \left( \left\vert S\right\rangle \left\langle S\right\vert \right)
^{\otimes k}\right]
\end{equation}
be the mixed state obtained by first choosing $S$\ uniformly at random
subject to $\left\vert S\right\vert =w$, then taking $k$ copies of $%
\left\vert S\right\rangle $. \ Given two mixed states $\rho$\ and $\sigma$,
recall also that the \textit{trace distance}, $\left\Vert
\rho-\sigma\right\Vert _{\mathrm{tr}}$, is the maximum bias with which $%
\rho $\ can be distinguished from $\sigma$\ by a single-shot measurement.

\begin{theorem}
\label{hybridthm}Let $2w\leq L\leq N$. \ Suppose $\left\Vert
\rho_{L,w,k}-\rho_{L,2w,k}\right\Vert _{\mathrm{tr}}\leq\frac{1}{10}$. \
Then any quantum algorithm $Q$\ requires either $\Omega\left( \sqrt{\frac{N}{%
L}}\right) $\ queries to $\mathcal{O}_{S}$\ or else $\Omega\left( k\right) $%
\ copies of $\left\vert S\right\rangle $ to decide whether $\left\vert
S\right\vert =w$\ or $\left\vert S\right\vert =2w$ with success probability
at least $2/3$, promised that one of those is the case.
\end{theorem}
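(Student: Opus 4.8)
The plan is to prove the contrapositive by a hybrid argument in the spirit of BBBV \cite{bbbv}, closely following the quantum-money argument of Zhandry \cite{zhandry:lightning}. Fix the hard distribution over inputs: draw $R\subseteq[N]$ uniformly with $|R|=L$, and then draw $S\subseteq R$ uniformly with $|S|=w$ (a NO instance) or $|S|=2w$ (a YES instance); note $|S|\le 2w\le L$. Any $Q$ solving $\mathsf{ApxCount}_{N,w}$ with success probability at least $2/3$ must distinguish the YES and NO distributions with bias at least $1/3$. Suppose toward a contradiction that $Q$ makes $T=o(\sqrt{N/L})$ queries to $\mathcal{O}_S$ and uses $k'\le k$ copies of $|S\rangle$.

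First I would pass to a \emph{reference world} in which the query oracle $\mathcal{O}_S$ is replaced by the identity (the ``null oracle''), while $Q$ still receives its $k'$ copies of $|S\rangle$. In the reference world $Q$'s output depends on the YES/NO bit only through the copies, so its distinguishing bias there is at most $\|\rho_{N,w,k'}-\rho_{N,2w,k'}\|_{\mathrm{tr}}$. Since drawing $S$ uniformly of size $x$ inside a uniformly random $R$ of size $L$ yields the uniform distribution on size-$x$ subsets of $[N]$, we have $\rho_{N,x,k'}=\mathbb{E}_R[\rho_{R,x,k'}]$, and $\rho_{R,x,k'}$ is unitarily equivalent to $\rho_{L,x,k'}$; by convexity of the trace norm and the fact that tracing out copies cannot increase trace distance, $\|\rho_{N,w,k'}-\rho_{N,2w,k'}\|_{\mathrm{tr}}\le\|\rho_{L,w,k'}-\rho_{L,2w,k'}\|_{\mathrm{tr}}\le\|\rho_{L,w,k}-\rho_{L,2w,k}\|_{\mathrm{tr}}\le\frac{1}{10}$.

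Finally I would bound the distance between the real world and the reference world using the BBBV hybrid lemma. Write $|\phi^t\rangle$ for the state of $Q$ after its first $t$ queries (in the real world), and let $\Pi_S$ project onto the subspace where the query index lies in $S$ and the query bit is $1$ --- exactly where $\mathcal{O}_S$ and the null oracle disagree. The hybrid lemma gives that the expected (over the input) trace distance between the two worlds is at most $2\sum_{t<T}\mathbb{E}\bigl[\|\Pi_S|\phi^t\rangle\|\bigr]\le 2T\bigl(\max_t\mathbb{E}\bigl[\|\Pi_S|\phi^t\rangle\|^2\bigr]\bigr)^{1/2}$. The crux --- and the step I expect to be the main obstacle --- is showing $\mathbb{E}\bigl[\|\Pi_S|\phi^t\rangle\|^2\bigr]=O(|S|/N)=O(L/N)$. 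Were $|\phi^t\rangle$ independent of $S$ this would be immediate, since a uniformly random position lies in $S$ with probability $|S|/N$; the difficulty is that $|\phi^t\rangle$ is built from $t$ oracle answers and $k'$ copies of $|S\rangle$, which together pin down some elements of $S$. The resolution is that querying an index already known to lie in $S$ merely applies a $(-1)$ phase that $Q$ could apply on its own, so without loss of generality $Q$ never does this (formally, replace each query by a coherent ``look up whether the index is a known element of $S$; if so apply the phase locally, else query''); $Q$'s remaining query weight then lies on elements of $S$ whose location is, conditioned on $Q$'s view, uniform among $\Theta(N)$ positions, contributing $O(|S|/N)$ in expectation per step. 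Substituting, the two worlds differ by $O(T\sqrt{L/N})$, which is $o(1)$ whenever $T=o(\sqrt{N/L})$; hence $Q$'s real-world distinguishing bias is at most $\frac{1}{10}+o(1)<\frac{1}{3}$, contradicting success probability at least $\frac{2}{3}$. Therefore any $Q$ solving $\mathsf{ApxCount}_{N,w}$ with success probability at least $2/3$ must make $\Omega(\sqrt{N/L})$ membership queries or use $\Omega(k)$ copies of $|S\rangle$, as claimed.
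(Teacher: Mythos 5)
Your overall architecture (hybrid argument to kill the oracle, then the trace-distance hypothesis to kill the samples) matches the paper's, and your reduction of the reference world to $\left\Vert\rho_{L,w,k}-\rho_{L,2w,k}\right\Vert_{\mathrm{tr}}$ is fine. But there is a genuine gap in the step you yourself flag as the crux, and the patch you propose does not close it. You replace $\mathcal{O}_{S}$ by the \emph{identity}, so the two oracles disagree on all of $S$ (with query bit $1$), and you need $\E\bigl[\|\Pi_{S}|\phi^{t}\>\|^{2}\bigr]=O(L/N)$. This is false in general: the algorithm holds coherent copies of $|S\>$ and can feed one directly into the query register, placing query mass $1$ on the disagreement region (e.g., query on $\tfrac{1}{\sqrt{2}}(|S\>|1\>|0\>_{c}+|\mathrm{junk}\>|0\>|1\>_{c})$; the relative phase flips under $\mathcal{O}_{S}$ but not under the identity, so the real and null worlds are distinguishable after a single query). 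Your ``WLOG never query a known element of $S$'' simulation only handles elements learned \emph{classically} (by measuring copies); it has no meaning for coherent uses of $|S\>$, where there is no classical list to look up. So the trace distance between your real and reference worlds is not $O(T\sqrt{L/N})$, and the reduction breaks.

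The paper avoids this by choosing a different hybrid: it replaces $\mathcal{O}_{S}$ not by the identity but by $\mathcal{O}_{U}$, where $U\supseteq S$ is a uniformly random superset of size $L$ (equivalently, $U$ is your $R$). The two oracles then \emph{agree} on all of $S$ and on all of $[N]\setminus U$, and disagree only on $U\setminus S$ --- a set of $L-|S|=\Theta(L)$ elements uniformly distributed in $[N]\setminus S$ and, crucially, independent of $S$ and hence of the copies of $|S\>$. No matter what the algorithm does with its samples, it has no information about where $U\setminus S$ sits, so the query mass on the disagreement region really is $O(L/N)$ per query and BBBV gives the $\Omega(\sqrt{N/L})$ bound. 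The second half then fixes $U$ first and draws $S\subseteq U$, which makes $\mathcal{O}_{U}$ input-independent and reduces the remaining task to distinguishing $\rho_{L,w,k}$ from $\rho_{L,2w,k}$, exactly as in your reference-world step. If you swap your null oracle for $\mathcal{O}_{U}$, the rest of your write-up goes through.
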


\begin{proof}
Choose a subset $S\subseteq\left[  N\right]  $ uniformly at random, subject to
$\left\vert S\right\vert =w$\ or $\left\vert S\right\vert =2w$, and consider
$S$ to be fixed. \ Then suppose we choose $U\subseteq\left[  N\right]
$\ uniformly at random, subject to both $\left\vert U\right\vert =L$\ and
$S\subseteq U$. \ Consider the hybrid in which $Q$ is still given $R$\ copies
of the state $\left\vert S\right\rangle $, but now gets oracle access to
$\mathcal{O}_{U}$\ rather than $\mathcal{O}_{S}$. \ Then so long as $Q$ makes
$o\left(  \sqrt{\frac{N}{L}}\right)  $\ queries to its oracle, we claim that
$Q$ cannot distinguish this hybrid from the \textquotedblleft
true\textquotedblright\ situation (i.e., the one where $Q$\ queries
$\mathcal{O}_{S}$) with $\Omega\left(  1\right)  $\ bias. \ This claim follows
almost immediately from the BBBV Theorem \cite{bbbv}. \ In effect, $Q$ is
searching the set $\left[  N\right]  \setminus S$ for any elements of
$U\setminus S$\ (the \textquotedblleft marked items,\textquotedblright\ in
this context), of which there are $L-\left\vert S\right\vert $ scattered
uniformly at random. \ In such a case, we know that $\Omega\left(  \sqrt
{\frac{N-\left\vert S\right\vert }{L-\left\vert S\right\vert }}\right)
=\Omega\left(  \sqrt{\frac{N}{L}}\right)  $\ quantum queries are needed to
detect the marked items with constant bias.

Next suppose we first choose $U\subseteq\left[  N\right]  $ uniformly at
random, subject to $\left\vert U\right\vert =L$, and consider $U$ to be fixed.
\ We then choose $S\subseteq U$\ uniformly at random, subject to $\left\vert
S\right\vert =w$\ or $\left\vert S\right\vert =2w$. \ Note that this produces
a distribution over $\left(  S,U\right)  $\ pairs identical to the
distribution that we had above. \ In this case, however, since $U$ is fixed,
queries to $\mathcal{O}_{U}$\ are no longer relevant. \ The only way to decide
whether\ $\left\vert S\right\vert =w$\ or $\left\vert S\right\vert =2w$\ is by
using our copies of $\left\vert S\right\rangle $---of which, by assumption, we
need $\Omega\left(  k\right)  $\ to succeed with constant bias, even after
having fixed $U$.
\end{proof}

One might think that \Cref{hybridthm} would lead to immediate improvements
to our lower bound for the queries and samples model. \ In practice,
however, the best lower bounds that we currently have, even purely on the
number of copies of $\left\vert S\right\rangle $, come from the Laurent
polynomial method (\Cref{thm:main})! \ Having said that, we are optimistic
that one could obtain a lower bound that beats \Cref{thm:main}\ at least
when $w$\ is small, by combining \Cref{hybridthm} with a brute-force
computation of trace distance.

%=======================================================================
\subsection{Approximate counting to multiplicative factor
\texorpdfstring{$1+\eps$}{1+eps}}

Throughout, we considered the task of approximating $\left\vert S\right\vert
$ to within a multiplicative factor of $2$. \ But suppose our task was to
distinguish the case $\left\vert S\right\vert \leq w$\ from the case $%
\left\vert S\right\vert \geq \left( 1+\varepsilon \right) w$; then what is
the optimal dependence on $\varepsilon $?

In the model with quantum membership queries only, the algorithm 
of Brassard et al.~\cite[Theorem 15]{BHMT02}
makes $O\Bigl(\frac{1}{\varepsilon }\sqrt{\frac{N}{w}}\Bigr)$\ queries, which is optimal~\cite{nayak-wu}. 
\ The algorithm uses amplitude amplification, the basic primitive of Grover's search algorithm
\cite{grover}. \ The original algorithm of Brassard et al.\ also used
quantum phase estimation, in effect \textit{combining} Grover's algorithm
with Shor's period-finding algorithm. \ However, one can remove the phase estimation, and adapt Grover search with an unknown  number of marked items to get an approximate count of the number of marked  items~\cite{aaronson-rall}.

One can also show without too much difficulty that in the 
queries+QSamples model, the problem can be solved with 
\begin{equation}
O\left( \min \left\{ \frac{\sqrt{w}}{\varepsilon^2 },\frac{1}{\varepsilon}\sqrt{\frac{N}{w}}\right\} \right)
\end{equation}
queries and copies of $\left\vert S\right\rangle $. \ 
As observed after \Cref{thm:alg}, the problem can also be solved with
\begin{equation}
O\left( \min \left\{ \frac{w^{1/3}}{\varepsilon ^{2/3}},\frac{1}{\varepsilon
}\sqrt{\frac{N}{w}}\right\} \right)
\end{equation}%
samples and reflections. \ On the lower bound side, what generalizations of %
\Cref{thm:main}\ can we prove that incorporate $\varepsilon $? \ We note
that the explosion argument doesn't automatically\ generalize; one would
need to modify something to continue getting growth in the polynomials $u$\
and $v$\ after the first iteration. \ The lower bound using dual polynomials
should generalize, but back-of-the-envelope calculations show that the lower
bound does not match the upper bound.

%=======================================================================
\subsection{Other questions}

\para{Non-oracular example of our result.} \
Is there any interesting real-world example of a class of sets for which
QSampling and membership testing are both efficient, but approximate
counting is not? \ (I.e., is there an interesting non-black-box setting that
appears to exhibit the behavior that this paper showed can occur in the
black-box setting?)

\para{The Laurent polynomial connection.} \ At
a deeper level, is there is any meaningful connection between our two uses
of Laurent polynomials? {\ And what other applications can be found for the
Laurent polynomial method?}

\section{Followup work}

Since this work was completed, Belovs and Rosmanis \cite{belovs} obtained essentially tight lower bounds on the complexity of approximate counting with access to membership queries, QSamples, reflections, and a unitary transformation that prepares the QSampling state, for all possible tradeoffs between these different resources. Additionally, they resolve the $\eps$-dependence of approximate counting to multiplicative factor $1 + \eps$. The techniques involved are quite different from ours: Belovs and Rosmanis use a generalized version of the quantum adversary bound that allows for multiple oracles, combined with tools from the representation theory of the symmetric group.

%=======================================================================
\section*{Acknowledgments}

We are grateful to many people: Paul Burchard, for suggesting the problem of
approximate counting with queries and QSamples;\ MathOverflow user
\textquotedblleft fedja\textquotedblright\ for letting us include 
\Cref{fedjatight} and \Cref{fedjalem};\ Ashwin Nayak, for extremely helpful
discussions, and for suggesting the transformation of linear programs used
in our extension of the method of dual polynomials to the Laurent polynomial
setting; Thomas Watson, for suggesting the intersection approach to proving
an $\mathsf{SBP}$ vs. $\mathsf{QMA}$ oracle separation; and Patrick Rall,
for helpful feedback on writing. JT would particularly like to thank Ashwin 
Nayak for his warm hospitality and deeply informative discussions during a visit to Waterloo.

%=======================================================================
\appendix

\section{Establishing Equation \protect\ref*{provedintheappendix}}

\label{app}

%=======================================================================
\subsection{A clean calculation establishing a loose version of equation
\protect\ref*{provedintheappendix}}

For clarity of exposition, we begin by presenting a relatively clean
calculation that establishes a slightly loose version of 
\Cref{provedintheappendix}. Using just this looser bound, we would be able to
establish that \Cref{provedintheappendix} holds (with the constant $1/2$
replaced by a slightly smaller constant) so long as we set $d_1$ to be $%
\Theta\left(w^{1/3}/\log w \right)$. A slightly more involved calculation
(cf. \Cref{tightsec}) is required to establish 
\Cref{provedintheappendix} for our desired value of $d_1= \lfloor (w/c)^{1/3}
\rfloor$.

Expression \eqref{aboveexp} equals
\begin{align}
\left(\frac{w}{ci^2}\right)^{d_1-1} \cdot \frac{i^2}{\left((d_1)!\right)^2}
\cdot \prod_{j=1, j \neq i}^{d_1}\left( \left|j-i| \cdot |j+i\right| - \frac{%
c i^2 j^2}{w}\right)  \notag \\
= \left(\frac{w}{ci^2}\right)^{d_1-1} \cdot \frac{i^2}{\left((d_1)!\right)^2}
\cdot \prod_{j=1, j \neq i}^{d_1}\left( \left|j-i| \cdot |j+i\right| \right)
\cdot \left(1-\frac{c i^2 j^2}{w \cdot |j-i| |j+i|}\right)  \notag \\
= \left(\frac{w}{ci^2}\right)^{d_1-1} \cdot \frac{(d_1+i)! (d_1-i)!}{%
2\left((d_1)!\right)^2} \cdot \prod_{j=1, j \neq i}^{d_1} \left(1-\frac{c
i^2 j^2}{w \cdot |j-i| |j+i|}\right)  \label{looseversion} \\
\geq \left(\frac{w}{ci^2}\right)^{d_1-1} \cdot \frac{1}{2} \cdot \prod_{j=1,
j \neq i}^{d_1} \left(1-\frac{c i^2 j^2}{w \cdot |j-i| |j+i|}\right)  \notag
\\
\geq \left(\frac{w}{ci^2}\right)^{d_1-1} \cdot \frac{1}{2} \cdot
\left(1-\sum_{j=1, j \neq i }^{d_1}\frac{c i^2 j^2}{w \cdot |j-i| |j+i|}%
\right)  \notag \\
\geq \left(\frac{w}{ci^2}\right)^{d_1-1} \cdot \frac{1}{2} \cdot \left(1-
\frac{c i^2}{w}\sum_{j=1, j \neq i }^{d_1}\frac{j^2}{ |j-i| |j+i|}\right).
\label{loosemiddle}
\end{align}

Let us consider the expression $\sum_{j=1, j \neq i }^{d_1}\frac{j^2}{ |j-i|
|j+i|}$. If $i^2 \not \in [j^2/2, 3j^2/2]$, then the $j$'th term in this sum
is at most $2$. Hence, letting $H_i$ denote the $i$th Harmonic number and
using the fact that $H_i \leq \ln(i+1)$,
\begin{align}
& \sum_{j=1, j \neq i }^{d_1}\frac{j^2}{ |j-i| |j+i|}  \notag \\
& \leq 2 \cdot d_1 + \sum_{j = \lfloor \sqrt{2/3} i \rfloor}^{\lfloor \sqrt{2%
} i \rfloor} \frac{j^2}{|j-i| |j+i|}  \notag \\
& \leq 2 \cdot d_1 + \sum_{j = \lfloor \sqrt{2/3} \cdot i \rfloor}^{\lceil
\sqrt{2} \cdot i \rceil} \frac{j}{|j-i|}  \notag \\
& \leq 2 d_1 + \sqrt{2} \cdot i \cdot \sum_{j=1}^{(\sqrt{2}-1) \cdot i} 2/j
\notag \\
& \leq 2 d_1 + 2 \sqrt{2} \cdot i \cdot H_{i} \leq 2d_1 + 2 \sqrt{2} i
\ln(i+1).  \label{eq:ieq}
\end{align}

We conclude that if $d_1$ were set to a value less than $w^{1/3}/(100 \cdot
c^2 \cdot \ln(w))$ (rather than to $\lfloor \left(w/c\right)^{1/3}\rfloor$),
then Expression \eqref{loosemiddle} is at least
\begin{equation}
\left(\frac{w}{ci^2}\right)^{d_1-1} \cdot \frac{1-1/c}{2}.
\end{equation}

%=======================================================================
\subsection{The tight bound}

\label{tightsec} To obtain the tight bound, we need a tighter sequence of
inequalities following Expression \eqref{looseversion}. Specifically,
Expression \eqref{looseversion} is bounded below by:

\begin{align}
\geq \left(\frac{w}{ci^2}\right)^{d_1-1} \cdot \frac{1}{2} \left(1+ \frac{i}{%
2d_1}\right)^i \cdot \prod_{j=1, j \neq i}^{d_1} \left(1-\frac{c i^2 j^2}{w
\cdot |j-i| |j+i|}\right)  \notag \\
\geq \left(\frac{w}{ci^2}\right)^{d_1-1} \cdot \frac{1}{2} \cdot
e^{i^2/(2d_1)} \cdot \prod_{j=1, j \neq i}^{d_1} \left(1-\frac{c i^2 j^2}{w
\cdot |j-i| |j+i|}\right)  \notag \\
\geq \left(\frac{w}{ci^2}\right)^{d_1-1} \cdot \frac{1}{2} \cdot
e^{i^2/(2d_1)} \cdot \prod_{j=1, j \neq i}^{d_1} \left(1-\frac{c i^2 j^2}{w
\cdot |j-i| |j+i|}\right)  \label{tightbegin}
\end{align}

The rough idea of how to proceed is as follows. \Cref{eq:ieq}
implies that for $i \ll w^{1/3}/\ln w$, the factor
\begin{equation*}
F_1 := \prod_{j=1, j \neq i}^{d_1} \left(1-\frac{c i^2 j^2}{w \cdot |j-i|
|j+i|}\right)
\end{equation*}
is at some a positive constant, and hence Expression \eqref{tightbegin} is
bounded below by the desired quantity. If $i \gtrsim w^{1/3}/\ln w$, then
\Cref{eq:ieq} does not yield a good bound on this factor, leaving
open the possibility that this factor is subconstant. But in this case, the
factor $F_2:= e^{i^2/(2d_1)} \geq e^{\tilde{\Omega}(d_1)}$, and the
largeness of $F_2$ dominates the smallness of $F_1$.

In more detail, let $x_{i, j} = \frac{ci^2j^2}{w \cdot |i-j| j+i||}$. Then
for all $i \neq j$ such that $i, j \leq d_1$,
\begin{equation}  \label{xijbound}
x_{i,j} \leq \frac{c \cdot d_1^2 (d_1-1)^2}{(2d_1-1) \cdot w} \leq \frac{c
\cdot d_1^3}{2w} \leq 1/2,
\end{equation}
where in the final inequality we used the fact that $d_1 \leq (w/c)^{1/3}$.

Using the fact that $1-x \geq e^{-x-x^2}$ for all $x \in [0, 1/2]$, we can
write
\begin{equation*}
F_1 \geq \prod_{j=1, j\neq i}^{d_1} e^{-x_{i,j} - x_{i, j}^2}.
\end{equation*}
Hence,
\begin{align*}
F_1 \cdot F_2 \geq \exp\left(i^2/(2d_1) - \sum_{j=1, j\neq i}^{d_1} -x_{i,j}
- x_{i, j}^2\right).
\end{align*}

From Equations \eqref{eq:ieq} and \eqref{xijbound}, we know that
\begin{align*}
\sum_{j=1, j\neq i}^{d_1} x_{i,j} + x_{i, j}^2 \leq \frac{c i^2}{w} \cdot
\left(3d_1 + 3 \sqrt{2} i \ln(i+1)\right) \leq \frac{ci^2}{w} \cdot \left(4
d_1 \ln(d_1)\right).
\end{align*}
Hence,
\begin{align*}
F_1 \cdot F_2 & \geq \exp\left(i^2/(2d_1) - \frac{ci^2}{w} \cdot 4 c
\ln(d_1)\right) \\
&= \exp\left(i^2 \left(\frac{1}{2d_1} - \frac{4c^2 \ln(d_1)}{w}\right)\right)
\\
& \geq \exp\left(i^2 \cdot \frac{1}{2d_1} \cdot (1-o(1))\right) \\
& \geq 1.
\end{align*}

\Cref{provedintheappendix} follows.

%=======================================================================

\phantomsection\addcontentsline{toc}{section}{References}
%Adds References to TOC
\bibliographystyle{alphaurl}
\bibliography{apxcount}

\end{document}